\DeclareMathAlphabet{\mathbfit}{OML}{cmm}{b}{it}
\newcommand{\myparagraph}[1]{\smallskip\noindent\textbf{#1}}
\newcommand{\pent}{\ensuremath \mathcal{P}}
\newcommand{\emb}{\ensuremath \mathcal{R}}
\newcommand{\textemb}{optimal $2$-planar rotation scheme\xspace}
\newcommand{\textembs}{optimal $2$-planar rotation schemes\xspace}
\newcommand{\partialemb}[1][]{
\ifthenelse{\equal{#1}{}}
{\ensuremath \mathcal{R}_{\mathcal{T}}}
{\ensuremath \mathcal{R}_{\mathcal{T}#1}}%
}
\newcommand{\triplet}{triplet\xspace}
\newcommand{\triplets}{{\triplet}s\xspace}
\newcommand{\facialKFive}{facial $5$-clique\xspace}
\newcommand{\facialKFives}{facial $5$-cliques\xspace}
\newcommand{\crossingConfiguration}{\textsf{CROSS-BAT}\xspace}
\newcommand{\bad}{bad\xspace}
\newcommand{\notbad}{not bad\xspace}
\newcommand{\compliant}{$\emb_p$-compliant\xspace}
\newcommand{\Oh}{\ensuremath \mathcal{O}}
\newtheorem{observation}{Observation}
\newcommand{\crossing}{clearly crossing\xspace}
\theoremstyle{definition}
\newtheorem{details}{Details}
\newcommand{\myhfil}{\hfil}
\numberwithin{lemma}{section}
\numberwithin{proposition}{lemma}
\numberwithin{property}{section}
\numberwithin{observation}{section}
\numberwithin{corollary}{section} 
\newcommand{\ver}{arxiv}
\newcommand{\arxapp}[2]{\ifthenelse{\equal{\ver}{conf}}{#2}{#1}}
\title{Recognizing and Embedding\texorpdfstring{\\}{}Simple Optimal \texorpdfstring{$2$}{2}-Planar Graphs\thanks{The work of C.\ Raftopoulou is funded by the National Technical University of Athens research program
$\mathrm{\Pi}$EBE 2020. H.\ F\"orster and M.\ Kaufmann are supported by DFG grant KA812-18/2.}}
\author{
Henry Förster\inst{1}\orcidID{0000-0002-1441-4189} \and Michael Kaufmann\inst{1}\orcidID{0000-0001-9186-3538} \and Chrysanthi N. Raftopoulou\inst{2}\orcidID{0000-0001-6457-516X}}
\institute{WSI, T\"ubingen, Germany\\
\email{\{
foersth,mk\}@informatik.uni-tuebingen.de}\and
NTUA, Athens, Greece\\
\email{crisraft@mail.ntua.gr}}
\authorrunning{F\"orster, Kaufmann and Raftopoulou}
\begin{document}

\maketitle

\begin{abstract}
In the area of beyond-planar graphs, i.e.\ graphs that can be drawn with some local restrictions on the edge crossings, the recognition problem is prominent next to the density question for the different graph classes. For 1-planar graphs, the recognition problem has been settled, namely  it is NP-complete for the general case, while optimal 1-planar graphs, i.e.  those with maximum density, can be recognized in linear time. For 2-planar graphs, the picture is less complete.
As   expected,  the recognition problem has been found to be NP-complete in general. In this paper, we consider the recognition of simple optimal 2-planar graphs. We exploit a combinatorial characterization of such graphs and present a linear time algorithm for recognition and embedding.
\keywords{$2$-planar graphs \and recognition algorithms \and beyond-planarity}
\end{abstract}

\section{Introduction}\label{sec:intro}

In the field of graph drawing beyond planarity, researchers study graphs that admit drawings with crossings only in restrictive local configurations~\cite{DBLP:journals/tcs/BinucciGDMPST15,DBLP:journals/algorithmica/CheongHKK15,DBLP:journals/tcs/DidimoEL11,DBLP:journals/cj/GiacomoDLMW15,KU14}. 
Mostly studied are  \emph{1-planar graphs}, which admit drawings where each~edge~is crossed at most once. Classic results on these graphs concern mostly the density~\cite{BSW84,Ringel65}, recently, other problems such as generation~\cite{DBLP:journals/siamdm/Suzuki10}, characterization~\cite{DBLP:conf/cocoon/HongELP12}, recognition~\cite{AuerBBGHNR16,DBLP:journals/corr/Brandenburg16a,Brandenburg18a,Brandenburg18,BrandenburgDEKL16}, coloring~\cite{Borodin95} and page number~\cite{BB0R15,DBLP:journals/jacm/DujmovicJMMUW20} have been studied.
The most natural extension of $1$-planar graphs, is the family of \emph{$k$-planar graphs}, that is graphs that admit drawings where each edge is crossed at most $k$ times. While there are plenty of results on $1$-planarity,  $k$-planarity turns out to be more challenging. 
The first outstanding results on the density of simple $k$-planar graphs came with the work of Pach and T\'{o}th~\cite{PachT97} for $k\leq 4$. 
In particular, they proved that simple $2$-planar graphs have at most $5n -10$ edges. 
Later on, improved upper bounds of  $5.5n - 11$~\cite{PachRTT06} and $6n-12$~\cite{DBLP:journals/corr/Ackerman15} have been shown for simple  $3$- and $4$-planar graphs, respectively. 
The bounds for $2$- and $3$-planar graphs are tight even for non-simple graph without homotopic parallel edges and self-loops~\cite{DBLP:conf/compgeom/Bekos0R17}. 

While  testing for planarity can be done in linear time, and several classes of beyond-planar graphs have a linear number of edges,
the corresponding recognition problems are NP-hard, which impedes the practical application of related results.  
In particular, it was recently proven that the recognition of $k$-planar graphs is NP-hard for every value of $k\geq 1$~ \cite{urschel20}. 
Still, optimal 1-planar graphs, which form a prominent subclass of $1$-planar graphs, can be recognized in linear time~\cite{Brandenburg18}, while recognition of  maximal $1$-planar graphs and $4$-maps takes cubic
time~\cite{DBLP:journals/tcs/EadesHKLSS13,DBLP:journals/algorithmica/Brandenburg19}. 
Previously, efficient recognition algorithms have been developed for more restricted subclasses of $1$-planar graphs, see e.g.~\cite{AuerBBGHNR16,Brandenburg18a,BrandenburgDEKL16,DBLP:journals/algorithmica/HongEKLSS15}.

For $k\geq2$, there are significantly fewer results. 
A notable result is the linear time recognition algorithm for full outer-2-planar graphs~\cite{HongN19}, while   for $k\geq3$ the literature still lacks similar results. 

\myparagraph{Our contribution.} In this paper, we extend the study on the recognition of subclasses of $k$-planar graphs. Namely, we prove that simple optimal $2$-planar graphs, that is simple $2$-planar graphs with $5n-10$ edges, can be efficiently recognized in $\Oh(n)$ time. In our strategy we first remove a set of edges that must be involved in crossings in any optimal $2$-planar drawing of the given graph. %
We unwrap specific properties of the remaining graph that allow us to find a unique drawing and  insert the deleted crossing edges. 

\myparagraph{Paper organization.} In Section~\ref{sec:preliminaries} we give~preliminary notations  and properties of optimal $2$-planar graphs. Section~\ref{sec:crossBat} presents \crossingConfiguration, an important $2$-planar substructure,  and examines its structural properties. Section~\ref{sec:triplets} focuses on the 
insertion of the deleted crossing edges.
Our recognition algorithm in Section~\ref{sec:algo} 
combines all previous  results. We conclude in Section~\ref{sec:conclusions} with further research directions.
Due to space constraints, several proofs and details, indicated by $(\ast)$, \arxapp{are deferred to an appendix}{are omitted and can be found in the full version~\cite{arxivVersion}}.

\section{Preliminaries}\label{sec:preliminaries}
The degree of a vertex $v \in V$ is denoted by $d(v)$ and its set of neighbors by $N(v)$. For  $u,v\in V$ we  write $N(u,v)$ to denote the set of common neighbors of $u$ and $v$, that is $N(u,v) =N(u) \cap N(v)$. 
A \emph{drawing} of a graph maps the vertices to points in the plane and the edges to simple open Jordan curves connecting the points corresponding to their endpoints.
A drawing is \emph{$k$-planar} if each curve representing an edge is crossed at most $k$ times by the other  edges together.  
A graph $G$ is $k$-planar if it admits a $k$-planar drawing.

The topology of a planar drawing  is described by a  \emph{rotation scheme}. For $k$-planarity, a \emph{$k$-planar rotation scheme} specifies the counter-clockwise circular order of edges around each vertex and, additionaly,  a sequence of at most $k$ crossings along each edge. Hence, a $k$-planar rotation scheme defines an equivalence class of $k$-planar drawings. 

For a graph $G$, let $H$ a subgraph of $G$ and let $\emb_H$ be a ($k$-planar) rotation scheme of $H$. We say that $\emb_H$ is a \emph{partial ($k$-planar) rotation scheme} of $G$. We further say that $\emb_H$ \emph{can be extended} to a ($k$-planar) rotation scheme $\emb$ of $G$ if the restriction of $\emb$ on $H$ is $\emb_H$. Conversely, we say that $\emb$ \emph{extends} $\emb_H$. 
Let $C$ be a cycle of  graph $G$,  whose edges define a simple closed curve in a drawing $\Gamma$ of $G$. We refer to the region to the left of this curve in a counter-clockwise walk along $C$ in $\Gamma$ as the \emph{interior of the region} bounded~by~cycle~$C$. Note that given drawing $\Gamma$, the interior of a region may be unbounded.

A $2$-planar graph that reaches the upper bound of $5n-10$ edges is called \emph{optimal $2$-planar}. Note that the value of $5n-10$ edges can be reached only when $n\equiv 2\mod 3$. Each induced subgraph $H$ of a $2$-planar graph $G$ is also 2-planar and thus has at most $5|V(H)|-10$ edges.  This implies that if $G$ is 2-planar then it is also \emph{9-degenerate}, i.e. there is a vertex ordering $v_1,\ldots,v_n$  such that, for each $i=1,\ldots,n$, vertex $v_i$ of the induced subgraph $G_i=G[v_1,\ldots,v_i]$ has degree at most 9 in $G_i$.
We call such a vertex ordering  a \emph{9-degenerate} sequence of $G$.
 
In this work, we consider only \emph{simple} graphs which we assume from now on without explicitly stating it. The structure of simple optimal $2$-planar graphs is well established.  Namely, an \textemb $\emb$ of a simple optimal $2$-planar graph specifies a crossing-free $3$-connected\footnote{The 3-connectivity derives from the fact that the graph is simple.} spanning pentangulation $\pent$ and a set of crossing edges~\cite{DBLP:journals/dm/BekosGDLMR19,DBLP:conf/compgeom/Bekos0R17}. The edges around each vertex are cyclically ordered so that every crossing-free edge (belonging to $\pent$) is followed by  two crossing edges, which are succeeded by one crossing-free edge, and so on. 
Hence the degree of every vertex is a multiple of three. Moreover, the   minimum degree of a simple optimal $2$-planar graph is $9$ since  $\pent$ is $3$-connected, i.e., every vertex is incident to at least $3$ crossing-free edges. 
Now, each pentangular face $f$ of $\pent$ contains exactly five crossing edges of $\emb$ which form a $5$-clique in $G$ with the boundary edges of $f$; 
we call such a $5$-clique a \emph{facial $5$-clique}.  Note that not all $5$-cliques of $G$ are facial in $\emb$. This is a major difficulty for recognizing optimal $2$-planar graphs. We  denote a facial $5$-clique of $\emb$ with outer cycle $(c_0,\ldots,c_4)$  as $\langle c_0,\ldots,c_4 \rangle$ and say that the crossing edges $(c_i,c_j)$ with $(i-j)\mod 5 \in \{2,3\}$ \emph{belong to} $\langle c_0,\ldots,c_4 \rangle$. As $\pent$ is $3$-connected, it suffices to detect all \facialKFives to find the optimal $2$-planar rotation scheme of a given input graph, if one exists. 

In the following, we state some preliminary observations and properties that we will use in the remainder of the paper. In particular, Properties~\ref{lem:crossingEdgesBelongToSameK5}--\ref{lem:crossingEdgesAreNotShared} immediately arise from the combinatorial characterization of simple optimal 2-planar graphs in \cite{DBLP:conf/compgeom/Bekos0R17} also summarized above. 
So, let $G$ be an optimal $2$-planar graph with an \textemb $\emb$.

\begin{property}\label{lem:crossingEdgesBelongToSameK5}
A pair of crossing edges in $\emb$ belongs to the same \facialKFive.
\end{property}

\begin{property}\label{lem:crossingTwoEdgesBelongToSameK5}
If in $\emb$, edge  $e$ is crossed by two edges $e_1$ and $e_2$, then $e_1$ and $e_2$ share a common endpoint. Also, $e$, $e_1$ and $e_2$  belong to the same \facialKFive.
\end{property}

\begin{property}\label{obs:twoK5sShareOnly2Vertices}
Any two \facialKFives in $\emb$ share at most two vertices.
\end{property}

\begin{property}\label{lem:crossingEdgesAreNotShared}
A crossing edge of $\emb$ belongs to exactly one \facialKFive and a crossing-free edge of $\emb$ belongs to exactly two \facialKFives.
\end{property}

Combining the above properties we can further prove the following:

\begin{restatable}[$\mathbf{\ast}$]{corollary}{crossedEdgesShareAKFive}
\label{lem:crossedEdgesShareAK5}
For two edges $(u,v)$ and $(u,w)$  of an optimal $2$-planar graph $G$ let $S=N(u,v)\cup N(u,w)\cup\{u,v,w\}$. If $|S| < 8$, then $(u,v)$ and 
$(u,w)$ belong to the same \facialKFive in any \textemb of $G$.
\end{restatable}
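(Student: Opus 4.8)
The plan is to prove the contrapositive: I will show that if $(u,v)$ and $(u,w)$ do \emph{not} lie in a common \facialKFive, then $|S|\geq 8$. Fix an arbitrary \textemb $\emb$ of $G$. The first step is to note that every edge of $G$ participates in at least one \facialKFive of $\emb$: by Property~\ref{lem:crossingEdgesAreNotShared} a crossing edge belongs to exactly one \facialKFive and a crossing-free edge to exactly two. Hence I may select a \facialKFive $K$ that contains $(u,v)$ and a \facialKFive $K'$ that contains $(u,w)$. Under the assumption that the two edges share no common \facialKFive, I get $K\neq K'$, since a single \facialKFive containing both edges would witness exactly the forbidden situation.

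Next I would observe that all five vertices of $K$ and all five vertices of $K'$ lie in $S$. Writing $K=\{u,v,a,b,c\}$, the fact that $K$ is a $5$-clique means each of $a,b,c$ is adjacent to both $u$ and $v$, so $a,b,c\in N(u,v)\subseteq S$; together with $u,v\in\{u,v,w\}\subseteq S$ this gives $K\subseteq S$. The symmetric argument applied to $K'=\{u,w,a',b',c'\}$ with the common neighborhood $N(u,w)$ yields $K'\subseteq S$. Consequently $K\cup K'\subseteq S$.

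Finally, I would combine this with Property~\ref{obs:twoK5sShareOnly2Vertices}, which guarantees that two distinct \facialKFives share at most two vertices, so $|K\cap K'|\leq 2$. Inclusion–exclusion then gives
\[
|S|\geq |K\cup K'| = |K| + |K'| - |K\cap K'| \geq 5 + 5 - 2 = 8,
\]
contradicting $|S|<8$. Thus $(u,v)$ and $(u,w)$ must belong to a common \facialKFive in $\emb$. Since the set $S$ is defined purely from the adjacencies of $G$ and is independent of the chosen embedding, the same argument applies verbatim to every \textemb, which establishes the ``in any \textemb'' clause.

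I do not expect a genuine obstacle here, as the statement follows almost directly from the cited properties. The one step that needs care is the claim $K\cup K'\subseteq S$: it relies on reading off, from the \emph{clique} structure of a \facialKFive, that the three non-endpoint vertices of $K$ (respectively $K'$) are precisely common neighbors of $u$ and $v$ (respectively $u$ and $w$), and hence land in the designated summands $N(u,v)$ and $N(u,w)$ of $S$ rather than somewhere uncontrolled. Once this membership bookkeeping is pinned down, the counting bound via Property~\ref{obs:twoK5sShareOnly2Vertices} closes the argument.
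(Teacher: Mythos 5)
Your proposal is correct and follows essentially the same route as the paper: both select facial $5$-cliques $K\ni(u,v)$ and $K'\ni(u,w)$, show $K\cup K'\subseteq S$, and invoke Property~\ref{obs:twoK5sShareOnly2Vertices} to get $|S|\geq 5+5-2=8$, the only differences being your contrapositive framing versus the paper's direct contradiction and your more careful bookkeeping (the paper loosely writes $C\subseteq N(u,v)$, while you correctly split off $u,v$ into the $\{u,v,w\}$ summand).
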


\begin{corollary}\label{col:edgesIncidentToCrossedEdgeDoNotCross}
Let $G$ be an optimal $2$-planar graph with an \textemb $\emb$. Let $(u,v)$ be a crossing edge inside a facial $5$-clique $C$ in $\emb$. Further, let $u' \not\in C$ be a neighbor of $u$ and $v' \notin C$ be a neighbor of $v$. Then, edges $(u,u')$ and $(v,v')$ do not cross in $\emb$; see Fig.~\ref{fig:crossingEdgeInAK5:1}. 
\end{corollary}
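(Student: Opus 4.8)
The plan is to argue by contradiction. I would assume that the edges $(u,u')$ and $(v,v')$ cross in $\emb$ and derive a conflict with the structural properties of \facialKFives. The starting point is Property~\ref{lem:crossingEdgesBelongToSameK5}: since $(u,u')$ and $(v,v')$ form a crossing pair, they both belong to a common \facialKFive, say $C'$. Because the crossing edges of a \facialKFive are exactly the diagonals of its bounding pentagon, their endpoints are vertices of that clique; hence all four vertices $u,u',v,v'$ lie on $C'$, and in particular both $u$ and $v$ are vertices of $C'$.

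Next I would establish that $C'\neq C$. Since $u'\notin C$ by hypothesis, the edge $(u,u')$ has an endpoint outside $C$, whereas every crossing edge belonging to $C$ is a diagonal of $C$ and therefore has both of its endpoints among the vertices of $C$. Consequently $(u,u')$ is not a crossing edge belonging to $C$, and by Property~\ref{lem:crossingEdgesAreNotShared}, which asserts that a crossing edge belongs to exactly one \facialKFive, the unique clique $C'$ containing $(u,u')$ must be different from $C$.

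The crux of the argument is then to turn attention to the edge $(u,v)$ itself. As $C'$ is a $5$-clique and both $u$ and $v$ are vertices of $C'$, the pair $(u,v)$ is an edge of $C'$, so it is either a boundary edge of the pentagon of $C'$ or one of its diagonals. A boundary edge is crossing-free, but $(u,v)$ is a crossing edge by hypothesis; thus $(u,v)$ must be a diagonal of $C'$, i.e.\ a crossing edge belonging to $C'$. On the other hand, $(u,v)$ is the given crossing edge belonging to $C$, and invoking Property~\ref{lem:crossingEdgesAreNotShared} once more, a crossing edge belongs to exactly one \facialKFive. This forces $C=C'$, contradicting the previous paragraph. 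I expect this final identification of $(u,v)$ as a crossing edge shared by two distinct \facialKFives to be the key step of the proof; notably it relies only on Properties~\ref{lem:crossingEdgesBelongToSameK5} and~\ref{lem:crossingEdgesAreNotShared}, and once it is in place the contradiction is immediate and the corollary follows.
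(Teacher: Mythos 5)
Your proof is correct and matches the paper's intent: the paper states this corollary without an explicit proof, treating it as an immediate consequence of the preceding properties, and your argument — Property~\ref{lem:crossingEdgesBelongToSameK5} places $u,u',v,v'$ in a common facial $5$-clique $C'$, simplicity forces the crossing edge $(u,v)$ to be a diagonal of $C'$, and Property~\ref{lem:crossingEdgesAreNotShared} then forces $C=C'$, contradicting $u'\notin C$ — is exactly that intended derivation. No gaps; the key identification of $(u,v)$ as a crossing edge that would belong to two distinct facial $5$-cliques is sound.
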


\begin{figure}[t]
    \centering
    \includegraphics[page=16]{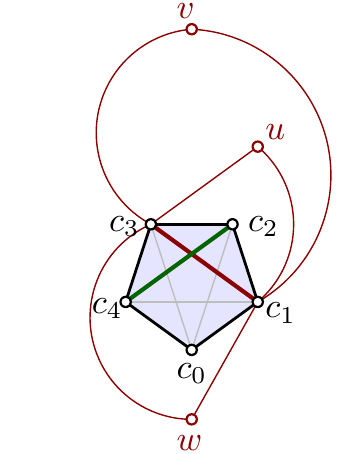}
    \caption{Forbidden configuration excluded by Corollary~\ref{col:edgesIncidentToCrossedEdgeDoNotCross}.}
    \label{fig:crossingEdgeInAK5:1}
\end{figure}

Consider an \textemb $\emb$ of $G$. If edge $(u,v)$ is crossing-free in $\emb$, then by Property~\ref{lem:crossingEdgesAreNotShared} $u$ and $v$ belong to two \facialKFives.
Hence, $|N(u,v)| \geq 6$ holds. Although this condition is not sufficient to conclude that an edge  is crossing-free,  we can use it to  identify some crossing edges:

\begin{corollary}\label{col:identificationProperty}
Let $G$ be an optimal 2-planar graph and $(u,v) \in E(G)$. Edge $(u,v)$ is crossing in any \textemb of $G$ if  $|N(u,v)| < 6$.
\end{corollary}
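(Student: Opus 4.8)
The plan is to prove the contrapositive: I will show that if $(u,v)$ is \emph{crossing-free} in some \textemb $\emb$ of $G$, then $|N(u,v)| \geq 6$, so that $|N(u,v)| < 6$ forces $(u,v)$ to be a crossing edge in every such scheme. This is essentially the formalization of the remark preceding the statement, and the whole argument runs through the facial $5$-cliques incident to $(u,v)$.

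First I would invoke Property~\ref{lem:crossingEdgesAreNotShared}: since $(u,v)$ is crossing-free in $\emb$, it belongs to exactly two \facialKFives, say $K_1$ and $K_2$. In each $K_j$ the edge $(u,v)$ appears as a boundary edge of the pentagonal face, so $K_j$ contains $u$, $v$, and three further vertices. Because a \facialKFive induces a $5$-clique, each of these three further vertices is adjacent to both $u$ and $v$, and therefore lies in $N(u,v)$. Thus $K_1$ and $K_2$ together contribute (up to repetition) six common neighbors of $u$ and $v$.

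The one step that needs care is arguing that these six vertices are genuinely distinct, and here I would use Property~\ref{obs:twoK5sShareOnly2Vertices}: any two \facialKFives share at most two vertices. Since $K_1$ and $K_2$ already share $u$ and $v$, they share no additional vertex, so the three extra vertices of $K_1$ are disjoint from the three extra vertices of $K_2$. Consequently $N(u,v)$ contains at least these $3+3 = 6$ distinct vertices, giving $|N(u,v)| \geq 6$.

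Finally I would assemble the contrapositive: whenever $|N(u,v)| < 6$, the edge $(u,v)$ cannot be crossing-free in any \textemb of $G$, hence it is a crossing edge in every such scheme, as claimed. I do not anticipate a genuine obstacle here, since all the structural work is already carried by the cited properties; the only subtlety is the distinctness of the six neighbors, which Property~\ref{obs:twoK5sShareOnly2Vertices} resolves cleanly.
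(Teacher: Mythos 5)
Your proof is correct and matches the paper's own justification, which is given inline just before the corollary: a crossing-free edge belongs to exactly two \facialKFives by Property~\ref{lem:crossingEdgesAreNotShared}, yielding $|N(u,v)|\geq 6$, and the corollary is the contrapositive. Your explicit use of Property~\ref{obs:twoK5sShareOnly2Vertices} to verify that the six neighbors are distinct is a detail the paper leaves implicit, and you handle it correctly.
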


Our recognition algorithm relies on the identification of crossing edges based on 
Corollary~\ref{col:identificationProperty}. We say that an edge $(u,v)$ is \emph{\crossing} if and only if $|N(u,v)| < 6$. Otherwise  $(u,v)$ is \emph{potentially planar}. Note that in an \textemb, a potentially planar edge is not necessarily crossing-free. However, every crossing-free edge of any \textemb is potentially planar.
We define as $G_p$ the subgraph of $G$ formed by all potentially planar edges. Graph $G_p$ is $3$-connected and spans $G$  as  the corresponding crossing-free pentangulation of each \textemb 
is a subgraph of~$G_p$.

\section{The \crossingConfiguration configuration}
\label{sec:crossBat}
In this section, we study  the 3-connected spanning subgraph $G_p$ 
formed by the potentially planar edges of $G$. We want to compute a rotation scheme of $G_p$ which is extendable to an \textemb of $G$, if it exists. If 
in each \textemb of $G$ subgraph $G_p$ is plane, then~the~rotation scheme of $G_p$ is unique and easy to compute.
Though we cannot assure this~property, we prove  that in any \textemb of $G$, crossings between edges of $G_p$ occur in restricted configurations:
A \crossingConfiguration instance is an~induced subgraph $H$ of $G$ isomorphic to the graph $G_{CB}$ shown in Fig.~\ref{fig:crossingConfigurationStart}, so that \begin{inparaenum}[(i)] 
\item for~edge $(u_H,u'_H)$ of $H$  isomorphic to  $(u,u')$ of $G_{CB}$, it holds that  $V(H)=N(u_H)\cup N(u'_H)$, and 
\item the isomorphism between $H$ and $G_{CB}$ preserves the classification of edges to \crossing or potentially planar.
\end{inparaenum}
   
In particular, \crossingConfiguration has ten vertices, named as in Fig.~\ref{fig:crossingConfigurationStart}. Vertices $u$ and $u'$ have degree $9$ in $G$ and they
have $8$ common neighbors (the remaining vertices of \crossingConfiguration). The edges of \crossingConfiguration form four $5$-cliques that pairwise share two vertices, and are shown as \facialKFives in Fig.~\ref{fig:crossingConfigurationStart}. We call  edge $(u,u')$  the \emph{base edge} of \crossingConfiguration.
No other edges with both endpoints in \crossingConfiguration exist, except possibly for edges $(v,v')$ and  $(w,w')$. 
Fig.~\ref{fig:crossingConfigurationStart} shows the classification of edges of \crossingConfiguration as potentially planar and \crossing. 

Let $\emb$ be an \textemb of $G$, such that two potentially planar edges cross each other. These two edges must belong to the same \facialKFive and, in particular,  also to  an instance of \crossingConfiguration.

\begin{figure}[t]
    \centering
    \myhfil 
    \begin{subfigure}{0.3\textwidth}
    \centering
    \includegraphics[page=3]{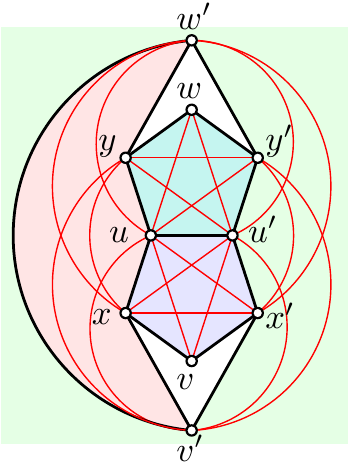}
    \caption{}
    \label{fig:crossingConfigurationStart}
    \end{subfigure}
    \myhfil
    \begin{subfigure}{0.3\textwidth}
    \centering
    \includegraphics[page=2]{figures/special_subgraph_new.pdf}
    \caption{}
    \label{fig:crossingConfigurationEmbed1}
    \end{subfigure}
    \myhfil 
    \begin{subfigure}{0.3\textwidth}
    \centering
    \includegraphics[page=1]{figures/special_subgraph_new.pdf}
    \caption{}
    \label{fig:crossingConfigurationEmbed2}
    \end{subfigure}
    \caption{
    (a)~Graph $G_{CB}$: red and black edges are  \crossing  and potentially planar, resp., blue edges might be both, dashed edges may be absent. (b)--(c)  the two possible rotation schemes: 
    in (b) edges $(u,x')$, $(u,y')$, $(u',x)$ and $(u',y)$ are crossing-free; in (c) edges $(u,x)$, $(u,y)$, $(u',x')$ and $(u',y')$ are crossing-free.} 
    \label{fig:crossBatCases}
\end{figure}

\begin{restatable}[$\mathbf{\ast}$]{lemma}{crossBat}\label{lem:crossBat}
Let $\emb$ be an \textemb of $G$ and let $C=\langle c_0,\ldots,c_4 \rangle$ be a facial $5$-clique in $\emb$ such that $(c_1,c_3)$ and $(c_2,c_4)$ are potentially planar edges.
Then, vertices $c_2$ and $c_3$ have degree $9$ in $G$, and the induced subgraph $H$ of $G$ with vertex set $V(H)=N(c_2)\cup N(c_3)$
is an instance of \crossingConfiguration  where $C$ is the $5$-clique $\langle 
v,x',u',u,x
\rangle$ of  Fig.~\ref{fig:crossingConfigurationStart}. 
\end{restatable}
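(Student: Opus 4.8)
The plan is to recognise the target \crossingConfiguration inside the hypothesis and then assemble it one \facialKFive at a time. Write $u:=c_3$ and $u':=c_2$; in $C$ the two potentially planar edges $(c_1,c_3)$ and $(c_2,c_4)$ are precisely the two diagonals that cross each other, so the base edge of the sought \crossingConfiguration will be the boundary edge $(c_2,c_3)=(u',u)$ of $C$, and the identification we aim for is $c_0\mapsto v$, $c_1\mapsto x'$, $c_2\mapsto u'$, $c_3\mapsto u$, $c_4\mapsto x$. Since $(c_2,c_3)$ is crossing-free in $\emb$, Property~\ref{lem:crossingEdgesAreNotShared} yields a second \facialKFive $C'$ containing it, and Property~\ref{obs:twoK5sShareOnly2Vertices} gives $C\cap C'=\{c_2,c_3\}$. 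Hence the three further vertices of $C'$ are distinct from $c_0,c_1,c_4$ and, being in a $K_5$ with $c_2$ and $c_3$, are common neighbors of both; together with $c_0,c_1,c_4$ this already exhibits six common neighbors of $c_2$ and $c_3$, which I will name $v,x',x$ (from $C$) and $w,y,y'$ (from $C'$) as in Fig.~\ref{fig:crossingConfigurationStart}.

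Next I would exploit potential planarity: $|N(c_1,c_3)|\ge 6$ and $|N(c_2,c_4)|\ge 6$. As only the three vertices $c_0,c_2,c_4$ (resp.\ $c_0,c_1,c_3$) lie in $C$, each pair has at least three additional common neighbors outside $C$. Using Property~\ref{lem:crossingTwoEdgesBelongToSameK5} to control, inside $C$, which edges cross $(c_1,c_3)$ and $(c_2,c_4)$, and Corollary~\ref{col:edgesIncidentToCrossedEdgeDoNotCross} to forbid crossings between edges leaving $c_2,c_3$ towards the outside, I would show that these extra common neighbors supply exactly the two still-missing vertices $v',w'$ (re-finding some of $w,y,y'$ along the way), so that $c_2$ and $c_3$ have precisely eight common neighbors. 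The same data simultaneously locate the two remaining \facialKFives $C_3,C_4$: they arise as the second cliques on the crossing-free edges $(u,x),(u,y)$ and $(u',x'),(u',y')$, and one checks, e.g.\ via Corollary~\ref{lem:crossedEdgesShareAK5}, that each meets the previously found cliques in exactly two vertices by Property~\ref{obs:twoK5sShareOnly2Vertices}.

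The crux is the degree upper bound. Having found that $u=c_3$ lies in $C$, $C'$ and $C_3$ (and symmetrically $u'=c_2$ in $C$, $C'$, $C_4$), I would argue that their crossing-free edges $(u,u'),(u,x),(u,y)$ occur consecutively in the rotation at $u$, each shared by the appropriate pair of these three cliques, so that the rotation at $u$ \emph{closes up} after three \facialKFives. Since by Property~\ref{lem:crossingEdgesAreNotShared} each \facialKFive at $u$ owns exactly two crossing edges incident to $u$ while consecutive cliques share their crossing-free edges, three cliques yield exactly three crossing-free and six crossing edges, i.e.\ $d(u)=9$, and symmetrically $d(u')=9$. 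The delicate point, where Property~\ref{obs:twoK5sShareOnly2Vertices} and the uniqueness of the \facialKFive of a crossing edge (Property~\ref{lem:crossingEdgesAreNotShared}) are essential, is to rule out any further \facialKFive that would leave room for a tenth, eleventh or twelfth neighbor of $u$ or $u'$; this is the step I expect to carry the real work. Once $d(u)=d(u')=9$ is established, $N(c_2)\cup N(c_3)$ consists of exactly the ten named vertices, and it remains to perform a routine check of the adjacencies among them against $G_{CB}$: that the only edges possibly present beyond those forced by the four cliques are $(v,v')$ and $(w,w')$ (all other candidate edges being excluded by Properties~\ref{lem:crossingEdgesBelongToSameK5}--\ref{lem:crossingEdgesAreNotShared}), and that the isomorphism respects the \crossing/potentially-planar classification, which follows from Corollary~\ref{col:identificationProperty} applied edge by edge. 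This verifies condition~(ii) of the definition, while condition~(i), $V(H)=N(c_2)\cup N(c_3)$, holds by construction, completing the identification of $H$ as a \crossingConfiguration with $C=\langle v,x',u',u,x\rangle$.
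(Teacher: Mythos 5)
Your outline follows the same overall skeleton as the paper's proof (locate four interlocking \facialKFives, deduce $d(c_2)=d(c_3)=9$ from the closing of the rotation, then verify the induced edge set and classification), and your opening move is sound and even a pleasant shortcut: applying Property~\ref{lem:crossingEdgesAreNotShared} to the crossing-free boundary edge $(c_2,c_3)$ immediately yields the second clique $C'$ on the base edge with $C\cap C'=\{c_2,c_3\}$ by Property~\ref{obs:twoK5sShareOnly2Vertices}, hence six common neighbors of $c_2,c_3$ --- a clique the paper only recovers at the very end, as the \facialKFive of the crossing pair $(c_2,x),(c_3,u)$. However, there is a genuine gap, and you have located it yourself without filling it: everything after ``I would show that these extra common neighbors supply exactly the two still-missing vertices $v',w'$'' is asserted rather than proved. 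The entire substance of the paper's proof lies precisely there: a chain of nine case-elimination propositions (Propositions~\ref{prop:c1WCrossingfree}--\ref{prop:uIsNotX}) which use $2$-planarity, Property~\ref{lem:crossingTwoEdgesBelongToSameK5} and Corollary~\ref{col:edgesIncidentToCrossedEdgeDoNotCross} to pin down how the $\geq 3$ outside common neighbors of $(c_1,c_3)$ and of $(c_2,c_4)$ overlap --- in the paper's naming, $z=w$, $v=y$, $w\notin\{x,y\}$, $v\neq x$, $u\neq x$ --- and thereby force the cliques $\langle c_2,c_1,z,v,u\rangle$, $\langle c_3,x,y,z,c_4\rangle$ and $\langle x,c_3,c_2,u,x'\rangle$. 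Without these identifications, your key claims are unsupported: that the second cliques on the two crossing-free edges at $u$ \emph{coincide} in a single clique containing both $x$ and $y$ (a priori they could be two distinct cliques), that the three crossing-free edges at $u$ are consecutive so the rotation closes, and that $|N(c_2,c_3)|$ is exactly eight.

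Two further points. First, you misplace the ``delicate point'': once three \facialKFives pairwise sharing the crossing-free edges $(u,u')$, $(u,x)$, $(u,y)$ are established, the rotation at $u$ closes automatically and no room for a tenth neighbor remains --- ruling out extra cliques is not where the work is; the work is in the coincidence/interlocking step you deferred. (Also note that whether the third clique at $u$ contains $\{x,y\}$ or $\{x',y'\}$ depends on which of the two rotation schemes the fixed $\emb$ realizes, so your closure argument needs a ``w.l.o.g.'' there, and your labelling of $C'$'s three new vertices as $w,y,y'$ presumes knowledge of which one is the apex $w$, which itself requires argument.) Second, for the classification check, Corollary~\ref{col:identificationProperty} only gives the implication $|N(u,v)|<6\Rightarrow$ \crossing; to verify that the crossing edges of the identified cliques are \crossing in $G$ you must bound their endpoints' common neighbors from \emph{above}, which the paper gets from the fact that the crossing-free boundary pentagons cannot be crossed by any edge, confining common neighbors to the clique. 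As submitted, the proposal is a correct plan with the paper's main argument --- roughly two pages of geometric case analysis --- left as a placeholder.
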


Next, we show that  \crossingConfiguration has only two rotation schemes:

\begin{restatable}[$\mathbf{\ast}$]{lemma}{crossBatEmb}\label{lem:crossBatEmb}
Any instance of \crossingConfiguration in an \textemb $\emb$ has one of the two rotation schemes shown in Figs.~\ref{fig:crossingConfigurationEmbed1} and~\ref{fig:crossingConfigurationEmbed2}. 
\end{restatable}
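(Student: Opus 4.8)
The whole argument is driven by the rigidity of the two degree-$9$ vertices $u$ and $u'$. Recall that in an \textemb every vertex has degree a multiple of three, with a crossing-free edge in every third slot of its rotation; hence $u$ and $u'$ each carry exactly three crossing-free edges and lie in exactly three \facialKFives, and these tile their rotations (consecutive cliques sharing a crossing-free edge at the vertex). It therefore suffices to determine the three facial cliques around $u$ and around $u'$ together with the pentagon/pentagram embedding of each, and I will show that this data is fixed up to a single binary choice.

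First I would locate the two facial cliques on the base edge. Inside \crossingConfiguration the tip vertices $v$ and $w$ are adjacent only to $u,u',x,x'$ and to $u,u',y,y'$ respectively, so $|N(u,v)|<6$ and, by Corollary~\ref{col:identificationProperty}, the edges $(u,v),(u',v),(u,w),(u',w)$ are crossing. A crossing edge lies in a unique \facialKFive (Property~\ref{lem:crossingEdgesAreNotShared}), and I claim the one carrying $(u,v)$ is $C_1=\langle v,x',u',u,x\rangle$. Every competing $K_5$ through $(u,v)$ is produced by the optional edge $(v,v')$, and a short case analysis rules each of them out: such a clique either shares four vertices with the facial clique of $(u',v)$, contradicting Property~\ref{obs:twoK5sShareOnly2Vertices}, or forces $(v,v')$ to be crossing-free, which is impossible because all $K_5$'s through $(v,v')$ pairwise overlap in at least four vertices and so $(v,v')$ could not lie on two facial cliques. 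Thus $C_1$ is facial, and symmetrically so is $C_2=\langle w,y',u',u,y\rangle$; in particular $(u,u')$ is crossing-free and $C_1,C_2$ are its two facial cliques.

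Next I would pin down the third facial clique $F$ at $u$. By Property~\ref{obs:twoK5sShareOnly2Vertices} it is a $K_5$ through $u$ meeting each of $C_1,C_2$ in at most two vertices. One checks $u'\notin F$ (otherwise $F\subseteq\{u,u',v',w'\}$ has too few vertices), so $\{v',w'\}\subseteq F$ and $F$ picks up exactly one further vertex $a$ from $C_1$ and one $b$ from $C_2$; since $(v,w')$ and $(w,v')$ are non-edges we get $a\in\{x,x'\}$, $b\in\{y,y'\}$, and since $(x,y')$ and $(x',y)$ are non-edges the only surviving cliques are $F=\{u,x,y,v',w'\}$ and $F=\{u,x',y',v',w'\}$. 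This single binary choice is then propagated: $F$ meets $C_1$ in a crossing-free edge at $u$ — namely $(u,x)$ in the first case and $(u,x')$ in the second — which selects the embedding of $C_1$ and thereby fixes its crossing-free diagonal at $u'$; the same analysis at $u'$ leaves only the matching third clique there (e.g.\ $F=\{u,x,y,v',w'\}$ forces $(u',x')$ crossing-free and hence $\{u',x',y',v',w'\}$, not the mixed alternative, whose intersection with $C_1$ is the crossing edge $(u',x)$). Thus the rotation is fixed up to one choice, yielding exactly the two schemes of Fig.~\ref{fig:crossingConfigurationEmbed1} and Fig.~\ref{fig:crossingConfigurationEmbed2}, after a routine verification that the recorded crossing orders agree.

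The main obstacle is the bookkeeping of the second paragraph. Although the four facial cliques are easy to guess, \crossingConfiguration contains many further $K_5$'s, and the optional edges $(v,v'),(w,w')$ spawn still more; the delicate part is showing that none of these spurious cliques can be facial, so that the forcing of $C_1,C_2$ and the subsequent two-way case distinction both survive. This is precisely where the ``two facial cliques share at most two vertices'' property (Property~\ref{obs:twoK5sShareOnly2Vertices}) does the real work, and where one must also confirm that the binary choice at $u$ propagates consistently to $u'$ rather than splitting into four schemes.
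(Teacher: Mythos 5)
Your proposal is correct, and it reconstructs exactly the four \facialKFives ($C_v$, $C_w$, $C_{x,y}$, $C_{x',y'}$ in the paper's notation) and the single binary choice that the paper's proof also arrives at, but the supporting machinery is noticeably different. The paper first merges $(u,v)$ and $(u',v)$ into one \facialKFive via Corollary~\ref{lem:crossedEdgesShareAK5} (leaving $v'$ as a possible fifth vertex), then pins down $C_{x,y}$ and $C_{x',y'}$ by analyzing the \emph{crossing} edges $(x,y)$ and $(x',y')$ and their common neighborhood $\{u,u',v',w'\}$, obtains $u_1\neq u_2$ from the three-cliques-per-degree-$9$-vertex count, and only afterwards excludes $v'$ from $C_v$ by intersecting with $C_{x,y}$ against Property~\ref{obs:twoK5sShareOnly2Vertices}; the final step reads the two schemes off crossing-free \emph{separating cycles} through $v$ and $w$. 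You instead eliminate all $v'$-containing candidates for the clique of $(u,v)$ up front --- your observation that such a clique would force $(v,v')$ to be crossing-free while every $K_5$ through $(v,v')$ pairwise overlaps in four vertices (so it cannot lie in the two \facialKFives required by Property~\ref{lem:crossingEdgesAreNotShared}) is a genuinely different sub-argument not in the paper --- then obtain $C_{x,y}$ as the \emph{third clique at the degree-$9$ vertex $u$} by pure counting and non-edge elimination, and propagate the binary choice to $u'$ through forced crossing-free shared edges (again Property~\ref{lem:crossingEdgesAreNotShared}) rather than via separating cycles. Each route has its merits: the paper's anchoring on $(x,y)$, $(x',y')$ makes $u_1\neq u_2$ and the geometric picture (the separating cycles reused in later lemmas) immediate, while your version is more purely combinatorial, makes the two-way choice and its consistent propagation fully explicit, and never needs to embed anything until the end. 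Your closing ``routine verification'' of the crossing orders is at the same level of detail as the paper's own final step, so there is no gap.
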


\begin{proof}[Sketch]
Vertices are named as in 
Fig.~\ref{fig:crossingConfigurationStart}. First, we prove that there are four \facialKFives, namely \begin{inparaenum}[(i)]
\item $C_v$ that contains $v$,
\item $C_w$ that contains $w$,
\item $C_{x,y}$ that contains both $x$ and $y$, and,
\item $C_{x',y'}$ that contains both $x'$ and $y'$.
\end{inparaenum}
It is then easy to argue that $x,x'\in C_v$, 
while $y,y'\in C_w$.
Finally, we  show that $w$ is contained in the crossing-free cycle $(u_1,u_2,y',w',y)$ and  $v$~is contained in the crossing-free cycle $(u_2,u_1,x,v',x')$ where $\{u_1,u_2\}=\{u,u'\}$. 
The two choices for $u_1$ and $u_2$ give 
the two different rotation schemes of Figs.~\ref{fig:crossingConfigurationEmbed1} and \ref{fig:crossingConfigurationEmbed2}.
\end{proof}

As it is evident in Figs.~\ref{fig:crossingConfigurationEmbed1} and \ref{fig:crossingConfigurationEmbed2}, if an optimal $2$-planar graph $G$ contains an instance of \crossingConfiguration as subgraph, then it admits two different optimal $2$-planar rotation schemes $\emb$ and $\emb'$, that 
only differ in the choice of the rotation scheme of \crossingConfiguration. Hence for any instance of \crossingConfiguration, we may arbitrarily choose one of its two rotation schemes. Next, we formalize this observation:

\begin{lemma}\label{lem:crossingConfigurationChooseEmbed}
Let $\emb$ be an optimal 2-planar rotation scheme of $G$, and let $H$ be an instance of \crossingConfiguration in $G$. If $H$ has the rotation scheme of Fig.~\ref{fig:crossingConfigurationEmbed1} in $\emb$, then there exists another optimal 2-planar rotation scheme $\emb'$ of $G$, in which $H$ has the rotation scheme of Fig.~\ref{fig:crossingConfigurationEmbed2}, and vice-versa.
\end{lemma}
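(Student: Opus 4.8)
The plan is to establish Lemma~\ref{lem:crossingConfigurationChooseEmbed} as a direct consequence of Lemma~\ref{lem:crossBatEmb} together with a locality argument: swapping the rotation scheme of a single \crossingConfiguration instance $H$ between the two configurations of Figs.~\ref{fig:crossingConfigurationEmbed1} and~\ref{fig:crossingConfigurationEmbed2} is an operation that can be performed without disturbing the rest of the drawing. Concretely, suppose $\emb$ realizes $H$ with the rotation scheme of Fig.~\ref{fig:crossingConfigurationEmbed1}. I would define $\emb'$ to agree with $\emb$ everywhere except on the edges internal to $H$, where I replace the local rotation scheme of Fig.~\ref{fig:crossingConfigurationEmbed1} by that of Fig.~\ref{fig:crossingConfigurationEmbed2}. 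The bulk of the argument is then to verify that $\emb'$ is again a valid \textemb of $G$.

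**First I would** pin down exactly which edges are affected by the swap. By Lemma~\ref{lem:crossBat} and the definition of a \crossingConfiguration instance, the only vertices whose incident edges can change between the two configurations are $u,u',x,x',y,y',v,v',w,w'$, and crucially the edges leaving $H$ are the same in both cases: vertices $u$ and $u'$ have degree $9$ and all their neighbors lie inside $H$, so no edge with exactly one endpoint in $\{u,u'\}$ crosses out of $H$. The difference between the two rotation schemes is purely which of the four pairs $\{(u,x'),(u,y'),(u',x),(u',y)\}$ versus $\{(u,x),(u,y),(u',x'),(u',y')\}$ are drawn crossing-free; all other incidences and the four surrounding \facialKFives $C_v,C_w,C_{x,y},C_{x',y'}$ are preserved setwise. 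I would argue that the boundary cycles bounding the region occupied by $H$ in the drawing are identical in both schemes, so the swap is a self-contained local modification.

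**Then I would** confirm the two global correctness conditions. The combinatorial characterization of optimal $2$-planar rotation schemes requires that $\emb'$ still induces a crossing-free $3$-connected spanning pentangulation $\pent'$ with each pentagonal face hosting a \facialKFive. Since both Figs.~\ref{fig:crossingConfigurationEmbed1} and~\ref{fig:crossingConfigurationEmbed2} are themselves valid rotation schemes of the same subgraph $G_{CB}$ realizing the same four \facialKFives (this is the content of Lemma~\ref{lem:crossBatEmb}), the pentangulation and clique structure of $\emb'$ differs from that of $\emb$ only by the internal relabelling inside $H$; every face and every clique outside $H$ is untouched, and inside $H$ the faces and cliques are again valid by Lemma~\ref{lem:crossBatEmb}. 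Hence $\emb'$ is an \textemb of $G$ that realizes the other rotation scheme on $H$. The statement and its converse are symmetric, so the ``vice-versa'' direction follows by exchanging the roles of the two figures.

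**The main obstacle** I expect is the locality verification: I must be careful that no edge outside $H$ enters the region modified by the swap, since such an edge could change its crossing count when the internal configuration flips. The key leverage here is that $u,u'$ have degree exactly $9$ with all neighbors internal, which seals off the configuration, and that the edges $(v,v')$ and $(w,w')$ — the only edges that might join the two ``halves'' of $H$ — either are absent or behave identically in both schemes. Establishing that the closed region is bounded by the same crossing-free cycles in both configurations, and that these cycles separate $H$'s interior edges from the rest of the drawing, is the delicate step; once that is in place, the equivalence of the two global rotation schemes is immediate.
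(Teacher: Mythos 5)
Your proposal is correct and follows essentially the same route as the paper: the paper states Lemma~\ref{lem:crossingConfigurationChooseEmbed} without a separate proof, presenting it as the formalization of an observation deemed evident from Lemma~\ref{lem:crossBatEmb} and Figs.~\ref{fig:crossingConfigurationEmbed1}--\ref{fig:crossingConfigurationEmbed2}, namely that the two rotation schemes of a \crossingConfiguration instance can be exchanged locally without affecting the rest of the rotation scheme. Your locality verification (the degree-$9$ vertices $u,u'$ have no edges leaving $H$, the separating crossing-free cycles --- in fact the two quadrilaterals $(v,x,v',x')$ and $(w,y,w',y')$ --- are identical in both schemes, and $(v,v')$, $(w,w')$ lie outside the modified region) supplies exactly the detail the paper leaves implicit.
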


 Both rotation schemes of a \crossingConfiguration instance contain two crossings between  potentially planar edges:

\begin{restatable}[$\mathbf{\ast}$]{lemma}{crossBatTrue}\label{lem:crossBatTrue}
Let $\emb$ be an \textemb of $G$.
If $G$ contains an instance $H$ of \crossingConfiguration, then there exist exactly two pairs of potentially planar edges that belong to $H$ and  cross in $\emb$.
\end{restatable}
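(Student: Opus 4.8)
The plan is to combine the two preceding lemmas structurally. By Lemma~\ref{lem:crossBatEmb}, the instance $H$ has one of exactly two rotation schemes, namely those depicted in Figs.~\ref{fig:crossingConfigurationEmbed1} and~\ref{fig:crossingConfigurationEmbed2}; and by Lemma~\ref{lem:crossingConfigurationChooseEmbed} these are symmetric, so it suffices to analyze one of them, say that of Fig.~\ref{fig:crossingConfigurationEmbed1}, and then transfer the conclusion to the other case by the symmetry exchanging the roles of $u$ and $u'$ (equivalently of the primed and unprimed vertices). In the chosen rotation scheme the four \facialKFives are fixed as $C_v$, $C_w$, $C_{x,y}$ and $C_{x',y'}$ from the proof sketch of Lemma~\ref{lem:crossBatEmb}, together with their crossing-free boundary cycles; this pins down every crossing inside $H$.

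First I would read off, from the fixed rotation scheme of Fig.~\ref{fig:crossingConfigurationEmbed1}, exactly which pairs of edges cross. Each of the four \facialKFives contributes its own pair of crossing diagonals; but most of these diagonals are \crossing edges (the red edges of Fig.~\ref{fig:crossingConfigurationStart}), not potentially planar ones. The task is therefore to single out the crossings whose \emph{both} participating edges are potentially planar (black or blue in Fig.~\ref{fig:crossingConfigurationStart}). I expect the two sought pairs to be precisely the diagonal pairs living in the two \facialKFives $C_{x,y}$ and $C_{x',y'}$ — concretely the crossing between $(u,x')$ and $(u',x)$ (and the analogous $y$-crossing in the other clique), i.e.\ the potentially planar edges that become crossing-free in the \emph{other} rotation scheme. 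That these edges are potentially planar follows because they were classified black/blue in the definition of \crossingConfiguration, a classification the isomorphism preserves by condition (ii); that they actually cross in $\emb$ is visible from the rotation scheme.

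Next I would argue the count is \emph{exactly} two, i.e.\ there is no third such pair. For this I would invoke Property~\ref{lem:crossingEdgesBelongToSameK5}: any pair of crossing potentially planar edges must belong to a common \facialKFive. Since $H$ has exactly four \facialKFives, and the two cliques $C_v$, $C_w$ each have their crossing diagonals incident to the degree-$9$ apex vertices $v$ resp.\ $w$ along edges that are \crossing rather than potentially planar, neither $C_v$ nor $C_w$ can host a crossing of two potentially planar edges. The remaining two cliques each host exactly one such crossing, giving the total of two. A subtle point to address is whether a crossing of two potentially planar edges of $H$ could occur in a \facialKFive of $G$ lying \emph{outside} $H$; here I would use Lemma~\ref{lem:crossBat}, which guarantees that whenever two potentially planar edges cross, the whole surrounding structure is an instance of \crossingConfiguration, so any such crossing is already accounted for within the four cliques of $H$.

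The main obstacle I anticipate is the careful bookkeeping in the exhaustiveness step: ensuring that the two remaining \facialKFives $C_{x,y}$ and $C_{x',y'}$ do not secretly contribute a second potentially-planar/potentially-planar crossing each, and that the possibly-present edges $(v,v')$, $(w,w')$ (the dashed edges that may be absent) do not introduce an extra crossing of two potentially planar edges. I would handle this by checking directly against the rotation scheme of Fig.~\ref{fig:crossingConfigurationEmbed1} that in each of these two cliques exactly one of the two diagonals-pairs consists of potentially planar edges while the other involves a \crossing edge, and that the dashed edges, when present, are drawn crossing-free (as the figures indicate). Once this case analysis is complete for Fig.~\ref{fig:crossingConfigurationEmbed1}, the symmetry of Lemma~\ref{lem:crossingConfigurationChooseEmbed} yields the identical count of two for Fig.~\ref{fig:crossingConfigurationEmbed2}, completing the proof.
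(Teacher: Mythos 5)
You follow the same route as the paper's own (three-sentence) proof: by Lemma~\ref{lem:crossingConfigurationChooseEmbed} fix w.l.o.g.\ the rotation scheme of Fig.~\ref{fig:crossingConfigurationEmbed1}, read off which crossings involve two potentially planar edges, and verify the count is exactly two. However, your concrete identification of the crossings is wrong, and the exhaustiveness step built on it would fail. In the scheme of Fig.~\ref{fig:crossingConfigurationEmbed1} the two pairs are $(u,x)$ with $(u',x')$ and $(u,y)$ with $(u',y')$; the pair $(u,x')$, $(u',x)$ that you name is crossing-free there and crosses only in the scheme of Fig.~\ref{fig:crossingConfigurationEmbed2} (see the caption of Fig.~\ref{fig:crossBatCases}, and Section~\ref{subsec:time}, where precisely $(u,x)$ and $(u,y)$ get reclassified as \crossing). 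More importantly, these crossings are hosted by $C_v$ and $C_w$, not by $C_{x,y}$ and $C_{x',y'}$: by Property~\ref{lem:crossingEdgesBelongToSameK5} a crossing pair lies in a single \facialKFive, which must contain all four endpoints, and $\{u,u',x,x'\}\subseteq V(C_v)$, whereas $C_{x,y}$ contains $x$ but not $x'$ and $C_{x',y'}$ contains $x'$ but not $x$ (analogously for the $y$-pair and $C_w$) --- so even the pair you name could only live in $C_v$. Concretely, in Fig.~\ref{fig:crossingConfigurationEmbed1} the clique $C_v$ has crossing-free boundary $(v,x,u',u,x')$; its diagonals $(v,u)$, $(v,u')$, $(x,x')$ are \crossing, and its remaining two diagonals $(u,x)$, $(u',x')$ are potentially planar and cross each other, while in $C_{x,y}=\langle u_1,y,w',v',x\rangle$ every one of the five diagonal crossings involves at least one of the \crossing edges $(x,y)$, $(y,v')$, $(x,w')$. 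Hence your exclusion argument (``neither $C_v$ nor $C_w$ can host such a crossing; $C_{x,y}$ and $C_{x',y'}$ each host exactly one'') is exactly backwards: carried out as proposed, the per-clique check would certify the wrong pairs and rule out the actual ones. A smaller slip in the same passage: $v$ and $w$ are not the degree-$9$ vertices of the configuration; $u$ and $u'$ are.

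With the cliques swapped, your plan does reduce to the paper's argument, and your count-by-cliques idea is the right way to make the paper's ``it is evident'' explicit: one both-potentially-planar crossing in each of $C_v$ and $C_w$, none in $C_{x,y}$ or $C_{x',y'}$. For your worry about a hosting \facialKFive ``outside $H$'', the appeal to Lemma~\ref{lem:crossBat} is unnecessary: the hosting clique must contain all four endpoints of the crossing pair, hence is a $5$-clique on $V(H)$, and any $5$-clique on $V(H)$ other than the four of Lemma~\ref{lem:crossBatEmb} (such as $\{u,u',x,y,v'\}$, or $\{v,v',u,x,x'\}$ when the dashed edge $(v,v')$ is present) shares at least three vertices with one of the four, so it cannot be facial by Property~\ref{obs:twoK5sShareOnly2Vertices}.
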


Using Lemma~\ref{lem:crossBatEmb} we can fix a rotation scheme for each instance $H$ of \crossingConfiguration identified in $G$  
by reclassifying two crossing potentially planar edges of $H$ to  \crossing edges. After performing this 	reclassification for all instances, if $G$ is optimal $2$-planar, then the subgraph $G_p$ induced by the potentially planar edges is planar. Furthermore, in this case, Lemma~\ref{lem:crossingConfigurationChooseEmbed} guarantees that the unique planar embedding of $G_p$ is part of an \textemb of $G$.

\section{Identifying Facial $5$-Cliques}\label{sec:triplets}

Assume that we have fixed the rotation scheme for every identified instance of \crossingConfiguration. 
Let $G_p$ be the 
spanning subgraph of $G$ formed by all potentially planar edges after the 		reclassification process. As discussed in Section~\ref{sec:crossBat}, $G_p$ is planar and $3$-connected, i.e. it has a unique planar rotation scheme $\emb_p$. Furthermore, by Lemma~\ref{lem:crossingConfigurationChooseEmbed}, if $G$ is optimal $2$-planar, it admits an \textemb that extends $\emb_p$ which we call \emph{\compliant}. 
For any \compliant \textemb $\emb$, $G_p$ contains the corresponding spanning pentangulation $\pent$ as a subgraph. 
Hence each face of $\emb_p$  has length $3$, $4$ or $5$ and is part of a facial $5$-clique. Hence, we can arbitrarily triangulate them and assume from now on that $G_p$ is triangulated. 
Triangulating a face of $\emb_p$ corresponds to reclassifying some chords from \crossing to potentially planar.

Let $(f_1,f,f_2)$ be a path in the dual $G_p^*$ of $G_p$ so that $f_1 = (u,w_1,v_1)$, $f=(u,w_2,w_1)$ and $f_2=(u,v_2,w_2)$, as shown in Fig.~\ref{fig:triplet_base}. If the subgraph induced by the vertices $\{u,v_2,w_2,w_1,v_1\}$ is a $5$-clique in $G$, we call $T=\langle f_1,f,f_2 \rangle$ a \emph{\triplet}. $T$ contains vertices $u,v_2,w_2,w_1,v_1$, faces $f_1,f,f_2$ and the three \crossing edges $e_1=(v_1,w_2)$, $e_2=(v_2,w_1)$ and $e=(v_1,v_2)$, as shown in Fig.~\ref{fig:triplet_base}. We say that $e_1$, $e_2$ and $e$ \emph{belong} to \triplet $T$.

In any \compliant \textemb $\emb$ (if it exists), faces and \crossing edges of $\emb_p$ are partitioned into \triplets, such that every face and \crossing edge belongs to exactly one of these \triplets. Furthermore, the \triplets are in  1-1 correspondence with the facial $5$-cliques of $\emb$. We say that $e$, $e_1$ and $e_2$ are \emph{assigned} to the \triplet  $T=\langle f_1,f,f_2\rangle$  \emph{w.r.t.} $\emb$ if $T$ induces a \facialKFive in $\emb$. Similarly, we say that faces $f_1$, $f$ and $f_2$ of $T$ are \emph{assigned} to $T$. If such an assignment is not possible, then $G$ is not optimal $2$-planar.
\begin{figure}[t]
    \centering
    \includegraphics[width=0.3\textwidth,page=1]{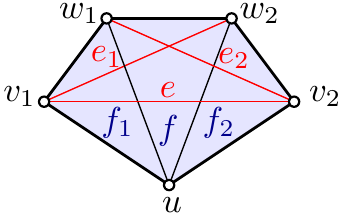}
    \caption{Illustration of a  \triplet $\langle f_1,f,f_2\rangle$, along with its \crossing edges $e_1$, $e_2$ and $e$. Edges of $G_p$ are drawn black, \crossing edges of $G$ are red.}
    \label{fig:triplet_base}
\end{figure}

Let $\mathcal{T}$ be a set of \triplets such that any two \triplets in $\mathcal{T}$ are face-disjoint  and contain different \crossing edges. Consider the partial $2$-planar rotation scheme $\emb_\mathcal{T}$ of $G$ that 
\begin{inparaenum}[(i)]
\item extends $\emb_p$, 
\item 
the \crossing edges of each $T \in \mathcal{T}$ are assigned to $T$, and
\item there is no other assignment of \crossing edges. 
\end{inparaenum}We say that $\mathcal{T}$ is \emph{\bad} if and only if $\emb_\mathcal{T}$ cannot be extended to a \compliant \textemb $\emb$ of $G$. 
Our goal is to find an assignment of all \crossing edges of $G$ to a set of \triplets $\mathcal{T}$ such that 
$\mathcal{T}$ is \notbad
if $G$ is optimal 2-planar. We actually prove a stronger result, namely that the set $\mathcal{T}$ is unique.
We will use two observations that follow from the simplicity of $G$:

\begin{observation}\label{obs:eitherOnlyTwoVerticesOrFace}
Let $f=(u,v,w)$ be a triangular face of $\emb_p$ and let $T$ be a \triplet that contains vertices $u$, $v$ and $w$. Then $T$ contains face $f$.
\end{observation}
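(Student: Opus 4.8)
The plan is to argue purely combinatorially, exploiting the rigid internal structure of a triplet together with the fact that a triangle of a $3$-connected triangulation bounds a unique face. First I would recall, from the definition of $T=\langle f_1,f,f_2\rangle$, that its five vertices span a $5$-clique of $G$, that its three \crossing edges are (by definition) absent from $G_p$, and that its three faces are the fan triangles around the common apex $u$. To keep the bookkeeping uniform I would relabel the five vertices of $T$ as $p_0,\dots,p_4$ in the cyclic order of the pentagon they bound, with apex $p_0$, so that the three faces of $T$ have vertex sets $\{p_0,p_1,p_2\}$, $\{p_0,p_2,p_3\}$, $\{p_0,p_3,p_4\}$, while the three \crossing edges of $T$ are exactly $p_1p_3$, $p_1p_4$ and $p_2p_4$.

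The key step is to pin down which vertex triples of the clique can possibly bound a triangular face of $\emb_p$. Since $f=(u,v,w)$ is a face of $\emb_p$, its boundary edges $uv,vw,wu$ all lie in $E(G_p)$; and as $u,v,w\in V(T)$, these three pairs are clique-pairs, so none of them may be a \crossing edge. The only clique-pairs missing from $G_p$ are $p_1p_3$, $p_1p_4$ and $p_2p_4$, so I would run the short check over all $\binom{5}{3}=10$ triples of $\{p_0,\dots,p_4\}$ and observe that the only ones avoiding all three missing pairs are exactly $\{p_0,p_1,p_2\}$, $\{p_0,p_2,p_3\}$ and $\{p_0,p_3,p_4\}$ — precisely the vertex sets of the three faces of $T$. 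Hence $\{u,v,w\}$ equals the vertex set of one face of $T$.

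Finally I would upgrade this equality of vertex sets to an equality of faces. Here I use that $G_p$ is a simple $3$-connected triangulation on more than three vertices (it is $3$-connected because $G$ is simple, and $|V(G_p)|=n$ is large for an optimal $2$-planar graph), so $\emb_p$ is its unique embedding and each triangle bounds a face on at most one of its two sides; thus a vertex triple is the boundary of at most one face. Consequently the face $f$ and the face of $T$ sharing its vertex set must be the same face, which means $T$ contains $f$, as claimed.

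I expect this last step to be the only real obstacle: passing from ``same three vertices'' to ``same face''. This is exactly where simplicity of $G$ enters, through $3$-connectivity of $G_p$ and the ensuing uniqueness of faces — without it the triple could in principle bound $f$ on one side and a face of $T$ on the other. By contrast, the combinatorial case check of the second paragraph is routine once the three \crossing edges of a triplet are identified, which is immediate from the definition.
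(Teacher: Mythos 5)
Your proof is correct and takes essentially the approach the paper intends: the observation is stated there without proof, with only the remark that it follows from the simplicity of $G$, and your argument isolates exactly that crux — after the routine enumeration showing that any triple of the five triplet vertices whose three pairs all lie in $G_p$ (i.e., avoid the three \crossing diagonals $p_1p_3$, $p_1p_4$, $p_2p_4$) must be the vertex set of one of the three fan faces, simplicity of $G_p$ (together with $|V(G_p)|>3$, so that a $3$-cycle bounds a face on at most one side) upgrades equality of vertex sets to equality of faces. The side appeal to uniqueness of the embedding is superfluous, since $\emb_p$ is already fixed and the ``a cycle bounding faces on both sides forces the graph to equal that cycle'' argument suffices, but it is harmless.
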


\begin{observation}\label{obs:crossBothFacesOrNone}
Let $f=(u,v,w)$ and $f'=(u',w,v)$ be two adjacent faces of $\emb_p$ and $(u,u')$  a \crossing edge. Let $\emb$ be a  \compliant \textemb and $T$ be the \triplet that $(u,u')$ is assigned to w.r.t.\ $\emb$. Then, either 
\begin{inparaenum}[(i)]
\item $(u,u')$ is drawn inside $f$ and $f'$ in $\emb$ and $T$ contains both $f$ and $f'$,  or
\item  $(u,u')$  is drawn outside $f$ and $f'$ in $\emb$ and $T$ contains neither $f$ nor $f'$.
\end{inparaenum}
\end{observation}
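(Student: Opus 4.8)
The plan is to prove Observation~\ref{obs:crossBothFacesOrNone} by analyzing how the crossing edge $(u,u')$ interacts with the simple closed curve formed by the boundary of the two adjacent faces $f=(u,v,w)$ and $f'=(u',w,v)$. Since $f$ and $f'$ share the edge $(v,w)$, their union forms a quadrilateral region $Q$ with boundary cycle $(u,v,u',w)$, and the shared diagonal $(v,w)$ splits $Q$ into the two triangles. The endpoints $u$ and $u'$ of the crossing edge lie on opposite corners of this quadrilateral. First I would recall, from the combinatorial characterization summarized in Section~\ref{sec:preliminaries}, that in the \textemb $\emb$ each crossing edge lies inside a single facial $5$-clique of the pentangulation $\pent$, and that the potentially planar edges $(u,v),(v,u'),(u',w),(w,u)$ bounding $Q$ belong to $G_p$ and hence are crossing-free in $\emb$ (they correspond to edges of $\pent$ after triangulation). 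Thus the boundary of $Q$ is a crossing-free simple closed curve in $\emb$, and the diagonal $(v,w)$ is likewise crossing-free.

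Next I would argue that the curve representing $(u,u')$ connects two opposite corners of $Q$, so in any valid drawing it must either stay entirely inside $Q$ or leave $Q$ immediately. Because the four boundary edges and the diagonal $(v,w)$ are all crossing-free, the curve for $(u,u')$ cannot cross any of them; if it did, that boundary or diagonal edge would receive a crossing, contradicting its membership in $G_p$. Consequently the drawn curve for $(u,u')$ cannot pass from the interior of $Q$ to its exterior, since any such passage would require crossing the boundary of $Q$. This forces a clean dichotomy: either the whole curve lies in the interior of $Q$, equivalently inside $f$ and $f'$, or it lies entirely in the exterior, i.e.\ outside both $f$ and $f'$. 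The case where $(u,u')$ would be drawn inside $f$ but outside $f'$ (or vice versa) is impossible precisely because crossing from one to the other requires traversing the shared crossing-free diagonal $(v,w)$.

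Finally I would connect the geometric dichotomy to the combinatorial statement about the \triplet $T$ to which $(u,u')$ is assigned. By definition $T$ induces a facial $5$-clique of $\emb$, and its three constituent faces of $\emb_p$ are exactly the faces on whose interior the three crossing edges of the clique are routed. In case~(i), where $(u,u')$ is drawn inside $f$ and $f'$, these two faces must both lie in the interior of the pentangular face of $\pent$ containing the clique, hence both belong to $T$; I would invoke Observation~\ref{obs:eitherOnlyTwoVerticesOrFace} to pin down that the faces containing the triangle vertices are exactly the faces of $T$. In case~(ii), where $(u,u')$ is drawn outside both $f$ and $f'$, the interior of $Q$ is free of this crossing edge, so neither $f$ nor $f'$ can be among the three faces of $T$, since those faces host the crossing edges of the clique on their interior. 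The main obstacle I anticipate is the careful bookkeeping of the ``interior versus exterior'' orientation using the convention fixed in Section~\ref{sec:preliminaries} for the interior of a region bounded by a cycle, together with ruling out degenerate routings; the crossing-freeness of the diagonal $(v,w)$ is the crucial fact that makes the two faces behave as an all-or-nothing pair, and establishing that cleanly from the characterization is where the argument must be most precise.
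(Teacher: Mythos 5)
There is a genuine gap, and it sits at the load-bearing step of your argument. Your dichotomy rests on the claim that every edge of $G_p$ --- in particular the diagonal $(v,w)$ --- is crossing-free in the \compliant rotation scheme $\emb$. That is false: $G_p$ is obtained by triangulating the length-$4$ and length-$5$ faces, and as Section~\ref{sec:triplets} states, triangulating a face of $\emb_p$ corresponds to reclassifying some chords from \crossing to potentially planar. These chords are edges of $G_p$ that \emph{are} crossed (twice) in $\emb$, inside their \facialKFive. Worse, your conclusion that $(u,u')$ cannot cross $(v,w)$ makes case~(i) of the observation impossible under your own reasoning: $u$ and $u'$ lie on opposite sides of the diagonal inside the quadrilateral $Q$, so any curve for $(u,u')$ contained in $f\cup f'$ must cross $(v,w)$ --- and in case~(i) it does exactly that. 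So the all-or-nothing behaviour of the pair $f,f'$ is not due to $(v,w)$ being uncrossable; it is due to the fact that $(u,u')$ may only cross crossing edges of its own \facialKFive (Property~\ref{lem:crossingEdgesBelongToSameK5}), which happens for $(v,w)$ precisely when $(v,w)$ is a chord of the pentagon of $T$, i.e.\ when $f$ and $f'$ are both faces of $T$. (Similarly, $(u,u')$ cannot leave $Q$ through its four boundary edges because each of them shares an endpoint with $(u,u')$, not because they are crossing-free --- they too may be chords.)

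The intended argument is shorter and essentially combinatorial. Since $(u,u')$ is assigned to $T$, both $u$ and $u'$ are vertices of $T$. If $T$ contains one of the two faces, say $f=(u,v,w)$, then $v,w\in V(T)$, so all three vertices of $f'=(u',w,v)$ lie in $V(T)$, and Observation~\ref{obs:eitherOnlyTwoVerticesOrFace} forces $f'\in T$; symmetrically, $f'\in T$ implies $f\in T$. Hence $T$ contains both faces or neither. If it contains both, they are adjacent faces of the \triplet, so $(v,w)$ is one of its two chords, and the fixed rotation scheme of the \facialKFive routes $(u,u')$ inside $f\cup f'$, crossing $(v,w)$ once. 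If it contains neither, then $(u,u')$ is drawn inside the pentagon of $T$, whose interior is disjoint from $f$ and $f'$, so the edge lies outside both. Your appeal to Observation~\ref{obs:eitherOnlyTwoVerticesOrFace} in case~(i) gestures at this, but your justification of case~(ii) (``the faces of $T$ host the crossing edges of the clique on their interior'') is too weak on its own: a face of $T$ need not host the \emph{particular} edge $(u,u')$ --- in the notation of Fig.~\ref{fig:triplet_base}, edge $e_1$ avoids face $f_2$ entirely --- so without the Observation~\ref{obs:eitherOnlyTwoVerticesOrFace} step you have not excluded the possibility that $T$ contains exactly one of $f$ and $f'$ while $(u,u')$ is routed elsewhere.
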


If any of
$e_1$, $e_2$ or $e$ belongs only to one \triplet $T$, then $T$ is a facial $5$-clique in any \compliant \textemb $\emb$ and $e_1$, $e_2$ and $e$ can be assigned to $T$. So, we assume 
that each of $e_1$, $e_2$ and $e$ belong to at least one triplet different from $T$. Let $T_1$ and $T_2$ be two  \triplets that contain edges $e_1$ and $e_2$ respectively and are different from $T$. Note that $T_1=T_2$ might hold.
Let $\mathcal{T}=\{T\}$. If for each such pair of \triplets $T_1$ and $T_2$ 
we can conclude that the set $\mathcal{T}'=\{T_1,T_2\}$ is \bad, then in any \compliant \textemb $\emb$ of $G$ (if it exists) edges $e_1$, $e_2$ and $e$ must be assigned to $T$. In the following, we compare  $\mathcal{T}$ against all possible sets $\mathcal{T}'$ and prove that at least one of $\mathcal{T}$ and $\mathcal{T}'$ is \bad. This allows to decide, for each \triplet $T$, if $T$ forms a \facialKFive in every \compliant \textemb of $G$ or in none. 
Note that when we write $\mathcal{T}=\{T\}$ and $\mathcal{T}'=\{T_1,T_2\}$, we assume~that $T= \langle f_1,f,f_2 \rangle$ as shown in Fig.~\ref{fig:triplet_base} and \triplets $T_1$ and $T_2$ contain edges $e_1$ and $e_2$, respectively.

We first restrict how \triplets $T_1$ and $T_2$ relate to \triplet $T$.
Observation~\ref{obs:crossBothFacesOrNone} applied to $T_1$ implies that if $T_1$ shares either $f$ or $f_1$ with $T$, then, in the presence of edge $e_1$, $T_1$ shares both $f$ and $f_1$ with $T$. A symmetric argument applies for $T_2$. 
It follows that one of the following cases holds for $\{i,j\}=\{1,2\}$: 
\begin{enumerate}[\bf ~~C.1]
\item \label{c:1} $T_i$ is face-disjoint with $T$, or
\item \label{c:3} $T_i$ shares only face $f_j$ with $T$, or,
\item \label{c:2} $T_i$ shares only faces $f$ and $f_i$ with $T$.
\end{enumerate}

In the next two lemmas, we show that every $\mathcal{T}'=\{ T_1, T_2\}$ is \bad if Case C.\ref{c:1} applies  either for none or for both of $T_1$ and $T_2$.

\begin{restatable}[$\mathbf{\ast}$]{lemma}{oneTripletFaceDisjoint}\label{lem:oneTripletFaceDisjoint}
Let $\mathcal{T}' =\{ T_1, T_2\}$. If none of $T_1$ and $T_2$ is face-disjoint from $T$, then $\mathcal{T}'$ is \bad.
\end{restatable}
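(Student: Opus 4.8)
The plan is to build directly on the reduction stated just before the lemma. Since neither $T_1$ nor $T_2$ is face-disjoint from $T$, Observation~\ref{obs:crossBothFacesOrNone} forces each $T_i$ into one of exactly two situations: either $T_i$ shares with $T$ only the single face $f_j$, or $T_i$ shares with $T$ exactly the pair $f,f_i$ (with $\{i,j\}=\{1,2\}$). It helps to record the geometric role of the two crossing edges involved: $e_1=(v_1,w_2)$ is precisely the crossing edge joining the apices $v_1,w_2$ of the adjacent triangles $f_1=(u,w_1,v_1)$ and $f=(u,w_1,w_2)$ across their shared edge $(u,w_1)$, and symmetrically $e_2=(w_1,v_2)$ joins the apices of $f$ and $f_2=(u,w_2,v_2)$ across $(u,w_2)$. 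The overall strategy is to exclude the ``shares only $f_j$'' situation for both triplets, which leaves both in the ``shares $f$ and $f_i$'' situation, where they are forced to contain $f$ simultaneously, something no \compliant \textemb can realize.

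First I would rule out that $T_1$ shares only $f_2$ with $T$. Here $e_1\in T_1$, and since both endpoints of a crossing edge of a \triplet are non-apex (outer) vertices, $v_1$ and $w_2$ are outer vertices of $T_1$. As $T_1$ contains the face $f_2$, its apex is a vertex of $f_2$ distinct from the endpoint $w_2$ of $e_1$, hence the apex is $u$ or $v_2$. If the apex were $v_2$, then $v_1$ would have to be a $G_p$-neighbor of $v_2$; but $(v_1,v_2)=e$ is a crossing edge and therefore not an edge of $G_p$, a contradiction. If the apex were $u$, then since $v_1$ and $w_2$ are separated around $u$ only by $w_1$, the outer boundary of the fan of $T_1$ must also contain $w_1$; thus $T_1$ contains all three vertices of each of $f_1=(u,w_1,v_1)$ and $f=(u,w_1,w_2)$, so by Observation~\ref{obs:eitherOnlyTwoVerticesOrFace} it contains both $f_1$ and $f$, contradicting that $T_1$ shares only $f_2$. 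The symmetric argument, using that $w_1,v_2$ are separated around $u$ only by $w_2$ and again that $(v_1,v_2)\notin G_p$, rules out that $T_2$ shares only $f_1$ with $T$.

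Consequently both $T_1$ and $T_2$ fall into the remaining situation, so $T_1\supseteq\{f_1,f\}$ and $T_2\supseteq\{f,f_2\}$; in particular both contain the face $f$. Since in every \compliant \textemb the faces of $\emb_p$ are partitioned among the \facialKFives, no two distinct \facialKFives can share a face. As $\emb_{\mathcal{T}'}$ assigns $e_1$ to $T_1$ and $e_2$ to $T_2$, making both of them \facialKFives that contain $f$, it admits no extension to a \compliant \textemb, i.e.\ $\mathcal{T}'$ is \bad. I expect the main obstacle to be precisely the exclusion step of the second paragraph: one must argue rigorously that the only apex positions compatible with $f_2$ (resp.\ $f_1$) being a face of $T_1$ (resp.\ $T_2$) are incompatible with $e_1$ (resp.\ $e_2$) being one of its crossing edges. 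This is exactly where the simplicity of $G$ enters, through the non-existence of the $G_p$-edge $(v_1,v_2)$ and through Observation~\ref{obs:eitherOnlyTwoVerticesOrFace}.
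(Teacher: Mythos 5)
Your reduction to apex positions is sound as far as it goes: the apex of $T_1$ (the common vertex of its three faces, incident to both of its $G_p$-diagonals) must lie on $f_2$, it cannot be $w_2$ because $w_2$ is an endpoint of the crossing edge $e_1$, and it cannot be $v_2$ because then $(v_1,v_2)=e$ would have to be a $G_p$-edge. But the remaining case, apex $u$, contains a genuine gap. You claim that a fan of three consecutive faces around $u$ whose outer path contains both $v_1$ and $w_2$ must also contain $w_1$. This ignores the wrap-around possibility: if $u$ has exactly five neighbors in $G_p$, say $v_1,w_1,w_2,v_2,z$ in rotation, then the fan $\langle f_2,\,(u,v_2,z),\,(u,z,v_1)\rangle$ with outer path $w_2,v_2,z,v_1$ is a \triplet with apex $u$ that contains $e_1=(v_1,w_2)$ as one of its three crossing diagonals, yet shares only the face $f_2$ with $T$ and avoids $w_1$ entirely. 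So the configuration you try to exclude is not excludable; indeed it is precisely the configuration the paper builds Lemma~\ref{lem:c1c3_conditions} around, where a \triplet sharing only $f_1$ with $T$ is realized as $\langle w_1,v_1,z,v_2,u\rangle$ with apex $u$ and $d_{G_p}(u)=5$ (note that $G_p$-degree $5$ is perfectly compatible with $d_G(u)=9$, since the two fan diagonals of that \triplet are potentially planar edges that cross inside the pentagon).

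With the exclusion gone, two of the three case combinations are left unhandled by your argument: the one where both $T_1$ and $T_2$ share only the opposite face ($f_2$ resp.\ $f_1$) with $T$, and the mixed one. The paper disposes of these differently. When $T_i$ shares only $f_j$ with $T$ for $\{i,j\}=\{1,2\}$, the two facial $5$-cliques formed in $\emb_{\mathcal{T}'}$ contain $\{u,v_2,w_2,v_1\}$ and $\{u,v_1,w_1,v_2\}$ respectively, hence share the three vertices $u,v_1,v_2$, violating Property~\ref{obs:twoK5sShareOnly2Vertices}. And when, say, $T_1$ shares $f$ and $f_1$ with $T$, it shows that $T_2$ can neither share $f$ and $f_2$ (then $e_1$ and $e_2$ would cross, so by Property~\ref{lem:crossingEdgesBelongToSameK5} they belong to one facial $5$-clique, forcing $T_1=T_2=T$) nor contain $f_1$ (two facial $5$-cliques cannot share a face, so again $T_1$, $T_2$ and $T$ would coincide), whence $T_2$ would be face-disjoint from $T$ after all. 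Your final step — two distinct \triplets both assigned the face $f$ cannot coexist in a \compliant \textemb because faces are partitioned among facial $5$-cliques — is correct, but it only covers the single combination in which both \triplets fall into the ``shares $f$ and $f_i$'' case; for the others you need a vertex-counting argument via Property~\ref{obs:twoK5sShareOnly2Vertices}, or something equivalent, which your proposal does not supply.
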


\begin{restatable}[$\mathbf{\ast}$]{lemma}{notBothTripletsFaceDisjoint}\label{lem:notBothTripletsFaceDisjoint}
Let $\mathcal{T}' =\{ T_1, T_2\}$. If both $T_1$ and $T_2$ are face-disjoint from $T$, then $\mathcal{T}'$ is \bad.
\end{restatable}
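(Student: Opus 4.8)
The plan is to argue by contradiction: suppose $\emb_{\mathcal{T}'}$ extends to a \compliant \textemb $\emb$, so that in $\emb$ the edge $e_1$ is assigned to $T_1$ and $e_2$ to $T_2$, both face-disjoint from $T=\langle f_1,f,f_2\rangle$. I would first invoke Observation~\ref{obs:crossBothFacesOrNone}. Applied to the adjacent faces $f_1,f$, which share $(u,w_1)$ and whose apices $v_1,w_2$ are joined by $e_1$: since $T_1$ contains neither $f_1$ nor $f$, the edge $e_1$ does not cross $(u,w_1)$. Symmetrically, $e_2$ does not cross $(u,w_2)$. The rest of the argument tracks the triplet $T^{(f)}$ to which the central face $f=(u,w_1,w_2)$ is assigned in $\emb$; note that $f$ contributes at least one of its three edges $(u,w_1),(u,w_2),(w_1,w_2)$ as a fan-chord of $T^{(f)}$.

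Suppose first that $(u,w_1)$ is a fan-chord of $T^{(f)}$. Then both triangles incident to $(u,w_1)$, namely $f_1$ and $f$, belong to $T^{(f)}$, and a short inspection of the ways the fan $f_1,f$ extends to a triplet shows that either it closes up as $\langle f_1,f,f_2\rangle$, giving $T^{(f)}=T$ and contradicting that $T$ is not a \facialKFive in $\emb$, or $e_1=(v_1,w_2)$ is one of the \crossing edges belonging to $T^{(f)}$. In the latter case $T^{(f)}$ contains $f_1$, so $T^{(f)}\neq T_1$, and $e_1$ would belong to two \facialKFives, contradicting Property~\ref{lem:crossingEdgesAreNotShared}. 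The case that $(u,w_2)$ is a fan-chord is symmetric and produces $e_2$.

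The only remaining possibility is that $(w_1,w_2)$ is the sole fan-chord of $f$ in $T^{(f)}$; then $f$ is an outer triangle of $T^{(f)}$ whose central triangle lies across $(w_1,w_2)$, and consequently both $(u,w_1)$ and $(u,w_2)$ are boundary edges of the pentagon of $T^{(f)}$, hence crossing-free, while $(w_1,w_2)$ becomes a \crossing diagonal of $T^{(f)}$. I would now bring in the leftover edge $e=(v_1,v_2)$, which is \crossing and is a diagonal of some \facialKFive $T_e$; thus $v_1,v_2$ sit at distance two on the pentagon of $T_e$ with a common neighbor $z\in N(v_1,v_2)$ between them. The apex $z$ cannot be $w_1$ or $w_2$, since then $e_2=(v_2,w_1)$ resp.\ $e_1=(v_1,w_2)$ would be a crossing-free boundary edge of $T_e$, impossible for a \crossing edge; and $z$ cannot be $u$, since the rotation at $u$ offers only $(u,w_1),(u,w_2)$ between $(u,v_1)$ and $(u,v_2)$, so $e$ would be forced to cross these two now crossing-free edges.

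It remains to exclude the case where $z$ is a common neighbor of $v_1,v_2$ distinct from $u,w_1,w_2$. Here the two further clique-vertices of $T_e$ are again drawn from $N(v_1,v_2)$; whenever they complete, together with $v_1$ or $v_2$, the vertex set of one of $f_1,f,f_2$, Observation~\ref{obs:eitherOnlyTwoVerticesOrFace} forces $T_e$ to contain that face and to carry $(w_1,w_2)$ as a boundary edge, contradicting that $(w_1,w_2)$ is a \crossing diagonal of $T^{(f)}$. The genuinely hard sub-case, which I expect to be the main obstacle, is when $T_e$ uses two fresh common neighbors, so that $N(v_1,v_2)=\{u,w_1,w_2,a,b\}$ exactly; ruling it out calls for a careful count of the crossings that $T_e$ forces on the edges $(u,v_1),(u,v_2)$ and around $u$, which should over-saturate one of these edges or collide once more with the placements of $e_1,e_2$. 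Once all sub-cases are excluded, $\emb$ cannot exist, so $\mathcal{T}'$ is \bad.
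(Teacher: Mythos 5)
Your opening moves track the paper's proof closely: you correctly pin down the \triplet containing the middle face $f$, rule out its containing $f_1$ or $f_2$, and deduce that it attaches across $(w_1,w_2)$, so that $(u,w_1)$ and $(u,w_2)$ are crossing-free boundary edges of its \facialKFive while $(w_1,w_2)$ is one of its \crossing chords --- exactly the configuration in Fig.~\ref{fig:triplet_proof1}. But from there you detour into the \facialKFive $T_e$ containing $e=(v_1,v_2)$, and this case analysis never closes: you explicitly leave its main sub-case (two common neighbors of $v_1,v_2$ outside $\{u,w_1,w_2\}$) as a hope (``should over-saturate one of these edges''), so the argument is not a proof. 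In addition, your exclusion of $z=u$ is itself gapped: you only inspect the side of the rotation at $u$ containing the fan edges $(u,w_1),(u,w_2)$, but the pentagon of $T_e$ could pass through $v_1,u,v_2$ with its region on the \emph{other} side of $(u,v_1)$ and $(u,v_2)$, where $e$ is drawn without ever meeting $(u,w_1)$ or $(u,w_2)$; that possibility needs its own argument.

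The missing idea that makes the detour unnecessary: you only extracted from face-disjointness that $e_1$ avoids $(u,w_1)$ and $e_2$ avoids $(u,w_2)$, but the hypothesis gives more. Since $T_1$ and $T_2$ are \facialKFives face-disjoint from $T$ in the extension $\emb'$, both $e_1$ and $e_2$ are drawn entirely outside the region $f_1\cup f\cup f_2$, a disk bounded by the $5$-cycle $(u,v_1,w_1,w_2,v_2)$. The endpoints of $e_1=(v_1,w_2)$ and $e_2=(v_2,w_1)$ interleave along this cycle, so the two edges must cross each other in the complementary disk --- this is the crossing your first step should have produced. On the other hand, the \triplet containing $f$ (your $T^{(f)}$, the paper's $T_f$) forms a \facialKFive whose vertex set excludes $v_1$ and $v_2$ (otherwise Observation~\ref{obs:eitherOnlyTwoVerticesOrFace} would force $f_1$ or $f_2$ into it), and $(w_1,w_2)$ is a \crossing edge inside it; Corollary~\ref{col:edgesIncidentToCrossedEdgeDoNotCross} then forbids $(w_2,v_1)=e_1$ from crossing $(w_1,v_2)=e_2$. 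This contradiction is the paper's finish; it needs neither $e$ nor $T_e$, and it is the step your proposal should be repaired to use.
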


\begin{proof}[Sketch]
Assuming that $\mathcal{T}'$ is \notbad, we arrive at the configuration shown in Fig.~\ref{fig:shortTriplet_proof1}. Since $(w_1,w_2)$ is crossing, Corollary~\ref{col:edgesIncidentToCrossedEdgeDoNotCross} is violated.
\end{proof}

By Lemmas~\ref{lem:oneTripletFaceDisjoint} and~\ref{lem:notBothTripletsFaceDisjoint}, 
Case~C.\ref{c:1} applies for exactly one of the two triplets $T_1$ and $T_2$
and $T_1 \neq T_2$. Then, the other triplet complies with Case~C.\ref{c:3} or with Case~C.\ref{c:2}. For each possible combination we  first prove structural properties arising from the assumption that $\mathcal{T'}$ is \notbad, and then show that these new restrictions make $\mathcal{T}$ \bad. In the next two lemmas, we consider the setting, where one of $T_1$ and $T_2$ complies with Case~C.\ref{c:1} while the other one complies with~C.\ref{c:3}.  

\begin{figure}[t]


    \centering
    \begin{subfigure}[b]{0.4\textwidth}
    \includegraphics[page=2]{figures/triplet.pdf}
    \caption{}
    \label{fig:shortTriplet_proof1}
    \end{subfigure}
    \myhfil
    \begin{subfigure}[b]{0.35\textwidth}
    \includegraphics[page=3]{figures/triplet.pdf}
    \caption{}
    \label{fig:shortTriplet_proof2}
    \end{subfigure}
    
    \caption{Illustrations for the proofs of (a)~Lemma~\ref{lem:notBothTripletsFaceDisjoint}, and,  (b)~Lemma~\ref{lem:c1c3_conditions}.}
    \label{fig:triplet_2}
\end{figure}

\begin{restatable}[$\mathbf{\ast}$]{lemma}{cOneCThreeConditions}\label{lem:c1c3_conditions}
Let $\mathcal{T}'=\{T_1,T_2\}$, such that $T_1$ is face-disjoint from $T$ and $T_2$ shares only face $f_1$ with $T$. 
For $u \in V(T)$, if \begin{inparaenum}[(i)]
\item $d(u)>9$, or,
\item for every vertex $x\in S=N(u)\setminus V(T)$ it holds $|N(x)\cap S|\geq 2$,
\end{inparaenum} then $\mathcal{T}'$ is \bad.
\end{restatable}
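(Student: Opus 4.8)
\textbf{Proof plan for Lemma~\ref{lem:c1c3_conditions}.}
The plan is to assume, for contradiction, that $\mathcal{T}'=\{T_1,T_2\}$ is \notbad, and to derive enough structure in any \compliant \textemb $\emb$ extending $\emb_{\mathcal{T}'}$ to violate one of the earlier corollaries under each of the two hypotheses (i) and (ii). I begin by fixing the local picture: $T=\langle f_1,f,f_2\rangle$ as in Fig.~\ref{fig:triplet_base}, the triplet $T_1$ is face-disjoint from $T$ but contains the \crossing edge $e_1=(v_1,w_2)$, and $T_2$ shares exactly the face $f_1=(u,w_1,v_1)$ with $T$ while containing the \crossing edge $e_2=(v_2,w_1)$. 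Since $\mathcal{T}'$ is \notbad, by Observation~\ref{obs:crossBothFacesOrNone} applied to each of $e_1$ and $e_2$ I can read off precisely which faces each of $T_1,T_2$ occupies and in which regions $e_1,e_2$ are drawn; this pins down the two new facial $5$-cliques adjacent to $f_1$ and to the face on the far side of $e_1$, and in particular forces the existence of certain vertices outside $V(T)$ that are the remaining corners of those cliques.

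Next I track the neighborhood of the apex vertex $u\in V(T)$. The key is that $u$ has degree exactly $9$ unless hypothesis (i) grants $d(u)>9$, and in the optimal $2$-planar setting the edges around $u$ alternate one crossing-free edge, two \crossing edges, and so on, so the facial $5$-cliques incident to $u$ are tightly constrained. Under hypothesis (i), the extra degree at $u$ supplies an additional crossing-free edge incident to $u$ leaving $V(T)$; I would then apply Corollary~\ref{col:edgesIncidentToCrossedEdgeDoNotCross} to a \crossing edge of $T$ incident to $u$ together with this outgoing edge and the matching outgoing edge at the opposite endpoint, obtaining a forbidden crossing configuration. Under hypothesis (ii), the condition that every $x\in S=N(u)\setminus V(T)$ has at least two neighbors inside $S$ lets me close up the rotation around $u$: the clique structure forced by $T_2$ sharing $f_1$ compels these outside neighbors into \facialKFives that, by Property~\ref{obs:twoK5sShareOnly2Vertices}, cannot all share the required two vertices with the cliques of $T$ and $T_2$ simultaneously, so counting the \facialKFives around $u$ (equivalently, the degree $d(u)=9$ must be realized as exactly three crossing-free edges) produces an overfull or inconsistent assignment—again surfacing as a violation of Corollary~\ref{col:edgesIncidentToCrossedEdgeDoNotCross} or Corollary~\ref{lem:crossedEdgesShareAK5}.

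Concretely, in both cases I expect to arrive at a configuration resembling Fig.~\ref{fig:shortTriplet_proof2}, where a crossing-free edge incident to $u$ and leaving the local cluster is forced to cross an edge incident to the opposite endpoint of a \crossing edge of one of the triplets; since the relevant edge is \crossing, the hypotheses of Corollary~\ref{col:edgesIncidentToCrossedEdgeDoNotCross} are met and the crossing it forbids is nonetheless present, a contradiction. The contradiction shows $\mathcal{T}'$ is \bad.

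I anticipate the main obstacle to be bookkeeping the two distinct ways hypotheses (i) and (ii) constrain $N(u)$: hypothesis (i) acts directly by producing a spare crossing-free edge at $u$, whereas hypothesis (ii) acts indirectly, forcing the outside neighbors of $u$ to lie in a small number of \facialKFives so that the rotation around $u$ cannot accommodate both the clique of $T$ and the clique of $T_2$ without repeating a \facialKFive or violating Property~\ref{obs:twoK5sShareOnly2Vertices}. Ensuring that each branch genuinely funnels into the \emph{same} forbidden configuration of Corollary~\ref{col:edgesIncidentToCrossedEdgeDoNotCross}, rather than two unrelated contradictions, is the delicate part, and it is where I would spend most care in matching vertex labels to Fig.~\ref{fig:shortTriplet_proof2}.
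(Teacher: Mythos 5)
Your high-level frame (assume $\mathcal{T}'$ is \notbad, extract forced structure in a \compliant rotation scheme, contradict each hypothesis) matches the paper, but the mechanism you propose for both branches is not the one that works, and as sketched neither branch closes. For hypothesis (i), the paper does not refute $d(u)>9$ by exhibiting a forbidden crossing via Corollary~\ref{col:edgesIncidentToCrossedEdgeDoNotCross}. Instead, one first nails down the fifth vertex $z\notin V(T)$ of $T_2$ (via Observation~\ref{obs:eitherOnlyTwoVerticesOrFace}) and the fact that its facial $5$-clique is $\langle w_1,v_1,z,v_2,u\rangle$, forced because $e_2$ must cross $(v_1,u)$; then one observes that faces $f$ and $f_2$ must be assigned to two further, \emph{distinct} \triplets $T_f$ and $T_3$, and that the three \facialKFives of $T_2$, $T_f$, $T_3$ pairwise share a crossing-free edge incident to $u$ and completely fill the rotation around $u$. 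This pins $d(u)=9$ outright: there is simply no room for a spare edge, so no crossing needs to be produced and Corollary~\ref{col:edgesIncidentToCrossedEdgeDoNotCross} is never invoked. Your plan of using an ``additional crossing-free edge'' at $u$ to create a forbidden crossing has no argument behind it for \emph{why} that edge would be forced to cross anything.

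For hypothesis (ii), the decisive step you are missing is that the proof must exhibit a concrete vertex $x_1\in S$ with $|N(x_1)\cap S|\le 1$, directly contradicting (ii). The crossing-free boundary $5$-cycles of $T_f$ and $T_3$ supply the vertices $x_1,x_2,y_1,y_2$, and Property~\ref{obs:twoK5sShareOnly2Vertices} forces $S=\{x_1,x_2,y_1,y_2,z\}$ to consist of five distinct vertices and, since $T_1$ shares $w_2$ with $T_f$, at least one of $x_1,x_2$ to avoid $T_1$. The contradiction then comes from a planar separation argument: a boundary path $P_1$ of the facial $5$-clique of $T_1$ closes up with $(v_1,w_1)$ and $(w_1,u)$, $(u,w_2)$ into a crossing-free cycle $(w_2,P_1,v_1,w_1,u)$ that encloses $x_1$ while $z,y_1,y_2$ lie outside, so $x_1$ can only be adjacent to $x_2$ within $S$. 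Your proposed rotation-counting at $u$ combined with Corollary~\ref{col:edgesIncidentToCrossedEdgeDoNotCross} cannot reach this conclusion, because condition (ii) constrains adjacencies \emph{among the vertices of $S$}, not edges incident to $u$; no bookkeeping of the \facialKFives around $u$ alone can detect whether $x_1$ is adjacent to $y_1$ or $z$. Relatedly, the contradiction encoded in Fig.~\ref{fig:shortTriplet_proof2} is the count $|N(x_1)\cap S|\le 1$, not a forbidden crossing; Corollary~\ref{col:edgesIncidentToCrossedEdgeDoNotCross} is the engine of the neighboring Lemma~\ref{lem:notBothTripletsFaceDisjoint}, not of this one.
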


\begin{proof}[Sketch]
Assuming that $\mathcal{T}'$ is \notbad and $d(u)=9$, we arrive at the configuration shown in Fig.~\ref{fig:shortTriplet_proof2}. We can conclude that not all conditions of the lemma hold.
\end{proof}

\begin{restatable}[$\mathbf{\ast}$]{lemma}{cOneCThree}\label{lem:c1c3}
Let $\mathcal{T}=\{T\}$ and let $v \in V(T)$ such that $d(v) = 9$. If there exists a vertex $x\in S=N(v)\setminus V(T)$ such that $|N(x)\cap S|\leq 1$, then $\mathcal{T}$ is \bad.
\end{restatable}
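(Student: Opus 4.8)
The plan is to prove the contrapositive: assuming $\mathcal{T}=\{T\}$ is \notbad, I will show that \emph{every} vertex of $S=N(v)\setminus V(T)$ has at least two neighbors in $S$, which contradicts the hypothesis. So suppose there is a \compliant \textemb $\emb$ extending $\emb_\mathcal{T}$; then $T$ is a \facialKFive in $\emb$. Since $d(v)=9$ and the rotation around any vertex groups edges as one crossing-free edge followed by two crossing edges, $v$ is incident to exactly three crossing-free and six crossing edges. By Property~\ref{lem:crossingEdgesAreNotShared} each crossing edge at $v$ lies in a unique \facialKFive through $v$, and each such clique accounts for exactly the two diagonals of $v$; hence $v$ lies in exactly three \facialKFives, one of which is $T$. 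Call the other two $Q_1$ and $Q_2$.

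Next I would fix the local picture around $v$. In the crossing-free pentangulation $\pent$ vertex $v$ has degree three, so its three incident pentagonal faces are cyclically arranged and consecutive faces share one crossing-free edge at $v$. Thus the three boundary neighbors of $v$ are distributed so that $T$ shares one boundary edge $(v,a)$ with $Q_1$ and one boundary edge $(v,b)$ with $Q_2$, while $Q_1$ and $Q_2$ share the third boundary edge $(v,c)$. By Property~\ref{obs:twoK5sShareOnly2Vertices} we have $T\cap Q_1=\{v,a\}$ and $T\cap Q_2=\{v,b\}$, so $a,b\in V(T)$ but $c\notin V(T)$; likewise $Q_1\cap Q_2=\{v,c\}$, so the remaining vertices of $Q_1$ and $Q_2$ are all distinct. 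Writing $p,q$ for the two diagonal neighbors of $v$ in $Q_1$ and $r,s$ for those in $Q_2$, we obtain $S=\{c,p,q,r,s\}$, so in particular $|S|=5$.

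The crux is that each $Q_i$ is a $5$-clique, so any four of its vertices induce a $K_4$. Applied to the four non-$v$ vertices $\{a,p,q,c\}$ of $Q_1$, the three of them lying in $S$, namely $p$, $q$ and $c$, are mutually adjacent; symmetrically $c$, $r$ and $s$ are mutually adjacent inside $Q_2$. Hence $c$ is adjacent to all of $p,q,r,s$, and each of $p,q$ (resp.\ $r,s$) is adjacent to its sibling and to $c$. Therefore $|N(x)\cap S|\geq 2$ for every $x\in S$, which contradicts the assumed existence of an $x\in S$ with $|N(x)\cap S|\leq 1$. Consequently no \compliant \textemb extends $\emb_\mathcal{T}$, i.e.\ $\mathcal{T}$ is \bad.

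I expect the only real work to be in the middle paragraph: nailing down that exactly three \facialKFives meet at the degree-$9$ vertex $v$, that consecutive ones share a single crossing-free edge, and that $Q_1$ and $Q_2$ meet $v$ only in the common boundary neighbor $c$ (so that $|S|=5$ with the stated membership). Once this local structure is established, the closing $K_4$ argument is immediate, and it is insensitive to which of the five positions $v$ occupies inside the clique $T$.
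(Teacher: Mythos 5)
Your proof is correct, but it takes a genuinely different route from the paper's. You prove the contrapositive by reconstructing the entire neighborhood of the degree-$9$ vertex $v$: exactly three \facialKFives $T$, $Q_1$, $Q_2$ meet at $v$, their pairwise intersections are pinned down to single crossing-free edges via Property~\ref{obs:twoK5sShareOnly2Vertices} and the degree-$3$ structure of $v$ in the pentangulation, giving $S=\{c,p,q,r,s\}$ with $\{c,p,q\}$ and $\{c,r,s\}$ each forming triangles inside the cliques $Q_1$ and $Q_2$ --- hence $|N(x)\cap S|\geq 2$ for \emph{every} $x\in S$. The paper argues far more locally and directly: assuming $\mathcal{T}$ is \notbad, it considers just the one \facialKFive $T_x$ containing the edge $(v,x)$ for the hypothesized bad vertex $x$; its three remaining vertices are common neighbors of $v$ and $x$, and since $|N(x)\cap S|\leq 1$, at least two of them lie in $V(T)\setminus\{v\}$, so $T_x$ and $T$ share three vertices, contradicting Property~\ref{obs:twoK5sShareOnly2Vertices} in one step. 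Both arguments ultimately rest on that same property, but the paper's version is shorter and, notably, never actually uses $d(v)=9$ (which only matters for pairing this lemma with Lemma~\ref{lem:c1c3_conditions}), whereas your argument uses $d(v)=9$ essentially to get $|S|=5$. What your longer route buys is a stronger structural conclusion --- under the \notbad assumption, every vertex of $S$ has at least two neighbors in $S$, i.e.\ exactly the condition appearing in Lemma~\ref{lem:c1c3_conditions}(ii) --- at the cost of having to verify the cyclic arrangement of the three cliques and the distinctness of $c,p,q,r,s$, all of which you do correctly.
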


By Lemmas~\ref{lem:c1c3_conditions} and~\ref{lem:c1c3} it  remains to consider the case where one of $T_1$ and $T_2$ complies with Case~C.\ref{c:1} while the other complies with Case~C.\ref{c:2}.
So, we assume w.l.o.g. that $T_1$ is face-disjoint with $T$, while $T_2$ shares faces $f$ and $f_2$ with $T$. Recall that \crossing edge $e$ belongs to $T$. If $\mathcal{T}'=\{T_1,T_2\}$ is \notbad $e$ belongs to  a \triplet $T_e\neq T$ 
such that  $\mathcal{T}'\cup\{T_e\}$ is \notbad. 
Note that $T_e\neq T_2$ as otherwise vertex $v_1$ belongs to $T_2$ and $T_2=T$.
Hence,  either $T_e\neq T_1$ or $T_e=T_1$. 

The following four lemmas investigate the subcase $T_e\neq T_1$. Note that $T_1$ and $T_e$ are face-disjoint as otherwise  $\mathcal{T}'=\{T_1,T_2,T_e\}$ is \bad. 
The first two lemmas examine the scenario where $T_1$  contains vertex $w_1$ or $T_e$ contains vertex $u$ of $T$. 

\begin{restatable}[$\mathbf{\ast}$]{lemma}{containsVertexConditions}\label{lem:c1c2TeContainsVertex_conditions}
Let $\mathcal{T}'=\{T_1,T_2,T_e\}$, such that $T_1$ is face-disjoint from $T$, $T_2$ shares faces $f$ and $f_2$ with $T$, and $T_e$ is face-disjoint from $T_1$. Then:
\begin{itemize}
    \item[--] If 
    $T_1$ contains vertex $w_1$ of $T$ and additionally
    \begin{inparaenum}[(i)]
    \item $d(w_1)>9$, or, 
    \item there is a vertex $x\in S=N(w_1)\setminus V(T)$ such that $|N(x)\cap S|\geq 4$
    \end{inparaenum}, then $\mathcal{T}'$ is \bad. 
    \item[--] If 
    $T_e$ contains vertex $u$ of $T$ and additionally
    \begin{inparaenum}[(i)]
    \item $d(u)>9$, or, 
    \item there is a vertex $x\in S=N(u)\setminus V(T)$ such that $|N(x)\cap S|\geq 4$
    \end{inparaenum}, then $\mathcal{T}'$ is \bad. 
\end{itemize}
\end{restatable}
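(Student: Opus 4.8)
The plan is to assume the negation and reconstruct a rigid local structure. Suppose $\mathcal{T}'=\{T_1,T_2,T_e\}$ is \notbad, so that there is a \compliant \textemb $\emb$ of $G$ extending $\emb_{\mathcal{T}'}$ in which $T_1$, $T_2$ and $T_e$ are facial $5$-cliques. I treat the two bullets by the same method: I reconstruct the cyclic sequence (wheel) of facial $5$-cliques incident to the pertinent vertex — $w_1$ for the first bullet, $u$ for the second — and show that it is forced to close after exactly three cliques. This pins down the vertex's degree and its neighborhood, contradicting conditions (i) and (ii), respectively. Throughout I use that the faces and \crossing edges of $\emb_p$ are partitioned among the \triplets, so each face of $\emb_p$ is covered by exactly one clique of $\emb$.

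For the first bullet, since $e_1=(v_1,w_2)$ is assigned to $T_1$ and $e_2=(v_2,w_1)$ to $T_2$, vertex $w_1$ lies in both $T_1$ and $T_2$; as $(w_1,w_2)\in E(G_p)$ is a crossing-free pentagon edge of both cliques, $T_1$ and $T_2$ are consecutive around $w_1$ and share exactly $\{w_1,w_2\}$ by Property~\ref{obs:twoK5sShareOnly2Vertices}. The remaining crossing-free edges at $w_1$ are $(w_1,v_1)$ and $(w_1,u)$ (edges of $G_p$ from faces $f_1,f$ of Fig.~\ref{fig:triplet_base}). By Property~\ref{lem:crossingEdgesAreNotShared} the edge $(w_1,v_1)$ lies in exactly two facial $5$-cliques, one of which is $T_1$; since $T_1$ is face-disjoint from $T$, it cannot cover the face $f_1=(u,w_1,v_1)$ of $T$, so the second clique $T_3$ at $(w_1,v_1)$ covers $f_1$ and hence contains $u$ and $v_1$ (Observation~\ref{obs:eitherOnlyTwoVerticesOrFace}). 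Then $(w_1,u)$ is a pentagon edge shared by $T_3$ and $T_2$, and the wheel closes as $T_1\,|\,T_2\,|\,T_3\,|\,T_1$. Thus $w_1$ is incident to exactly three crossing-free edges, so $d(w_1)=9$, contradicting (i). Its nine neighbors split into $N(w_1)\cap V(T)=\{u,v_1,w_2,v_2\}$ and the five diagonal partners $S=N(w_1)\setminus V(T)$, distributed over $T_1$, $T_2$ and $T_3$.

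For condition (ii) I bound $|N(x)\cap S|$ for every $x\in S$. Each such $x$ is joined to $w_1$ by a crossing edge lying in exactly one of $T_1,T_2,T_3$, and the only adjacencies within $S$ forced by the three cliques are the within-clique ones. The remaining task is to exclude any further edge between two vertices of $S$ realized in a facial $5$-clique not incident to $w_1$: such an edge, together with the two crossing edges from its endpoints to $w_1$, would either duplicate a crossing edge (violating Property~\ref{lem:crossingEdgesAreNotShared}), force two facial $5$-cliques to share three vertices (violating Property~\ref{obs:twoK5sShareOnly2Vertices}), or contradict Corollary~\ref{col:edgesIncidentToCrossedEdgeDoNotCross} applied to the crossing edge $(w_1,x)$ inside its clique. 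Carrying this out shows that no $x\in S$ is adjacent to all four other vertices of $S$, so $|N(x)\cap S|\le 3$, contradicting (ii). The second bullet is symmetric: here the hypothesis $u\in T_e$, together with $u\in T_2$ (as $T_2$ shares with $T$ the faces $f,f_2$, both incident to $u$), places $u$ in two facial $5$-cliques, and the same face-disjointness argument identifies the third clique at $u$, forcing $d(u)=9$ and a sparse $S=N(u)\setminus V(T)$.

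The main obstacle is the neighbor-counting step above: the inequality $|N(x)\cap S|\le 3$ is not purely local, since an edge between two vertices of $S$ may a priori be realized in a facial $5$-clique not containing the special vertex, and is therefore not immediately controlled by the reconstructed wheel. The key leverage is that the special vertex has degree exactly $9$ with a completely rigid incident wheel, so every such hypothetical edge is driven into one of the forbidden configurations listed above; making this exhaustive — and, for the second bullet, correctly handling the two triangulation chords $(u,w_1),(u,w_2)$, which are edges of $G_p$ but are drawn as crossing edges inside the cliques at $u$ — is where the bulk of the (deferred) case analysis lies.
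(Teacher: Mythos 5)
Your first half is sound and matches the paper: identifying the \triplet $T_3$ that covers $f_1$ (forced because $T_1$ is face-disjoint from $T$), observing that $T_1$, $T_2$, $T_3$ pairwise share a crossing-free edge incident to $w_1$ so that the wheel around $w_1$ closes after three \facialKFives, and concluding $d(w_1)=9$ against condition (i) --- this is essentially the paper's argument, including the identification of $S=\{x_1,x_2,y_1,y_2,z\}$ distributed over the three cliques.

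The gap is in the step you yourself flag as deferred, and the sketch you give for it cannot be completed. You propose to exclude \emph{every} edge between two vertices of $S$ that is realized in a \facialKFive not incident to $w_1$, by driving each such edge into a violation of Property~\ref{lem:crossingEdgesAreNotShared}, Property~\ref{obs:twoK5sShareOnly2Vertices} or Corollary~\ref{col:edgesIncidentToCrossedEdgeDoNotCross}. That target is too strong and in fact false: an edge such as $(x_2,y_1)$ or $(z,x_1)$ lying in some distant \facialKFive creates no duplicated crossing edge, forces no two cliques to share three vertices, and gives no pair of edges known to cross, so none of the cited local facts applies; indeed the paper's own bound allows each vertex of $S$ up to \emph{three} neighbors in $S$, whereas the wheel cliques account for at most one, so extra $S$--$S$ edges genuinely occur. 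If your exhaustive analysis ``succeeded'' it would prove $|N(x)\cap S|\le 1$, contradicting what the companion Lemma~\ref{lem:c1c2TeContainsVertex} presupposes can happen. The telling symptom is that your argument for the first bullet never uses $T_e$ or the hypothesis that $T_e$ is face-disjoint from $T_1$, while the paper's proof uses it essentially and only for one specific pair: since $T_e$ is a \facialKFive in the extended partial rotation scheme, its pentagon boundary yields two crossing-free paths $P_e,P_e'$ from $v_1$ to $v_2$, and $P_e$ concatenated with the crossing-free path $(v_2,u,w_1,v_1)$ forms a crossing-free cycle with $y_2$ in its interior and $x_1$ in its exterior (Fig.~\ref{fig:c1c2:1}). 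Hence $x_1\not\sim y_2$, and every other vertex of $S$ is adjacent to at most one of $x_1$ and $y_2$, which already gives $|N(x)\cap S|\le 3$ for all $x\in S$ and refutes condition (ii). So the missing idea is topological separation via the boundary of $T_e$, not an exhaustive local forbidden-configuration analysis; without it the decisive inequality is unproved, and with your plan as stated it is unprovable.
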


\begin{proof}[Sketch]
Assuming that $\mathcal{T}'$ is \notbad and $d(w_1)=9$, in the first case, we arrive at the configuration shown in Fig.~\ref{fig:shortProofs:c1c2ContainsVertex}. We  conclude that not all conditions hold.
\end{proof}

\begin{figure}[t]
\centering
    \begin{subfigure}[b]{0.28\textwidth}
    \includegraphics[page=1]{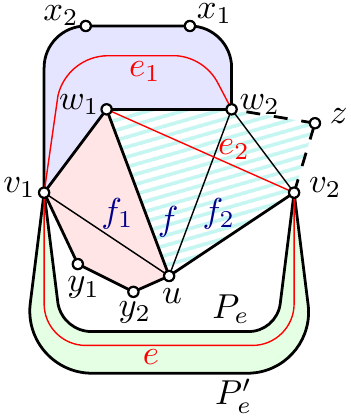}
    \caption{}
    \label{fig:shortProofs:c1c2ContainsVertex}
    \end{subfigure}
    \myhfil
    \begin{subfigure}[b]{0.32\textwidth}
    \includegraphics[page=8]{figures/optimal_c1c2.pdf}
    \caption{}
    \label{fig:shortProofs:c1c2NoVertex}
    \end{subfigure}
    \myhfil
    \begin{subfigure}[b]{0.35\textwidth}
    \includegraphics[page=3]{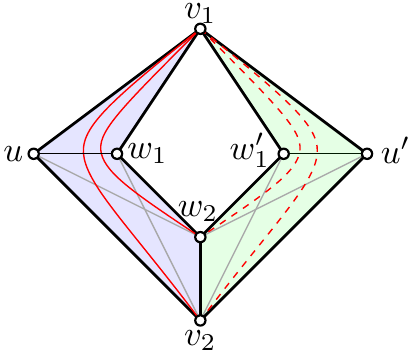}
    \caption{}
    \label{fig:shortProofs:doubleK5}
    \end{subfigure}
    \caption{Illustration for the proofs of Lemmas 
    (a)~\ref{lem:c1c2TeContainsVertex_conditions},
    (b)~\ref{lem:c1c2TeNoVertex}, and,
    (c)~\ref{lem:noDoubleK5}.}
    \label{fig:shortProofs}
\end{figure}

\begin{restatable}[$\mathbf{\ast}$]{lemma}{containsVertex}\label{lem:c1c2TeContainsVertex}
Let $\mathcal{T}=\{T\}$ and let $v \in V(T)$ with $d(v)=9$. If for every vertex $x\in S=N(v)\setminus V(T)$ it holds $|N(x)\cap S|\leq 3$, then $\mathcal{T}$ is \bad.
\end{restatable}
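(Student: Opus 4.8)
The plan is to argue by contradiction. Suppose $\mathcal{T}=\{T\}$ is \notbad, i.e.\ $\emb_\mathcal{T}$ extends to a \compliant \textemb $\emb$ in which $T$ is a \facialKFive containing $v$. I will show that the hypothesis forces some vertex of $S$ to have four neighbours in $S$, contradicting $|N(x)\cap S|\le 3$ for all $x\in S$. The first step is to fix the local structure at $v$. Since in any \textemb the degree of a vertex is a multiple of three and its incident edges alternate as one crossing-free edge followed by two crossing edges, a vertex of degree $9$ is incident to exactly three crossing-free and six crossing edges; the three crossing-free edges cyclically separate the three pentagonal faces around $v$, so $v$ lies in exactly three \facialKFives. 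One of them is $T$; call the other two $C_1$ and $C_2$. By Property~\ref{lem:crossingEdgesAreNotShared} each crossing-free edge at $v$ is shared by the two of $T,C_1,C_2$ that it separates, so these three \facialKFives form a wheel around $v$.

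Next I would determine how $N(v)$ splits between $V(T)$ and $S$. As $V(T)$ is a $5$-clique, the four vertices $V(T)\setminus\{v\}$ are neighbours of $v$, whence $|S|=d(v)-4=5$. Independently of whether $v$ is the central vertex $u$ of $T$ or one of the rim vertices, $T$ contributes exactly two crossing-free and two crossing edges at $v$ (all four leading to $V(T)$). Therefore, of the three crossing-free edges at $v$ exactly one, say $(v,s)$, leaves $V(T)$, and of the six crossing edges at $v$ exactly four leave $V(T)$; these five edges reach all five vertices of $S$, so $s$ is the unique crossing-free neighbour of $v$ in $S$. Since the two crossing-free edges of $T$ at $v$ are shared with $C_1$ and $C_2$, the remaining crossing-free edge $(v,s)$ is precisely the one separating $C_1$ from $C_2$, and hence $s\in V(C_1)\cap V(C_2)$.

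It remains to count neighbours of $s$ inside $S$. By Property~\ref{obs:twoK5sShareOnly2Vertices}, each of the pairwise intersections $T\cap C_1$, $T\cap C_2$ and $C_1\cap C_2$ consists of exactly the two endpoints of the shared crossing-free edge; in particular $V(C_1)\cap V(C_2)=\{v,s\}$ and each of $C_1,C_2$ contains, besides $v$ and $s$, exactly two further vertices that lie outside $V(T)$, i.e.\ in $S$. As $C_1$ and $C_2$ are $5$-cliques containing $s$, these four vertices are all neighbours of $s$, and they are distinct because $C_1$ and $C_2$ meet only in $\{v,s\}$. Thus $|N(s)\cap S|\ge 4$, the desired contradiction. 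I expect the main obstacle to be the uniform bookkeeping behind the wheel structure: verifying that $T$ always contributes two crossing-free and two crossing edges at $v$ regardless of the role of $v$ in $T$, and that the vertices of $C_1$ and $C_2$ lying outside $T$ are exactly the elements of $S$ other than $s$ — both of which follow cleanly once the wheel of three \facialKFives at $v$ is established.
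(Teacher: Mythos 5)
Your proof is correct and follows essentially the same route as the paper's: both exploit that a degree-$9$ vertex $v$ lies in exactly three \facialKFives, one of which is $T$, and that the common vertex of the other two (your $s$, the paper's $c$) then has at least $6$ common neighbours with $v$, of which at most two lie in $V(T)$ by Property~\ref{obs:twoK5sShareOnly2Vertices}, forcing $|N(s)\cap S|\geq 4$. The only difference is presentational — you derive the wheel of three cliques and locate $s$ via the crossing-free/crossing edge bookkeeping at $v$, where the paper simply picks the two other \facialKFives $T'$, $T''$ and counts $|N(v)\cap N(c)|\geq 6$ directly — so the substance matches.
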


The next two lemmas consider the scenario, where $T_1$ does not contain vertex $w_1$ and $T_e$ does not contain vertex $u$ of \triplet $T$.

\begin{restatable}[$\mathbf{\ast}$]{lemma}{noVertexConditions}\label{lem:c1c2TeNoVertex_conditions}
Let $\mathcal{T}'=\{T_1,T_2,T_e\}$, such that $T_1$ is face-disjoint from $T$, $T_2$ shares faces $f$ and $f_2$ with $T$, and $T_e$ is face-disjoint from $T_1$. Assume that $T_1$ does not contain vertex $w_1$ and $T_e$ does not contain vertex $u$ of $T$. If there is no \triplet $T_{f_1}$ that contains $f_1$ such that \begin{inparaenum}[(i)]
\item $T_{f_1}$ shares only vertices of $f_1$ with $T$, and, \item $T_{f_1}$ is face-disjoint from all of $T_1$, $T_2$ and $T_e$\end{inparaenum}, then $\mathcal{T}'$ is \bad.
\end{restatable}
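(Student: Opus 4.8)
The plan is to argue by contradiction. I would assume $\mathcal{T}'=\{T_1,T_2,T_e\}$ is \notbad, so that $\emb_{\mathcal{T}'}$ extends to a \compliant \textemb $\emb$ of $G$, and then construct precisely the \triplet whose non-existence is hypothesized, obtaining a contradiction. First I would note that $T$ is not a \facialKFive in $\emb$: its \crossing edges $e_1=(v_1,w_2)$, $e_2=(v_2,w_1)$ and $e=(v_1,v_2)$ are assigned to $T_1$, $T_2$ and $T_e$, and by Property~\ref{lem:crossingEdgesAreNotShared} each \crossing edge belongs to exactly one \facialKFive, so none of them can also be assigned to $T$. Hence the triangular face $f_1=(u,w_1,v_1)$ of $\emb_p$ is covered in $\emb$ by some \facialKFive, i.e. there is a \triplet $T_{f_1}\neq T$ with $f_1$ among its faces; in particular $u,w_1,v_1\in V(T_{f_1})$. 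This $T_{f_1}$ is the intended contradicting object, and it remains to verify conditions (i) and (ii) of the statement for it.

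For condition (i) I would show $V(T_{f_1})\cap V(T)=\{u,w_1,v_1\}$, that is, $w_2\notin V(T_{f_1})$ and $v_2\notin V(T_{f_1})$. If $w_2\in V(T_{f_1})$, then $v_1,w_2\in V(T_{f_1})$, so the \crossing edge $e_1=(v_1,w_2)$ is a diagonal of the \facialKFive $T_{f_1}$ and is assigned to it; since $e_1$ is assigned to $T_1$ and a \crossing edge lies in a unique \facialKFive (Property~\ref{lem:crossingEdgesAreNotShared}), this forces $T_{f_1}=T_1$, contradicting that $T_1$ is face-disjoint from $T$ while $T_{f_1}$ contains $f_1$. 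Analogously, since $w_1\in V(T_{f_1})$ always, $v_2\in V(T_{f_1})$ would make the \crossing edge $e_2=(v_2,w_1)$ a diagonal of $T_{f_1}$ and force $T_{f_1}=T_2$; but $T_2$ shares only $f$ and $f_2$ with $T$, hence does not contain $f_1$, a contradiction. Thus condition (i) holds.

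Condition (ii) then follows quickly: $T_{f_1}$ is distinct from each of $T_1$, $T_2$, $T_e$. It differs from $T_1$ and $T_2$ because neither of them contains $f_1$, and from $T_e$ because $v_2\in V(T_e)$ while $v_2\notin V(T_{f_1})$ by (i). As the \facialKFives of a \compliant \textemb partition the faces of $\emb_p$, distinct \facialKFives are face-disjoint, so $T_{f_1}$ is face-disjoint from $T_1$, $T_2$ and $T_e$. Consequently $T_{f_1}$ contains $f_1$ and satisfies (i) and (ii), contradicting the hypothesis that no such \triplet exists; therefore $\mathcal{T}'$ is \bad. Here the case hypotheses $T_1\not\ni w_1$ and $T_e\not\ni u$ serve to isolate this sub-case from Lemma~\ref{lem:c1c2TeContainsVertex_conditions}, in which $f_1$ could instead be absorbed by one of the existing \triplets.

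The step I expect to be most delicate is condition (i): one has to justify that whenever two endpoints of one of $e_1$, $e_2$ lie in $V(T_{f_1})$, the corresponding edge is necessarily one of the five crossing diagonals of $T_{f_1}$ rather than a crossing-free pentagon edge. This is where I would invoke that $e_1$, $e_2$, $e$ genuinely cross in $\emb$ (being diagonals of $T_1$, $T_2$, $T_e$), so they cannot appear as crossing-free boundary edges, after which the uniqueness of crossing-edge assignment applies cleanly. Once this dichotomy and the commitments of $e_1,e_2,e$ to their respective \triplets are pinned down, conditions (i) and (ii) and the final contradiction are straightforward.
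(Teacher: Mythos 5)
Your proposal is correct and follows essentially the same route as the paper: assume $\mathcal{T}'$ is \notbad, take the (unique) \triplet $T_{f_1}$ covering $f_1$ in a \compliant extension, and verify conditions (i) and (ii) for it. The only minor variation is that the paper gets $v_2,w_2\notin V(T_{f_1})$ in one stroke from Property~\ref{obs:twoK5sShareOnly2Vertices} applied to $T_{f_1}$ and $T_2$ (which already share $u$ and $w_1$), and gets $T_{f_1}\neq T_e$ from the hypothesis $u\notin V(T_e)$, whereas you argue via uniqueness of crossing-edge assignment (Property~\ref{lem:crossingEdgesAreNotShared}) on $e_1$ and $e_2$ and derive $T_{f_1}\neq T_e$ from (i) --- both justifications are sound, including your careful observation that $e_1,e_2$ cross in the extension and so cannot be pentagon boundary edges of $T_{f_1}$.
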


\begin{restatable}[$\mathbf{\ast}$]{lemma}{noVertex}\label{lem:c1c2TeNoVertex}
Let \triplets $T_1$, $T_2$, $T_e$ be pairwise disjoint such that $T_1$ is face-disjoint from $T$, $T_2$ shares faces $f$ and $f_2$ with $T$, $T_e$ is face-disjoint from $T_1$, $T_1$ does not contain vertex $w_1$ and $T_e$ does not contain vertex $u$ of $T$. If there is a \triplet $T_{f_1}$ that contains $f_1$ such that  \begin{inparaenum}[(i)]
\item $T_{f_1}$ shares only vertices of $f_1$ with $T$, and, 
\item $T_{f_1}$ is face-disjoint from all of $T_1$, $T_2$ and $T_e$,\end{inparaenum} then $\mathcal{T}=\{T\}$ is \bad. 
\end{restatable}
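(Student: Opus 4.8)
The plan is to argue by contradiction. Suppose $\mathcal{T}=\{T\}$ is \notbad, so there is a \compliant \textemb $\emb$ extending $\emb_\mathcal{T}$; in it $T=\langle v_1,w_1,w_2,v_2,u\rangle$ is a \facialKFive. Recall what this means geometrically: the three triangular faces $f_1,f,f_2$ merge into a single pentagonal face of $\pent$ bounded by $(v_1,w_1,w_2,v_2,u)$, the two former triangulation chords $(u,w_1)$ and $(u,w_2)$ become \crossing edges belonging to $T$, and $e,e_1,e_2$ are the remaining three chords of the pentagram drawn inside this face. The key global fact I will exploit is that in $\emb$ the faces of $\emb_p$ and the \crossing edges are partitioned among the \facialKFives so that each lies in \emph{exactly one} of them; in particular $f_1$, $f$ and $f_2$ are all consumed by $T$.

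Next I would locate $T_{f_1}$ relative to $T$. Since $T_{f_1}$ contains $f_1=(u,w_1,v_1)$ but shares only vertices of $f_1$ with $T$, it cannot contain face $f=(u,w_2,w_1)$, as that would add $w_2\notin\{u,w_1,v_1\}$; hence its two further faces sit across the edges $(w_1,v_1)$ and/or $(v_1,u)$ of $f_1$, which are \emph{boundary} edges of $T$'s pentagon. Consequently the two private vertices $p,q$ of $T_{f_1}$ lie outside $T$'s pentagon. A short case distinction on whether $f_1$ is the middle or an end face of $T_{f_1}$, and on which of $u,w_1,v_1$ is its apex, shows that the three \crossing edges of $T_{f_1}$ join $\{u,w_1,v_1\}$ to $\{p,q\}$ (together with $(p,q)$), and that in every case at least one of them is incident to a vertex of $f_1$ that carries a \crossing chord of $T$'s pentagram, namely $u$ or $w_1$.

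I would then derive the contradiction from such a \crossing edge, say $(w_1,p)$ with $p\notin V(T)$. Because $f_1$ has been absorbed into the pentagonal face of $T$, the vertex $p$ is trapped outside that face, so $(w_1,p)$ must be routed entirely in the region exterior to $T$'s pentagon. By Property~\ref{lem:crossingEdgesAreNotShared} it belongs to a unique \facialKFive $C$, and $C\neq T$ since $p\notin V(T)$; as $C$ contains $w_1$, Property~\ref{obs:twoK5sShareOnly2Vertices} forces $|V(C)\cap V(T)|\le 2$, so $C$ reuses none of $f_1,f,f_2$. Tracking the two faces of $T_{f_1}$ other than $f_1$ by Observation~\ref{obs:crossBothFacesOrNone}, these faces together with the crossing edges of $T_{f_1}$ must be re-hosted by cliques disjoint from $T,T_1,T_2,T_e$; using that $T_1$ avoids $w_1$ and $T_e$ avoids $u$ to close off the local rotations at $w_1$ and $u$, every candidate host $C$ is forced either to reuse a face already assigned to $T$ (violating the face partition), or to share a third vertex with $T$ (violating Property~\ref{obs:twoK5sShareOnly2Vertices}), or, in the last configuration, to demand a crossing forbidden by Corollary~\ref{col:edgesIncidentToCrossedEdgeDoNotCross} applied to the \crossing chord $e_2=(w_1,v_2)$ of $T$ and the outside neighbors. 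In each branch we reach a contradiction, so $\emb$ cannot exist and $\mathcal{T}=\{T\}$ is \bad.

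The main obstacle is the last step: once $f_1$ is swallowed by $T$, I must certify that \emph{none} of the \crossing edges of $T_{f_1}$ can be re-hosted, and this requires a disciplined enumeration of the positions of $T_{f_1}$ (middle versus end face; apex $u$, $w_1$, or $v_1$) and of which of its \crossing edges is examined, combined with the rigidity of the rotations at $u$ and $w_1$ supplied by the non-containment hypotheses on $T_1$ and $T_e$. Keeping this case analysis exhaustive while invoking Property~\ref{obs:twoK5sShareOnly2Vertices}, the face partition, and Corollary~\ref{col:edgesIncidentToCrossedEdgeDoNotCross} uniformly across the cases is where the real work lies; the hypotheses are exactly the structural conclusion extracted in Lemma~\ref{lem:c1c2TeNoVertex_conditions}, which is what makes the trapping of $p$ and $q$ unavoidable.
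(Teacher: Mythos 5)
Your opening moves agree with the paper's: assume $\mathcal{T}=\{T\}$ is \notbad, note that the pentagon of $T$ absorbs $f_1,f,f_2$ and is internally full, observe that $T_{f_1}$ cannot contain $f$ (it would pick up $w_2$), so its further faces lie across $(v_1,w_1)$ or $(v_1,u)$ and its private vertices are trapped outside $T$'s pentagon, forcing a \crossing edge from $V(f_1)$ to a private vertex. But the decisive step is missing: the sentence ``every candidate host $C$ is forced either to reuse a face already assigned to $T$, or to share a third vertex with $T$, or to demand a forbidden crossing'' is an unperformed enumeration, and your closing paragraph concedes that this is where the real work lies. Worse, the premise of that enumeration is unfounded: nothing in the hypotheses makes $T_1$, $T_2$, $T_e$ \facialKFives in $\emb$ --- they are merely combinatorial \triplets --- so the \facialKFive hosting $(w_1,p)$ need not be face-disjoint from them, and ``using that $T_1$ avoids $w_1$ and $T_e$ avoids $u$ to close off the local rotations'' names no mechanism by which non-containment in a combinatorial \triplet constrains the rotation in $\emb$. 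A host for $(w_1,p)$ sharing at most two vertices with $T$ is in general perfectly consistent with Properties~\ref{obs:twoK5sShareOnly2Vertices} and~\ref{lem:crossingEdgesAreNotShared}, so your purely combinatorial re-hosting scheme has no reason to terminate in a contradiction.

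What the paper actually does, and what your sketch lacks, is a quantitative routing argument against the fixed planar skeleton $\emb_p$: the boundary $5$-cycles of $T_1$ and $T_e$ yield two pairs of paths $P_1,P_1'$ (joining $v_1$ and $w_2$) and $P_e,P_e'$ (joining $v_1$ and $v_2$) in $G_p$, nested around $v_1$ between $(v_1,w_1)$ and $(v_1,u)$; these exist and act as barriers precisely because $T_1$ and $T_e$ are \triplets of $\emb_p$, independently of whether they are facial in $\emb$. Taking $u'$ to be the third vertex of the face of $T_{f_1}$ across $(v_1,w_1)$, the \crossing edge $(u,u')$ must cross an edge of each of $P_1'$, $P_e'$ and $P_e$ when $P_1'$ and $P_e'$ are edge-disjoint --- three crossings, contradicting $2$-planarity; when they share an edge $(v_1,v')$, the two admissible crossings pin down the \facialKFive $\langle v_1,u',v',w',u\rangle$, and the fifth vertex $z$ of $T_{f_1}$ then lies in a region where $(u,z)$ or $(w_1,z)$ would need at least three crossings. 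It is exactly in these $3$-crossing counts that the hypotheses (pairwise disjointness, $w_1\notin T_1$, $u\notin T_e$, and the face-disjointness of $T_{f_1}$ from $T_1,T_2,T_e$) are consumed; your proposal invokes them only nominally, so the gap is genuine and, as sketched, your route does not close without importing this barrier-path machinery.
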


\begin{proof}[Sketch]
Assuming that $\mathcal{T}$ is \notbad, we arrive at the configuration shown in Fig.~\ref{fig:shortProofs:c1c2NoVertex} where the position of $z$ is not fixed. Edges $(u,z)$ and $(w_1,z)$ belong to $T_{f_1}$ but at least one has three crossings, e.g. edge $(u,z)$ in Fig.~\ref{fig:shortProofs:c1c2NoVertex}.
\end{proof}

For the second subcase, where $T_1=T_e$, 
Property~\ref{obs:twoK5sShareOnly2Vertices} assures that $T_1$ and $T_2$ share only vertices $v_2$ and $w_2$, as otherwise
$\mathcal{T}'=\{T_1=T_e,T_2\}$ would be \bad. 
As indicated by the next lemma, no further restrictions (imposed by the assumption that $\mathcal{T}'$ is \notbad) are needed to 
prove that $\mathcal{T}=\{T\}$ is \bad.

\begin{restatable}[$\mathbf{\ast}$]{lemma}{doubleKFive}\label{lem:noDoubleK5}
Let $\mathcal{T}=\{T\}$. Assume that $T_1$ is face-disjoint from $T$, that $T_2$ shares faces $f$ and $f_2$ with $T$, and $T_e=T_1$. If $T_2$ has exactly two common vertices with $T_1$, then $\mathcal{T}$ is \bad.
\end{restatable}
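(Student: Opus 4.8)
The plan is to assume for contradiction that $\mathcal{T}=\{T\}$ is \notbad, so that some \compliant \textemb $\emb$ extends $\emb_\mathcal{T}$ and makes $T=\langle f_1,f,f_2\rangle$ a \facialKFive; in particular $e=(v_1,v_2)$, $e_1=(v_1,w_2)$ and $e_2=(v_2,w_1)$ are assigned to $T$ and appear in the pentagram inside the pentagon $(u,v_1,w_1,w_2,v_2)$. I would then read off the combinatorial positions of $T_1$ and $T_2$ that the hypotheses force, assembling the configuration of Fig.~\ref{fig:shortProofs:doubleK5}, and extract a contradiction from it.

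First I would pin down $T_1=T_e$. Since $T_1$ contains the two crossing edges $e$ and $e_1$, which share the endpoint $v_1$, the vertex $v_1$ is the one lying opposite the pentagon edge $(v_2,w_2)$ in $T_1$; hence $T_1$ has vertex set $\{v_1,v_2,w_2,a_1,b_1\}$ for two fresh vertices $a_1,b_1$, it shares the three vertices $v_1,v_2,w_2$ with $T$, its apex is $a_1$, and its face incident to $(v_2,w_2)$ is the $\emb_p$-face $(a_1,v_2,w_2)$ lying across the crossing-free edge $(v_2,w_2)$ from $f_2$. Symmetrically, since $T_2$ shares $f$ and $f_2$ with $T$ and contains $e_2$, it has apex $u$, a fresh fifth vertex $z$, and its remaining face $g=(u,v_2,z)$ sits outside the pentagon of $T$, across the crossing-free edge $(u,v_2)$. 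Together with the hypothesis that $T_1$ and $T_2$ meet in exactly $\{v_2,w_2\}$, these pieces give the \doubleKfive configuration.

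Next I would derive the contradiction from the placement of $a_1$. As $(v_2,w_2)$ is crossing-free and carries the \facialKFive $T$ on the side of $f_2$, by Property~\ref{lem:crossingEdgesAreNotShared} its other incident \facialKFive $C'$ is the one containing the face $(a_1,v_2,w_2)$, so $a_1,v_2,w_2\in C'$ and $C'$ lies strictly on the far side of $(v_2,w_2)$. On the other hand $v_1$, being incident to the crossing edges $e,e_1$ assigned to $T$, lies on the near side together with $T$'s interior, separated from the far side by the crossing-free edges $(v_2,w_2)$ and $(u,v_2)$ (the latter blocked on the outside by $T_2$'s face $g$). I would argue that the pentagon edges $(a_1,v_1)$ and $(b_1,v_1)$ of $T_1$, which would connect $v_1$ to the far-side vertices $a_1,b_1$, cannot be realized in $\emb$ without crossing one of these crossing-free edges or forcing a third crossing onto a pentagram edge of $T$; equivalently, $C'$ would then be forced to contain $v_1$, so that $C'$ and $T$ share the three vertices $v_1,v_2,w_2$, contradicting Property~\ref{obs:twoK5sShareOnly2Vertices}. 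Either way $\emb$ cannot exist, so $\mathcal{T}=\{T\}$ is \bad.

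The main obstacle will be this last step: turning the intuition that $a_1$ and $v_1$ lie on opposite sides of the crossing-free edge $(v_2,w_2)$ into a rigorous crossing count, since the exact attachment of the fresh vertices $a_1,b_1,z$ around $v_2$ and $w_2$ is not fixed in advance and every routing of the $T_1$-edges around the pentagon of $T$ must be excluded. I expect to control this with the rigid alternation in an \textemb---every crossing-free edge at a vertex is followed by exactly two crossing edges---which bounds how $a_1,b_1,z$ may attach at $v_2$ and $w_2$ and thereby isolates the over-crossed edge or the forbidden crossing with $(v_2,w_2)$, in the spirit of Corollary~\ref{col:edgesIncidentToCrossedEdgeDoNotCross}.
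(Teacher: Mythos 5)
Your first phase agrees with the paper: from $e,e_1\in T_1$ and face-disjointness you get that $T$ and $T_1$ share exactly the three vertices $v_1,v_2,w_2$ (freshness of $a_1,b_1$ and of $z$ coming from the exactly-two-common-vertices hypothesis, since $T_2\supseteq\{u,w_1,w_2,v_2\}$), which is the configuration of Fig.~\ref{fig:shortProofs:doubleK5} with $a_1=u'$, $b_1=w_1'$. One small omission there: $T_2$'s third face need not be $(u,v_2,z)$ across $(u,v_2)$; it could instead lie across $(w_1,w_2)$, a case the paper absorbs into a w.l.o.g.\ renaming of $u\leftrightarrow w_1$ and $u'\leftrightarrow w_1'$ before placing $z$ outside the $G_p$-cycle $Z=(v_1,u,v_2,u')$.

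The contradiction step, however, has a genuine gap, and the mechanism you propose cannot be repaired. The edges $(a_1,v_1)$ and $(b_1,v_1)$ are pentagon edges of the \triplet $T_1$, hence edges of $G_p$: in any \compliant rotation scheme they are drawn according to the planar embedding $\emb_p$, where they exist with no crossings at all. Your separation picture is false because $v_1$ lies \emph{on} the crossing-free pentagon of $T$ (and on $Z$), not in a region cut off from $a_1,b_1$; so there is no family of routings of these edges to exclude, and no forced crossing with $(v_2,w_2)$ or $(u,v_2)$. Likewise the fallback claim that the \facialKFive $C'$ across $(v_2,w_2)$ must contain $v_1$ is not merely unproven but wrong: in the forced configuration $C'=\langle z,u',w_1',w_2,v_2\rangle$, i.e.\ it picks up $z$, not $v_1$. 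The paper's contradiction runs entirely through the vertex $z$, which your sketch names but never uses: since $z$ lies outside $Z$ while $w_1,w_2$ lie inside, the $5$-clique edges $(z,w_1)$ and $(z,w_2)$ are \crossing; budgeting their at most two crossings against the crossing-free boundary of $T$ and invoking Property~\ref{lem:crossingTwoEdgesBelongToSameK5} forces $(z,w_1)$ to cross $(v_1,u')$ and $(v_1,w_1')$, and then $(z,w_2)$ to cross $(v_2,u')$ and $(v_2,w_1')$, yielding two \facialKFives $\langle z,v_1,w_1,w_1',u'\rangle$ and $\langle z,u',w_1',w_2,v_2\rangle$ that share the three vertices $z,u',w_1'$, contradicting Property~\ref{obs:twoK5sShareOnly2Vertices}. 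Note also that the configuration $T\cup T_1$ by itself carries no contradiction (the lemma's hypothesis on $T_2$ is essential), so an argument that, like yours, only uses $a_1,b_1$ and the clique across $(v_2,w_2)$ cannot succeed, and the alternation-counting at $v_2,w_2$ you hope to fall back on will not close the gap.
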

\begin{proof}[Sketch]
Assuming that $\mathcal{T}$ is \notbad,  we obtain the configuration in Fig.~\ref{fig:shortProofs:doubleK5}. Here, two \facialKFives have three common vertices  contradicting Property~\ref{obs:twoK5sShareOnly2Vertices}.
\end{proof}

In the following two lemmas, we summarize our findings from this section.

\begin{restatable}[$\mathbf{\ast}$]{lemma}{noTwoTriplets}\label{lem:no2triplets}
Let $G$ be optimal $2$-planar and let $\mathcal{T} = \{T\}$. $\mathcal{T}$ is \notbad if and only if every set $\mathcal{T}'=\{T_1,T_2\}$ is \bad.
\end{restatable}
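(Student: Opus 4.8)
The plan is to prove the two implications separately. The forward direction ``$\mathcal{T}$ \notbad $\Rightarrow$ every $\mathcal{T}'$ \bad'' is obtained by assembling the case analysis of Lemmas~\ref{lem:oneTripletFaceDisjoint}--\ref{lem:noDoubleK5} into the single statement that, for every admissible $\mathcal{T}'=\{T_1,T_2\}$ (with $e_1\in T_1\neq T$, $e_2\in T_2\neq T$, possibly $T_1=T_2$), at least one of $\mathcal{T}=\{T\}$ and $\mathcal{T}'$ is \bad. The backward direction instead uses that $G$ is optimal $2$-planar and therefore, by Lemma~\ref{lem:crossingConfigurationChooseEmbed} and the surrounding discussion, admits a \compliant \textemb $\emb$ from which the assignment of $e_1$, $e_2$ and $e$ can be read off directly.

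For the forward direction I would first dispatch the degenerate situation in which one of $e_1,e_2,e$ belongs to no \triplet other than $T$: there $e_1$, $e_2$, $e$ are forced onto $T$, no admissible $\mathcal{T}'$ exists, and the right-hand side holds vacuously. Otherwise I fix an arbitrary $\mathcal{T}'$ and prove the contrapositive ``$\mathcal{T}'$ \notbad $\Rightarrow$ $\mathcal{T}$ \bad''. By Lemmas~\ref{lem:oneTripletFaceDisjoint} and~\ref{lem:notBothTripletsFaceDisjoint}, if $\mathcal{T}'$ is \notbad then exactly one of $T_1,T_2$ is face-disjoint from $T$ and $T_1\neq T_2$; up to the symmetry of $T_1,T_2$ I assume $T_1$ is face-disjoint (Case~C.\ref{c:1}), so $T_2$ complies with Case~C.\ref{c:3} or Case~C.\ref{c:2}. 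In Case~C.\ref{c:3}, the contrapositive of Lemma~\ref{lem:c1c3_conditions} forces $d(u)=9$ together with a vertex $x\in N(u)\setminus V(T)$ having at most one neighbour in $N(u)\setminus V(T)$; these are exactly the hypotheses of Lemma~\ref{lem:c1c3} (taken with $v=u$), which yields that $\mathcal{T}$ is \bad. In Case~C.\ref{c:2}, the \crossing edge $e$ of $T$ is then assigned to some $T_e\neq T$ with $T_e\neq T_2$, and $T_1,T_e$ are face-disjoint; I split on whether $T_e\neq T_1$ or $T_e=T_1$. For $T_e\neq T_1$, Lemmas~\ref{lem:c1c2TeContainsVertex_conditions} and~\ref{lem:c1c2TeContainsVertex} dispatch the subcase in which $T_1$ contains $w_1$ or $T_e$ contains $u$, while Lemmas~\ref{lem:c1c2TeNoVertex_conditions} and~\ref{lem:c1c2TeNoVertex} dispatch the complementary subcase after extracting the auxiliary \triplet $T_{f_1}$ whose existence the first of the pair guarantees when $\mathcal{T}'$ is \notbad. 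For $T_e=T_1$, Property~\ref{obs:twoK5sShareOnly2Vertices} forces $T_1$ and $T_2$ to share exactly two vertices, and Lemma~\ref{lem:noDoubleK5} yields that $\mathcal{T}$ is \bad. In every branch ``$\mathcal{T}'$ \notbad'' entails ``$\mathcal{T}$ \bad'', so if $\mathcal{T}$ is \notbad then every $\mathcal{T}'$ is \bad.

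For the backward direction, assume every $\mathcal{T}'$ is \bad and let $\emb$ be a \compliant \textemb of $G$. In $\emb$ the \crossing edges are partitioned among the \triplets that realize the \facialKFives, so $e_1$ is assigned to a unique \triplet $T_1^{\star}$. If $T_1^{\star}=T$, then $T$ is a \facialKFive in $\emb$, hence $e_1$, $e_2$ and $e$ are all assigned to $T$, so $\emb$ extends $\emb_{\mathcal{T}}$ and $\mathcal{T}=\{T\}$ is \notbad, as claimed. If instead $T_1^{\star}\neq T$, then $T$ is not a \facialKFive in $\emb$, and the \triplet $T_2^{\star}$ carrying $e_2$ also differs from $T$; but then $\emb$ extends $\emb_{\mathcal{T}'}$ for $\mathcal{T}'=\{T_1^{\star},T_2^{\star}\}$, contradicting that $\mathcal{T}'$ is \bad. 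Hence only the first alternative survives and $\mathcal{T}$ is \notbad.

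The main obstacle is the forward direction, which is not a single argument but an exhaustive assembly. The delicate points are: ensuring the trichotomy of positions of $T_1$ and $T_2$ relative to $T$ is complete (which relies on Observation~\ref{obs:crossBothFacesOrNone} to upgrade a single shared face to the pair $f,f_i$), correctly instantiating the generic ``bad'' Lemmas~\ref{lem:c1c3}, \ref{lem:c1c2TeContainsVertex} and~\ref{lem:noDoubleK5} with precisely the vertex- and neighbourhood-conditions produced by their ``conditions'' counterparts, and handling the extra branch on $T_e$ in Case~C.\ref{c:2}. The crux is verifying that the hypotheses passed between each ``conditions'' lemma and its partner match exactly, so that ``\notbad for $\mathcal{T}'$'' always flows into ``\bad for $\mathcal{T}$'' with no gap left uncovered.
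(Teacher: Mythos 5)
Your proposal is correct and follows essentially the same route as the paper's proof: the paper likewise disposes of the backward direction by noting that optimal $2$-planarity yields a \compliant rotation scheme in which either $\mathcal{T}$ or some $\mathcal{T}'$ must be \notbad, and proves the forward direction through exactly your case analysis (Lemmas~\ref{lem:oneTripletFaceDisjoint} and~\ref{lem:notBothTripletsFaceDisjoint} forcing precisely one face-disjoint \triplet, the pairing of Lemma~\ref{lem:c1c3_conditions} with~\ref{lem:c1c3}, of Lemma~\ref{lem:c1c2TeContainsVertex_conditions} with~\ref{lem:c1c2TeContainsVertex} and of Lemma~\ref{lem:c1c2TeNoVertex_conditions} with~\ref{lem:c1c2TeNoVertex} after splitting on $T_e\neq T_1$ versus $T_e=T_1$, and Property~\ref{obs:twoK5sShareOnly2Vertices} with Lemma~\ref{lem:noDoubleK5} in the latter subcase). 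The only cosmetic difference is that you phrase the forward direction as a per-$\mathcal{T}'$ contrapositive while the paper runs a global contradiction assuming both $\mathcal{T}$ and $\mathcal{T}'$ \notbad, which is logically equivalent.
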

\begin{proof}[Sketch]
As $G$ is optimal $2$-planar, at least one of $\mathcal{T}$ or $\mathcal{T}'=\{T_1,T_2\}$ for~some \triplets $T_1$ and $T_2$ is \notbad. If $\mathcal{T}$ is \bad, the lemma holds.
For the reverse direction, assume there is a set $\mathcal{T}'=\{T_1,T_2\}$ that is \notbad, i.e.  
 none of Lemmas~\ref{lem:c1c3_conditions}, \ref{lem:c1c2TeContainsVertex_conditions} and \ref{lem:c1c2TeNoVertex_conditions} applies for $\mathcal{T}' \cup \{T_e\}$. If $\mathcal{T}$ is \notbad from the corresponding Lemmas~\ref{lem:c1c3}, \ref{lem:c1c2TeContainsVertex} and \ref{lem:c1c2TeNoVertex}, the conditions of Lemma~\ref{lem:noDoubleK5} are satisfied and $\mathcal{T}$ is \bad.
\end{proof}

\begin{restatable}[$\mathbf{\ast}$]{lemma}{lemmasForTriplet}\label{lem:lemmasForTriplet}
Let $G$ be optimal $2$-planar and let $\mathcal{T} = \{T\}$. $\mathcal{T}$ is \bad if and only if the conditions of at least one of Lemmas~\ref{lem:c1c3}, \ref{lem:c1c2TeContainsVertex}, \ref{lem:c1c2TeNoVertex}
and~\ref{lem:noDoubleK5} are met.
\end{restatable}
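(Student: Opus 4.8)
The plan is to prove Lemma~\ref{lem:lemmasForTriplet} as a clean restatement that collects the case analysis carried out throughout the section into a single biconditional characterization of when $\mathcal{T}=\{T\}$ is \bad. First I would observe that the forward and backward directions correspond exactly to the two halves of the work already done: the ``if'' direction is immediate, since each of Lemmas~\ref{lem:c1c3}, \ref{lem:c1c2TeContainsVertex}, \ref{lem:c1c2TeNoVertex} and~\ref{lem:noDoubleK5} has as its conclusion precisely that $\mathcal{T}=\{T\}$ is \bad under its stated hypotheses. Hence if the conditions of at least one of these lemmas are met, then $\mathcal{T}$ is \bad directly by invoking that lemma; no further argument is needed for this direction.

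For the ``only if'' direction I would argue the contrapositive: assuming that the conditions of \emph{none} of Lemmas~\ref{lem:c1c3}, \ref{lem:c1c2TeContainsVertex}, \ref{lem:c1c2TeNoVertex} and~\ref{lem:noDoubleK5} hold, I must show $\mathcal{T}=\{T\}$ is \notbad. The natural route is to produce an explicit \notbad set $\mathcal{T}'=\{T_1,T_2\}$ and then invoke Lemma~\ref{lem:no2triplets}, which states that $\mathcal{T}=\{T\}$ is \notbad if and only if every $\mathcal{T}'=\{T_1,T_2\}$ is \bad. Reading this the other way, if $\mathcal{T}=\{T\}$ is \bad then some $\mathcal{T}'=\{T_1,T_2\}$ is \notbad. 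So the strategy is: suppose toward a contradiction that $\mathcal{T}$ is \bad while none of the four lemmas' conditions are met; by Lemma~\ref{lem:no2triplets} there is a \notbad set $\mathcal{T}'=\{T_1,T_2\}$. I would then walk through the structural dichotomy established earlier in the section. Lemmas~\ref{lem:oneTripletFaceDisjoint} and~\ref{lem:notBothTripletsFaceDisjoint} force Case~C.\ref{c:1} to hold for exactly one of $T_1,T_2$ with $T_1\neq T_2$, so the other triplet falls under Case~C.\ref{c:3} or Case~C.\ref{c:2}.

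The core of the argument is then a case distinction matching each configuration to the ``conditions'' lemma that must apply to a \notbad $\mathcal{T}'$, paired with the partner lemma that forces $\mathcal{T}$ \bad. Concretely: if the non-disjoint triplet complies with Case~C.\ref{c:3}, then since $\mathcal{T}'$ is \notbad Lemma~\ref{lem:c1c3_conditions} must \emph{fail}, which yields a vertex $v\in V(T)$ with $d(v)=9$ and a witness $x\in N(v)\setminus V(T)$ with $|N(x)\cap S|\leq 1$; this is precisely the hypothesis of Lemma~\ref{lem:c1c3}, contradicting the assumption that its conditions are not met. If instead the non-disjoint triplet complies with Case~C.\ref{c:2}, I would split on whether $T_e=T_1$ or $T_e\neq T_1$ exactly as in the text: the subcase $T_e=T_1$ lands in Lemma~\ref{lem:noDoubleK5} (its hypotheses being forced by Property~\ref{obs:twoK5sShareOnly2Vertices}), and the subcase $T_e\neq T_1$ is handled by the containment dichotomy of Lemmas~\ref{lem:c1c2TeContainsVertex_conditions} and~\ref{lem:c1c2TeNoVertex_conditions} failing, which feeds the hypotheses of Lemmas~\ref{lem:c1c2TeContainsVertex} and~\ref{lem:c1c2TeNoVertex} respectively. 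In every branch the failure of a ``conditions'' lemma (forced by \notbad-ness of $\mathcal{T}'$) supplies exactly the hypotheses of its companion lemma, contradicting our standing assumption that none of the four lemmas applies. This completes the contrapositive and hence the biconditional.

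I expect the main obstacle to be bookkeeping rather than mathematical depth: the proof is essentially a routing table that must correctly pair each ``conditions'' lemma with its ``bad'' companion and verify that the logical negation of a conditions lemma's hypotheses is \emph{strong enough} to trigger the companion. The delicate point is that the conditions lemmas are stated as sufficient conditions for $\mathcal{T}'$ to be \bad, so their failure gives only a partial negation; I would need to check that the quantifier structure lines up, e.g. that the failure of ``for every $x\in S$, $|N(x)\cap S|\geq 2$'' in Lemma~\ref{lem:c1c3_conditions} is exactly ``there exists $x\in S$ with $|N(x)\cap S|\leq 1$'' as required by Lemma~\ref{lem:c1c3}, and similarly that the degree condition $d(v)=9$ is enforced because $d(v)>9$ would already make $\mathcal{T}'$ \bad. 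Verifying these alignments across all three Case~C.\ref{c:2}/C.\ref{c:3} branches, and confirming that the proof of Lemma~\ref{lem:no2triplets} already encapsulates the combination step, is where care is needed; the structural geometry has all been discharged in the preceding lemmas, so no new drawing arguments should be required here.
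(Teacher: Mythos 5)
Your proposal is correct and follows essentially the same route as the paper's own proof: the ``if'' direction is immediate from the four lemmas, and the ``only if'' direction argues by contradiction, extracting a \notbad $\mathcal{T}'=\{T_1,T_2\}$ via Lemma~\ref{lem:no2triplets}, invoking Lemmas~\ref{lem:oneTripletFaceDisjoint} and~\ref{lem:notBothTripletsFaceDisjoint} to force the C.\ref{c:1}/C.\ref{c:3}/C.\ref{c:2} dichotomy (with the $T_e=T_1$ versus $T_e\neq T_1$ split in Case~C.\ref{c:2}), and pairing the forced failure of each ``conditions'' lemma with the hypotheses of its companion. Your noted delicate point---that negating the conditions lemmas' hypotheses must exactly yield the companion lemmas' hypotheses (e.g., the negation of ``for every $x\in S$, $|N(x)\cap S|\geq 2$'' with $d(u)=9$ enforced by minimum degree) ---is precisely the bookkeeping the paper carries out.
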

\begin{proof}[Sketch]
For a contradiction we assume that $\mathcal{T}$ is bad.
Since $G$ is optimal $2$-planar, by Lemma~\ref{lem:no2triplets},  there exists a set $\mathcal{T}'=\{T_1,T_2\}$ that is \notbad. We go through all lemmas and check their conditions. For all the cases, we argue that if the specific lemma can not  apply on $\mathcal{T}$, then $\mathcal{T}'$ is \bad.
\end{proof}

\section{The Recognition and Embedding Algorithm}
\label{sec:algo}

Now that we have all required ingredients, we are ready to state our main theorem for the recognition of optimal $2$-planar graphs.

\begin{theorem}\label{thm:recognitionAlgorithm}
Let $G$ be a simple graph on $n$ vertices. It can be decided in $\Oh(n)$ time whether $G$ is optimal $2$-planar. If the instance is positive, an \textemb of $G$ is reported.
\end{theorem}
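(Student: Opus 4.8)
The plan is to assemble a linear-time algorithm from the structural machinery developed in Sections~\ref{sec:preliminaries}--\ref{sec:triplets}, proceeding in five phases and arguing correctness and running time for each. First I would classify every edge $(u,v)$ as \crossing or potentially planar by computing $|N(u,v)|$ and applying Corollary~\ref{col:identificationProperty}. Since $G$ is $9$-degenerate (as noted in Section~\ref{sec:preliminaries}), the total number of edges is $\Oh(n)$, and using a $9$-degenerate ordering one can compute all common-neighborhood sizes in $\Oh(n)$ time: for each vertex processed in the degenerate order, only its $\Oh(1)$ back-neighbors need to be compared. If at any point the degeneracy or the total edge count $5n-10$ fails, reject immediately.

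Second, I would build the graph $G_p$ of potentially planar edges, verify it is $3$-connected and planar, and compute its unique planar rotation scheme $\emb_p$ in $\Oh(n)$ time (linear planarity and embedding). Here the \crossingConfiguration machinery is essential: I would detect every instance of \crossingConfiguration (each is an induced $10$-vertex subgraph anchored at a degree-$9$ base edge, so by Lemma~\ref{lem:crossBat} they are found locally around potentially planar crossing candidates), fix one of its two rotation schemes via Lemmas~\ref{lem:crossBatEmb} and~\ref{lem:crossingConfigurationChooseEmbed}, and reclassify the two offending potentially planar edge pairs as \crossing per Lemma~\ref{lem:crossBatTrue}. After this reclassification $G_p$ is planar; I then triangulate its faces (reclassifying chords) so that $G_p$ is a triangulation with the unique scheme $\emb_p$.

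Third, I would enumerate all \triplets and determine, for each \triplet $T$, whether $\mathcal{T}=\{T\}$ is \bad by testing the explicit local conditions of Lemmas~\ref{lem:c1c3}, \ref{lem:c1c2TeContainsVertex}, \ref{lem:c1c2TeNoVertex} and~\ref{lem:noDoubleK5}, whose combined characterization is exactly Lemma~\ref{lem:lemmasForTriplet}. Because each condition involves only neighborhoods of the $\Oh(1)$ vertices of $T$ and of the adjacent faces, and because each \crossing edge and each face belongs to only $\Oh(1)$ candidate \triplets, this phase runs in $\Oh(n)$ time. The \notbad \triplets then yield, via Lemma~\ref{lem:no2triplets} and the $1$-to-$1$ correspondence between \triplets and \facialKFives, a forced assignment of every \crossing edge and every face to a unique \facialKFive; I would assemble these into the full \textemb and, as a final consistency check, verify that the assignment partitions all faces and all \crossing edges. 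If any \crossing edge or face remains unassigned or is multiply assigned, reject.

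The main obstacle is establishing the overall $\Oh(n)$ running time together with soundness of the rejection decisions: I must argue that every local test is invoked $\Oh(1)$ times per edge/face and that a failure of any test correctly certifies non-membership, so that a graph surviving all phases genuinely admits the constructed \textemb. The correctness of acceptance follows because Lemma~\ref{lem:lemmasForTriplet} gives an \emph{exact} characterization of \bad singletons, guaranteeing the assignment $\mathcal{T}$ is simultaneously forced and realizable whenever $G$ is optimal $2$-planar; the correctness of rejection follows because each structural lemma is a necessary condition derived from the combinatorial characterization of Section~\ref{sec:preliminaries}, so any violation rules out an \textemb. The delicate bookkeeping is confirming that the \crossingConfiguration reclassification and the triangulation step preserve, up to the choices permitted by Lemma~\ref{lem:crossingConfigurationChooseEmbed}, all \textembs of $G$, so that no valid embedding is lost before the \triplet phase begins.
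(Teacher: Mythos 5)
Your pipeline is essentially the paper's Algorithm~\ref{algo:main} --- edge classification via Corollary~\ref{col:identificationProperty} using a $9$-degenerate order, \crossingConfiguration detection and reclassification via Lemmas~\ref{lem:crossBat}--\ref{lem:crossBatTrue}, planarity and $3$-connectivity of $G_p$, triangulation, \triplet labeling via the exact characterization of Lemma~\ref{lem:lemmasForTriplet}, and the final exact-coverage check --- and your correctness reasoning for acceptance and rejection matches the paper's. However, there is a genuine gap in the running-time analysis of your third phase. Your claim that ``each \crossing edge and each face belongs to only $\Oh(1)$ candidate \triplets'' is true for faces (each \triplet is a length-two path in the cubic dual $G_p^*$, so a face lies in a bounded number of \triplets), but it is \emph{false} for \crossing edges: for instance, the edge $e=(v_1,v_2)$ of a \triplet can belong to one \triplet for every suitable common neighbor $u$ of $v_1$ and $v_2$ in the triangulation, and the paper explicitly warns that a \crossing edge may lie in a \emph{linear} number of \triplets. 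Since the conditions of Lemma~\ref{lem:c1c2TeNoVertex} quantify over pairs $(T_1,T_e)$ of \triplets containing $e_1$ and $e$, the enumeration you propose is not $\Oh(1)$ per \triplet and can degenerate to quadratic time overall, so your stated $\Oh(n)$ bound does not follow from what you have argued.

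The paper closes exactly this hole with a counting device you would need to supply: at most $19$ distinct \triplets share a face with a fixed \triplet $T$ (Proposition~\ref{prp:findTriplet}), and among \triplets face-disjoint from $T$ through a fixed vertex $u$, at most five also contain a fixed neighbor $v$ (Proposition~\ref{prp:findTriplet_2}). Hence, in Lemma~\ref{lem:tripletDecisionAlgorithm}, once the candidate sets $\mathcal{S}_1$ and $\mathcal{S}_e$ exceed fixed constant thresholds ($63$ and $82$ in the paper), \triplets $T_1$ and $T_e$ with all required face-disjointness and non-containment properties are \emph{guaranteed} to exist, so the conditions of Lemma~\ref{lem:c1c2TeNoVertex} can be certified without enumerating pairs; below the thresholds the sets have constant size and brute force costs $\Oh(1)$. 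Without this (or an equivalent) argument, the \bad/\notbad decision is not constant time per \triplet, which is the crux of the theorem's $\Oh(n)$ claim. A second, minor omission: before triangulating you must verify that every face of $\emb_p$ has length at most five and induces a complete subgraph of $G$ (the reclassified chords must actually be edges of $G$, and overly long faces already certify rejection); this is bookkeeping in the paper's lines 7--10 but is part of the soundness of rejection that you promise to establish.
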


The remainder of this section contains the proof of Theorem~\ref{thm:recognitionAlgorithm} which is split into two parts. In Section~\ref{subsec:algo} we describe our algorithm and prove its correctness, while an efficient implementation is discussed in Section~\ref{subsec:time}. 

\subsection{Recognition Algorithm }\label{subsec:algo}
Our recognition algorithm is formalized in Algorithm~\ref{algo:main}.
\begin{algorithm}[t]\caption{Recognition of Simple Optimal $2$-planar Graphs}
\label{algo:main}
\KwIn{A $9$-degenerate graph $G=(V,E)$ with $|E|=5|V|-10$}
\DontPrintSemicolon
\SetArgSty{normal}
\KwOut{An \textemb of $G$ if it exists, \\
otherwise, \texttt{false}}
\BlankLine
Classify each edge $e \in E$ as potentially planar or \crossing\;
Identify all instances of \crossingConfiguration in $G$ and fix their partial rotation schemes\;
Create subgraph $G_p$\;
\If{$G_p$ is not $3$-connected or is not planar}
{
    \Return \texttt{false}\;
}
Compute the unique planar rotation scheme $\emb_p$ of $G_p$\;
\If{$\emb_p$ contains a face  of length greater than $5$ \\
\textbf{or if} a face does not induce a complete subgraph in $G$}
{
    \Return \texttt{false}\;
}
Augment $G_p$ to maximal planar by triangulating the faces of  $\emb_p$\;
Compute the set of \triplets of $\emb_p$\;
\For{every \triplet $T$ of $\emb_p$}
{
    Label $T$ as \facialKFive or as non-\facialKFive\;
}
Let $\mathcal{T}^*$ be the set of \triplets labelled as \facialKFives\;
\If{$\mathcal{T}^*$ covers all faces of $\emb_p$ and all \crossing edges of $G$  exactly once}
{
     \Return
     {rotation scheme obtained by making each $T \in \mathcal{T}^\ast$ a \facialKFive}\;
}
\Return \texttt{false}\;

\end{algorithm}
There are four main steps in the process. The first step is the classification of all edges of $G$ as potentially planar and \crossing (line 1). Second, is the identification of the \crossingConfiguration instances and the creation of $G_p$ and its planar rotation scheme $\emb_p$ (lines 2--10). Third, we identify all \triplets of $\emb_p$ (line 11).
Finally, we decide which of the \triplets are \notbad (lines 12--13) determining an \textemb of $G$ if one exists.
Having the set of \triplets that are \facialKFives in any \compliant \textemb of $G$, we decide if $G$ is optimal $2$-planar (lines 14--15)  and compute an \textemb of $G$ if it exists (line 16). The details of the steps are given in \arxapp{Theorem~\ref{lem:algo_correct} in Appendix~\ref{app:algo}}{\cite{arxivVersion}}.

\subsection{Implementation}\label{subsec:time}
First, we check that the input graph has at most $5n-10$ edges and that it is $9$-degenerate. This process takes $\Oh(n)$ time and is described in \arxapp{Details~\ref{app:prp:basic_implementation} in Appendix~\ref{app:algo}}{\cite{arxivVersion}}, together with standard techniques and data structures that we use. 
In the following, we discuss in more detail the time complexity of the involved steps of Algorithm~\ref{algo:main}. 
The first step of the process, described in line 1 of Algorithm~\ref{algo:main},  can be performed in $\Oh(n)$ time as stated in the following lemma:

\begin{restatable}[$\mathbf{\ast}$]{lemma}{classifyEdgesTime}\label{lem:classifyEdgesTime}
All edges of $G$ can be classified as potentially planar or \crossing in $\Oh(n)$ time.
\end{restatable}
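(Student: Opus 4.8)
The plan is to reduce the classification to counting, for every edge $(u,v)\in E$, the size of the common-neighbourhood $|N(u,v)|$, and then to declare $(u,v)$ \crossing exactly when this count is smaller than $6$ and potentially planar otherwise, matching the definition. Since a vertex $w$ lies in $N(u,v)$ precisely when $\{u,v,w\}$ is a triangle of $G$, the quantity $|N(u,v)|$ equals the number of triangles through the edge $(u,v)$. Hence it suffices to compute, for each edge, the number of triangles incident to it, and we may cap every such counter at $6$ since the exact value is irrelevant once it reaches $6$.

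To do this in linear time I would exploit that $G$ is $9$-degenerate. A $9$-degenerate ordering $v_1,\dots,v_n$ is already available from the preprocessing (it is obtained in $\Oh(n)$ time by repeatedly extracting a minimum-degree vertex with a bucket queue). I orient every edge from its higher-ranked to its lower-ranked endpoint; by degeneracy each vertex then has out-degree at most $9$. For every vertex $c$ I enumerate all pairs $\{a,b\}$ of its out-neighbours — at most $\binom{9}{2}=36$ pairs — and test whether $(a,b)\in E$. Every pair that is an edge yields a triangle $\{a,b,c\}$, and, crucially, each triangle is discovered exactly once, namely when $c$ is its highest-ranked vertex: then both of the other two vertices are out-neighbours of $c$, whereas when the triangle's other two vertices are processed one of the required edges points inward and the pair is never formed. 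Upon discovering the triangle I increment the counters of all three incident edges $(a,b)$, $(a,c)$, $(b,c)$, stopping each at $6$. After the full scan every edge stores $\min\{|N(u,v)|,6\}$, from which its classification follows in $\Oh(1)$.

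For the running time, the number of enumerated pairs is $\sum_c \binom{\mathrm{outdeg}(c)}{2} \le n\binom{9}{2} = \Oh(n)$; equivalently $G$ has only $\Oh(n)$ triangles because its degeneracy is bounded, so the total number of counter increments is $\Oh(n)$ as well. The step that needs care — and is the only real obstacle — is supporting the two primitives inside the loop in constant time: deciding whether a pair $\{a,b\}$ of out-neighbours of $c$ is an edge, and then updating the relevant counters. For the first, I would radix-sort all out-adjacency lists once in $\Oh(n)$ total time; since each list has length at most $9$, testing membership of $a$ in the out-list of $b$ is then an $\Oh(1)$ operation that also returns a reference to the edge $(a,b)$. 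For the second, I attach the counter directly to each edge record, so that the three edges of a discovered triangle — all held as references during the scan of $c$ — are incremented in constant time. With these data structures the overall cost is $\Oh(n)$, which is exactly the claim.
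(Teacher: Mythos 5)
Your proposal is correct and matches the paper's proof in essence: both compute a $9$-degenerate ordering, enumerate for each vertex the at most $\binom{9}{2}=36$ pairs of its lower-ranked (out-)neighbours with an $\Oh(1)$ adjacency test against the bounded stored lists, increment per-edge common-neighbour counters for each discovered triangle, and classify by the threshold of six common neighbours. Your orientation phrasing, the cap at $6$, and the radix-sorting of length-$\le 9$ lists are cosmetic variations (the last is unnecessary, since scanning a constant-length list is already $\Oh(1)$), not a different argument.
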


We proceed with the second step given in lines 2--10 of Algorithm~\ref{algo:main}. First, we  compute the \crossingConfiguration instances of $G$ (line 2).

\begin{restatable}[$\mathbf{\ast}$]{lemma}{crossingConfigurationRecognize}\label{lem:crossingConfigurationRecognize}
All  \crossingConfiguration instances of $G$ can be identified in $\Oh(n)$ time if all edges are already classified as \crossing or potentially planar. 
\end{restatable}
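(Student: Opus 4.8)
The plan is to anchor the search on the \emph{base edge} $(u,u')$ of each \crossingConfiguration instance. By definition, both endpoints of the base edge have degree exactly $9$ and share exactly $8$ common neighbours, and $V(H)=N(u)\cup N(u')$ consists of these $10$ vertices. First I would enumerate the set of \emph{candidate base edges}, i.e.\ edges both of whose endpoints have degree $9$. Since every degree-$9$ vertex $u$ has exactly $9$ incident edges, it contributes at most $9$ candidate base edges, so there are at most $9n/2=\Oh(n)$ candidates in total, and they are listed in $\Oh(n)$ time by scanning the adjacency list of each degree-$9$ vertex. For a fixed base edge the whole instance is pinned down (its vertex set is $\{u,u'\}\cup(N(u)\cap N(u'))$), so it suffices to spend $\Oh(1)$ time per candidate to decide whether it extends to a genuine instance.

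The hard part is determining the induced subgraph $H=G[N(u)\cup N(u')]$ in constant time: although $d(u)=d(u')=9$, the eight common neighbours may have arbitrarily large degree, so we cannot afford to scan their adjacency lists to discover the edges among them. I would resolve this with an $\Oh(n)$ preprocessing step that orients $G$ acyclically with out-degree at most $9$: given the $9$-degenerate ordering $v_1,\dots,v_n$ (available from the input check), orient every edge from the higher-indexed to the lower-indexed endpoint, so that each $v_i$ has at most $9$ out-neighbours. With this orientation, every edge of $G$ is stored (as an out-edge) at one of its endpoints, each out-neighbour list has constant size $\le 9$, and adjacency within any constant-size vertex set can be tested in $\Oh(1)$ time. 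This is the key device that makes the per-candidate work constant.

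Concretely, for each candidate base edge $(u,u')$ I would, using a reusable boolean marking array (reset in $\Oh(1)$ amortised time by undoing only the marks set), compute $S=N(u)\cap N(u')$ by marking $N(u)$ and scanning $N(u')$, and discard the candidate unless $|S|=8$; this costs $\Oh(1)$ since $d(u)=d(u')=9$. The $17$ edges incident to $u$ or $u'$ are read directly off their adjacency lists, and all their endpoints lie in $V(H)=\{u,u'\}\cup S$ because $N(u)=\{u'\}\cup S$ and $N(u')=\{u\}\cup S$. The remaining edges lie among the eight vertices of $S$; these I would enumerate by scanning, for each $z\in S$, its $\le 9$ out-neighbours and retaining those marked as lying in $V(H)$, so that every such edge is found exactly once at its tail. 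This yields, in $\Oh(1)$ time, the complete edge set of $H$ together with the \crossing/potentially-planar classification of each edge (already available by hypothesis). It then remains to test whether $H$ equals $G_{CB}$ up to isomorphism, allowing the two optional edges $(v,v')$ and $(w,w')$ to be present or absent, and requiring the classification of edges to be preserved and no edge beyond those of $G_{CB}$ to occur; since $H$ has only $10$ vertices and a constant number of edges, this is a constant-size check (e.g.\ by trying all $\Oh(1)$ labellings consistent with $u,u'$ being the two degree-$9$ base vertices). Reporting each verified instance, and suppressing the at most constant-factor duplication arising from processing a base edge in both orientations, is again $\Oh(1)$ per candidate. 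Over all $\Oh(n)$ candidate base edges, together with the $\Oh(n)$ preprocessing for the degeneracy orientation and the marking array, the total running time is $\Oh(n)$, as claimed.
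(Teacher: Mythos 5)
Your proposal is correct and follows essentially the same route as the paper: enumerate candidate base edges whose two endpoints have degree $9$, verify $|N(u,u')|=8$, extract the induced subgraph on the ten vertices in $\Oh(1)$ time using a degeneracy-based adjacency structure (your low-out-degree orientation is exactly the device the paper uses), and check isomorphism to $G_{CB}$ with the edge classification preserved in constant time per candidate. The only difference is cosmetic — the paper pins down the labelling constructively via degrees and the \crossing/potentially-planar classification, while you brute-force the $\Oh(1)$ labellings, which is equally valid.
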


For each \crossingConfiguration instance, we choose the  
rotation scheme of Fig.~\ref{fig:crossingConfigurationEmbed1}~where $(u,y)$ crosses $(u',y')$ and $(u,x)$ crosses $(u',x')$ by reclassifying   $(u,y)$ and $(u,x)$ as \crossing.
We proceed with lines 3--10 of Algorithm~\ref{algo:main}. We check the necessary conditions for $G_p$ and compute the planar rotation scheme $\emb_p$ and the dual $G_p^*$ after augmenting $G_p$ to maximal planar; see \arxapp{Details~\ref{app:planar_gp} in Appendix~\ref{app:algo}}{\cite{arxivVersion}}. 

Next, we compute the set of \triplets of $\emb_p$ (line 11) in $\Oh(n)$ time; see \arxapp{Details~\ref{app:tripletIdentification} in Appendix~\ref{app:algo}}{\cite{arxivVersion}}. During this process, for each face and each \crossing edge of $G$, we store the set of \triplets that contain it.
The next lemma describes how to implement the labeling of each \triplet as \facialKFive or not; see line 13.

\begin{restatable}[$\mathbf{\ast}$]{lemma}{tripletDecisionAlgorithm}\label{lem:tripletDecisionAlgorithm}
Let $\mathcal{T} = \{T\}$ and $\mathcal{S}_1$,  $\mathcal{S}_2$  and $\mathcal{S}_e$ be the set of triplets containing  $e_1$, $e_2$ and $e$, respectively. It can be decided in $\Oh(1)$ time if $\mathcal{T}$ is \bad or not. 
\end{restatable}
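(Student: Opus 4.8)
The plan is to reduce the decision entirely to the finite characterization of Lemma~\ref{lem:lemmasForTriplet}, which states that $\mathcal{T}=\{T\}$ is \bad if and only if the conditions of at least one of Lemmas~\ref{lem:c1c3}, \ref{lem:c1c2TeContainsVertex}, \ref{lem:c1c2TeNoVertex} and~\ref{lem:noDoubleK5} are met. It therefore suffices to test these four conditions, and I would argue that each test runs in $\Oh(1)$ time. The whole argument rests on two size bounds that I would establish first: the lists $\mathcal{S}_1,\mathcal{S}_2,\mathcal{S}_e$ have constant size, and every face of $\emb_p$ lies in only a constant number of \triplets.

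To get the size bounds, recall that the three edges $e_1,e_2,e$ of any \triplet are \crossing, so by definition their endpoints have fewer than six common neighbours, i.e.\ $|N(a,b)|<6$ for the endpoints $a,b$ of each of them. In a \triplet containing a \crossing edge $(a,b)$, the three remaining vertices of its $5$-clique are all common neighbours of $a$ and $b$; since there are at most five such common neighbours, there are only $\Oh(1)$ ways to select these three vertices, hence only $\Oh(1)$ \triplets through $(a,b)$. This bounds $|\mathcal{S}_1|,|\mathcal{S}_2|,|\mathcal{S}_e|$ by a constant. For a triangular face $f'$, the apex of any \triplet containing $f'$ is the unique vertex shared by its three faces and thus one of the three vertices of $f'$; for each of these three choices there are only $\Oh(1)$ windows of three consecutive faces in the rotation around that vertex that include $f'$, so by Observation~\ref{obs:eitherOnlyTwoVerticesOrFace} the face $f'$ lies in $\Oh(1)$ \triplets as well. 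All these per-edge and per-face \triplet lists are already stored during \triplet identification.

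Given the bounds, I would test the four conditions as follows. The conditions of Lemmas~\ref{lem:c1c3} and~\ref{lem:c1c2TeContainsVertex} are local to $T$: each concerns a vertex $v\in V(T)$ with $d(v)=9$ and the set $S=N(v)\setminus V(T)$, which has exactly five elements because $v$ is adjacent to the other four clique vertices of $T$. Iterating over the at most five vertices of $V(T)$ of degree $9$, computing $S$, and evaluating $|N(x)\cap S|$ for each $x\in S$ requires only a constant number of adjacency tests; each such test takes $\Oh(1)$ time using the edge orientation induced by a $9$-degenerate order, in which every vertex has out-degree at most $9$. The conditions of Lemmas~\ref{lem:c1c2TeNoVertex} and~\ref{lem:noDoubleK5} are existential over \triplets $T_1\in\mathcal{S}_1$, $T_2\in\mathcal{S}_2$, $T_e\in\mathcal{S}_e$, all distinct from $T$ (and, for Lemma~\ref{lem:c1c2TeNoVertex}, additionally a \triplet $T_{f_1}$ drawn from the stored list of \triplets through $f_1$). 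Since all these lists have constant size, I would enumerate every combination and verify for each, in $\Oh(1)$ time, the required face-disjointness, face-sharing and vertex-sharing relations by directly comparing the constant-size vertex and face sets of the involved \triplets. If any of the four conditions is witnessed I report \bad, and otherwise \notbad.

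The main obstacle is precisely the constant-size bound on $\mathcal{S}_1,\mathcal{S}_2,\mathcal{S}_e$ and on the \triplets through a face: without it the existential tests underlying Lemmas~\ref{lem:c1c2TeNoVertex} and~\ref{lem:noDoubleK5} could not be carried out in constant time. This bound is exactly where the \crossing classification pays off, through $|N(\cdot,\cdot)|<6$ for \crossing edges together with Observation~\ref{obs:eitherOnlyTwoVerticesOrFace}, which guarantees that a \triplet meeting the three vertices of a face in fact contains that face and so is captured by the stored per-face lists. Once these bounds are in place, the remaining work is a bounded, purely combinatorial case distinction, and the correctness of the decision follows immediately from Lemma~\ref{lem:lemmasForTriplet}.
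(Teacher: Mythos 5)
Your proposal follows the paper's skeleton at the top level (reduce to Lemma~\ref{lem:lemmasForTriplet} and test the conditions of Lemmas~\ref{lem:c1c3}, \ref{lem:c1c2TeContainsVertex}, \ref{lem:c1c2TeNoVertex} and~\ref{lem:noDoubleK5}; handle the first two by local neighborhood checks with $\Oh(1)$ adjacency queries via the $9$-degenerate orientation), but on the only genuinely delicate point you take a different route. The paper explicitly does \emph{not} bound $|\mathcal{S}_1|,|\mathcal{S}_e|$ --- it entertains the possibility that linearly many \triplets contain $e_1$ or $e$ --- and escapes brute force through two counting propositions (Propositions~\ref{prp:findTriplet} and~\ref{prp:findTriplet_2}: at most $19$ \triplets share a face with $T$, and at most $5$ face-disjoint \triplets through two consecutive boundary vertices contain both), concluding that whenever $|\mathcal{S}_e|\geq 82$ (resp.\ $|\mathcal{S}_1|\geq 63$) a suitable $T_e$ (resp.\ $T_1$) is guaranteed to exist, so that the conditions of Lemma~\ref{lem:c1c2TeNoVertex} are met without enumeration; only the small-list case is brute-forced. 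You instead prove the candidate lists are of constant size outright, via the common-neighborhood bound for \crossing edges, and then brute-force everything. If your size bound holds, your argument is shorter and in fact shows the paper's concern about linear lists to be vacuous; the paper's counting argument buys robustness in that it needs no bound on $|N(\cdot,\cdot)|$ at all.

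There is, however, one real hole in your size bound as stated: ``the endpoints of a \crossing edge have fewer than six common neighbours by definition'' is only true of the \emph{initial} classification. After the \crossingConfiguration step, two potentially planar edges per instance are reclassified as \crossing (cf.\ Lemmas~\ref{lem:crossBatEmb} and~\ref{lem:crossBatTrue} and the discussion after Lemma~\ref{lem:crossingConfigurationRecognize}), and these edges have $|N(a,b)|\geq 6$ --- that is precisely why they were classified potentially planar. Such edges do occur as \crossing edges of \triplets, since the four $5$-cliques of a \crossingConfiguration instance are \facialKFives of the chosen rotation scheme. The patch is easy: each reclassified edge is incident to the degree-$9$ vertex $u$ (or $u'$), so $|N(a,b)|\leq 8$, giving at most $\binom{8}{3}$ candidate vertex sets instead of $\binom{5}{3}$; the constant bound survives. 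Two smaller points: your jump from ``$\Oh(1)$ candidate vertex sets'' to ``$\Oh(1)$ \triplets through $(a,b)$'' silently uses that a fixed $5$-set supports only $\Oh(1)$ \triplets (the apex is one of five vertices, and since $G_p$ is simple each of the other four vertices occurs once in the rotation around it, fixing the three faces up to $\Oh(1)$ choices) --- you make the analogous argument for faces but should state it for edges too; and your enumeration must also be run with the roles of $e_1$ and $e_2$ exchanged ($T_2$ face-disjoint from $T$, $T_1$ sharing $f$ and $f_1$), as the paper notes. With these repairs your proof is sound and arguably simpler than the paper's.
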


After labeling all \triplets as \facialKFives or non-\facialKFives, we consider the set $\mathcal{T}^*$ of \triplets labelled as \facialKFives. 
Checking whether  $\mathcal{T}^*$ 
covers all faces and \crossing edges of $G$ exactly once, in line 15 of Algorithm~\ref{algo:main},  can be done in $\Oh(|\mathcal{T}^*|)=\Oh(n)$ time. 
Finally, we augment the planar rotation scheme $\emb_p$ of $G_p$ to an \textemb of $G$ in $\Oh(n)$ time by inserting the \crossing edges of $G$ and identifying the crossings; refer to \arxapp{Details~\ref{app:finishEmb} in Appendix~\ref{app:algo}}{\cite{arxivVersion}}.
The overall time required for the last step  is in $\Oh(n)$.

Every step described above takes time $\Oh(n)$, hence Algorithm~\ref{algo:main} can be implemented to run  in linear time.
We point out that after fixing $\emb_p$, the \textemb of $G$ is uniquely defined if it exists. In particular, 
if there are $c$ instances of \crossingConfiguration, there exist at most $2^c$ \textembs which can be easily enumerated.  

\section{Conclusions}\label{sec:conclusions}

We showed that simple optimal $2$-planar graphs can be recognized and embedded in linear time.
It remains open to extend our result also for simple $2$-planar graphs with the maximum number of edges when  $n\not \equiv 2\mod 3$, as well as to non-simple optimal $2$-planar graphs.
Another reasonable attempt would be  recognizing $5$-maps, similarly to~\cite{DBLP:journals/algorithmica/Brandenburg19}.  
Finally,
a natural question would be if a recognition algorithm for non-simple optimal $2$-planar graphs could be adopted for optimal $3$-planar graphs that are non-simple, or if our approach could work for simple optimal $3$-planar~graphs. 

\paragraph{Acknowledgement.}
We thank Michael Bekos for his valuable suggestions and endless discussions on this topic.

\bibliographystyle{splncs04}
\bibliography{ref}

\newpage

\arxapp{
\appendix
\section{Ommited Proofs from Section~\ref{sec:preliminaries}}

\crossedEdgesShareAKFive*
\begin{proof}
Assume for a contradiction that $(u,v)$ and $(u,w)$ belong to two different {\facialKFive}s $C$ and $C'$ of an \textemb $\emb$ of $G$, respectively.
Since  $C\subseteq N(u,v)$ and $C'\subseteq N(u,w)$, it follows that $C\cup C'\subseteq S$ and thus $|C\cup C'|<8$. 
Since $|C|=|C'|=5$, it follows that $|C \cap C'| > 2$; contradiction to  Property~\ref{obs:twoK5sShareOnly2Vertices}. 
Hence, $(u,v)$ and $(u,w)$ belong to the same \facialKFive.
\end{proof}

\section{Ommited Proofs from Section~\ref{sec:crossBat}}
\label{app:crossBat}

\crossBat*

\setcounter{section}{3}
\setcounter{lemma}{1}
\renewcommand\thesection{\arabic{section}}

\begin{proof}

Since $(c_1,c_3)$ is potentially planar, vertices $c_1$ and $c_3$ have at least six common neighbors. Out of these common neighbors, exactly three belong to $C$ (namely, $c_0$, $c_2$ and $c_4$). It follows that at least three of the common neighbors of $c_1$ and $c_3$, call them $u$, $v$ and $w$, do not belong to $C$; see Fig.~\ref{fig:crossingEdgeInAK5:2}. %
By Corollary~\ref{col:edgesIncidentToCrossedEdgeDoNotCross}, the edges connecting $c_1$ to $u$, $v$ and $w$ do not cross the edges that connect $c_3$ to $u$, $v$ and $w$ in $\emb$. Assume w.l.o.g.\ 
that $c_1,c_2,u,v,w,c_4,c_0$ appear in this circular order around $c_3$ in $\emb$; see Fig.~\ref{fig:crossingEdgeInAK5:2}.

\begin{figure}[t]


    \centering
    \begin{subfigure}{0.3\textwidth}
        \centering
        \includegraphics[page=1]{figures/new_crossbat}
    \caption{}
    \label{fig:crossingEdgeInAK5:2}
    \end{subfigure}
    \myhfil
    \begin{subfigure}{0.3\textwidth}
        \centering
        \includegraphics[page=2]{figures/new_crossbat}
    \caption{}
    \label{fig:crossingEdgeInAK5:3}
    \end{subfigure}
    \myhfil
    \begin{subfigure}{0.3\textwidth}
        \centering
        \includegraphics[page=3]{figures/new_crossbat}
    \caption{}
    \label{fig:crossingEdgeInAK5:4}
    \end{subfigure}

    \begin{subfigure}{0.3\textwidth}
        \centering
        \includegraphics[page=4]{figures/new_crossbat}
    \caption{}
    \label{fig:crossingEdgeInAK5:5}
    \end{subfigure}
    \myhfil
    \begin{subfigure}{0.3\textwidth}
        \centering
        \includegraphics[page=5]{figures/new_crossbat}
    \caption{}
    \label{fig:crossingEdgeInAK5:6}
    \end{subfigure}
    \myhfil
    \begin{subfigure}{0.3\textwidth}
        \centering
        \includegraphics[page=6]{figures/new_crossbat}
    \caption{}
    \label{fig:crossingEdgeInAK5:7}
    \end{subfigure}
    \caption{(a) Vertices $c_1$, $c_3$ have three common neighbors $\{u,v,w\}$ outside $C$. (b) Illustration for Proposition~\ref{prop:c1WCrossingfree}. (c) Illustration for Proposition~\ref{prop:xNorYNorZNotOutside}. (d) Illustration for Proposition~\ref{prop:wIsNotXorY}. (e) Illustration
    for Proposition~\ref{prop:wIsZ}. (f) Summary of constraints imposed by Propositions~\ref{prop:c1WCrossingfree}--\ref{prop:wIsZ}.
    }
    \label{fig:crossingEdgeInAK5_part1}
\end{figure}

A symmetric argument for $c_2$ and $c_4$ implies the existence of three common neighbors $x$, $y$ and $z$ of $c_2$ and $c_4$ that do not belong to $C$. Since $C$ is a \facialKFive, vertices $x,y,z$ appear between $c_3$ and $c_0$ in the circular order around $c_4$ in $\emb$. In particular, we can assume w.l.o.g. that vertices $c_2,c_3,x,y,z,c_0,c_1$ occur in this order around $c_4$ in $\emb$. Note that $x$, $y$ and $z$ are not necessarily different from $u$, $v$ and $w$. In particular, we will prove in the following that $|\{u,v,w\} \cap \{x,y,z\}|=2$. 

\begin{proposition}
\label{prop:c1WCrossingfree}
None of the edges $(c_4,x)$, $(c_4,y)$ and $(c_4,z)$ crosses $(c_1,w)$.
\end{proposition}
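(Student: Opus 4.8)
The plan is to derive the claim as a direct consequence of Corollary~\ref{col:edgesIncidentToCrossedEdgeDoNotCross}, applied to a suitable crossing diagonal of the facial $5$-clique $C$. The key observation is that, although the proposition speaks about the pendant edges $(c_1,w)$ and $(c_4,t)$ for $t\in\{x,y,z\}$, the pair of vertices $\{c_1,c_4\}$ is itself joined by a diagonal of $C$: since $(4-1)\bmod 5 = 3 \in \{2,3\}$, the edge $(c_1,c_4)$ belongs to $\langle c_0,\ldots,c_4\rangle$ and is therefore a crossing edge drawn inside $C$ in $\emb$. This is exactly the kind of edge to which Corollary~\ref{col:edgesIncidentToCrossedEdgeDoNotCross} speaks.

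First I would record the incidences needed to invoke the corollary. Vertex $w$ is a common neighbor of $c_1$ and $c_3$ lying outside $C$, so in particular $w\notin C$ and $w\in N(c_1)$. Each of $x,y,z$ is a common neighbor of $c_2$ and $c_4$ lying outside $C$, so for every $t\in\{x,y,z\}$ we have $t\notin C$ and $t\in N(c_4)$. Thus $w$ and $t$ play precisely the roles of the two external neighbors in the statement of Corollary~\ref{col:edgesIncidentToCrossedEdgeDoNotCross}, taking the crossing edge to be $(u,v)=(c_1,c_4)$, the neighbor $u'=w$ of $c_1$, and the neighbor $v'=t$ of $c_4$.

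Applying the corollary once for each $t\in\{x,y,z\}$ then yields that $(c_1,w)$ and $(c_4,t)$ do not cross in $\emb$, which is exactly the assertion of the proposition. The only case not literally covered by the corollary is when $w$ coincides with one of $x,y,z$; but then the two edges $(c_1,w)$ and $(c_4,w)$ share the endpoint $w$ and hence cannot cross in a simple drawing, so the conclusion holds trivially.

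There is essentially no technical obstacle here: the entire argument reduces to noticing that $(c_1,c_4)$ is a diagonal, and hence a crossing edge, of $C$, so that the forbidden configuration of Corollary~\ref{col:edgesIncidentToCrossedEdgeDoNotCross} directly rules out the crossings in question. The only point demanding a moment's care is to confirm that $(c_1,c_4)$ is a crossing edge of $C$ rather than a boundary edge of the pentagon, which is immediate from the index condition $(i-j)\bmod 5\in\{2,3\}$, together with disposing of the degenerate case $w\in\{x,y,z\}$ as above.
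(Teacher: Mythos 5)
Your proof is correct and is essentially the paper's own argument: both invoke Corollary~\ref{col:edgesIncidentToCrossedEdgeDoNotCross} with the crossing diagonal $(c_1,c_4)$ of $C$ and the external neighbors $w$ of $c_1$ and $t\in\{x,y,z\}$ of $c_4$. Your extra remark about the degenerate case $w\in\{x,y,z\}$ is harmless but not needed, since the corollary's statement does not require the two external neighbors to be distinct.
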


\begin{proof}
Assume that edge $(c_4,x)$ crosses $(c_1,w)$ as in Fig.~\ref{fig:crossingEdgeInAK5:3}. This contradicts Corollary~\ref{col:edgesIncidentToCrossedEdgeDoNotCross}, since $(c_1,c_4)$ is a crossing edge inside $C$ while edges $(c_4,x)$ and $(c_1,w)$ do not belong to $C$. A symmetric argument applies for edges $(c_4,y)$ and $(c_4,z)$.
\end{proof}
\begin{proposition}
\label{prop:xNorYNorZNotOutside}
None of vertices $\{x,y,z\}$ lies in the interior of the region delimited by the cycle $(c_3,w,c_1,c_0,c_4)$.
\end{proposition}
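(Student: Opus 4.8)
I would argue by contradiction, assuming that some vertex $t\in\{x,y,z\}$ lies in the interior of the region $R$ delimited by the cycle $(c_3,w,c_1,c_0,c_4)$, and forcing a forbidden crossing. The first thing to record is the structure of $\partial R$: three of its five edges, namely $(c_1,c_0)$, $(c_0,c_4)$ and $(c_4,c_3)$, are boundary edges of the facial $5$-clique $C$, hence crossing-free edges of the pentangulation, and therefore cannot be crossed by any edge of $\emb$; the only potentially crossable boundary edges are $(c_3,w)$ and $(w,c_1)$. The whole argument then reduces to showing that the edge $(c_4,t)$ is forced to cross $\partial R$ while neither admissible crossing can occur.

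I would next exploit the rotation at $c_4$. Since $c_2,c_3,x,y,z,c_0,c_1$ occur in this cyclic order around $c_4$, the edges $(c_4,x),(c_4,y),(c_4,z)$ all leave $c_4$ into the angular wedge bounded by $(c_4,c_3)$ and $(c_4,c_0)$ that does \emph{not} contain the clique diagonals $(c_4,c_1),(c_4,c_2)$. On the other hand, the pentagonal face of $C$ incident to $c_4$ lies in the complementary wedge, and this face is enclosed by $R$: the detour $c_1\,w\,c_3$ of $\partial R$ runs through the exterior vertex $w$ and, being unable to cross the crossing-free boundary of $C$, never enters that face. Hence the interior of $R$ at $c_4$ is precisely the diagonal-containing wedge, so $(c_4,t)$ departs $c_4$ into the \emph{exterior} of $R$. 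As $t$ is assumed to lie in $\mathrm{int}(R)$, the curve $(c_4,t)$ must cross $\partial R$ an odd number of times; since the three pentagon edges are uncrossable, it crosses $(c_3,w)$ or $(w,c_1)$. The crossing with $(w,c_1)=(c_1,w)$ is ruled out immediately by Proposition~\ref{prop:c1WCrossingfree}, so it remains to exclude a crossing of $(c_4,t)$ with $(c_3,w)$.

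Ruling out the crossing $(c_4,t)\times(c_3,w)$ is the step I expect to be the main obstacle, precisely because it is \emph{not} forbidden by a single application of Corollary~\ref{col:edgesIncidentToCrossedEdgeDoNotCross}: the edge $(c_3,c_4)$ is crossing-free and hence is never a clique diagonal, so the corollary has no diagonal of $C$ to latch onto here. To handle it I would descend to the level of facial $5$-cliques. If $(c_4,t)$ crossed $(c_3,w)$, then by Property~\ref{lem:crossingEdgesBelongToSameK5} both edges would lie in a common facial $5$-clique $C''$ with $c_3,c_4,t,w\in C''$. Since $(c_3,c_4)$ is crossing-free and joins two vertices of $C''$, it must be a boundary edge of $C''$; by Property~\ref{lem:crossingEdgesAreNotShared} the crossing-free edge $(c_3,c_4)$ lies in exactly two facial $5$-cliques, so $C''$ is the unique such clique other than $C$, and Property~\ref{obs:twoK5sShareOnly2Vertices} gives $C\cap C''=\{c_3,c_4\}$ with $C''$ glued to $C$ along $(c_3,c_4)$ on the far side. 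Decoding the crossing pattern inside the pentagon $C''$ (its two crossing diagonals are exactly $(c_3,w)$ and $(c_4,t)$, while $(c_3,c_4)$ is a boundary edge) then forces the boundary adjacencies $w\in N(c_4)$ and $t\in N(c_3)$.

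Finally I would turn these forced adjacencies into a contradiction. Together with $t\in N(c_2)\cap N(c_4)$ and $w\in N(c_1)\cap N(c_3)$, the new incidences make $w$ adjacent to all of $c_1,c_2,c_3,c_4$ (placing $w$ among the exterior neighbours of $c_4$ recorded in the rotation) and make $t$ adjacent to $c_2,c_3,c_4$; tracing how the two facial $5$-cliques $C$ and $C''$ can be simultaneously realized around the crossing-free edge $(c_3,c_4)$ then over-constrains the configuration and clashes either with Property~\ref{obs:twoK5sShareOnly2Vertices} (two facial $5$-cliques sharing more than two vertices) or with Corollary~\ref{col:edgesIncidentToCrossedEdgeDoNotCross} applied inside $C''$. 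Making this last clash fully precise is the delicate part of the proof; the remainder of the argument, summarized in the first two paragraphs, is robust and independent of how the edge $(w,c_1)$ is routed.
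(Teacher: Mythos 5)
Your proposal has a genuine, irrecoverable gap, and it starts with a misreading of the region. The interior of the region delimited by $(c_3,w,c_1,c_0,c_4)$ (the gray region of Fig.~\ref{fig:crossingEdgeInAK5:4}) is the side that does \emph{not} contain the pentagon, $c_2$, $u$ or $v$ --- it is the region beyond the nested paths $c_1$--$u$--$c_3$, $c_1$--$v$--$c_3$, $c_1$--$w$--$c_3$. Your second paragraph asserts the opposite (``this face is enclosed by $R$''), and from that you conclude $(c_4,t)$ departs into the exterior of $R$. In fact, since $x,y,z$ appear between $c_3$ and $c_0$ in the rotation at $c_4$, the edges $(c_4,x),(c_4,y),(c_4,z)$ leave $c_4$ into the wedge outside the pentagon between $(c_4,c_3)$ and $(c_4,c_0)$, which is locally \emph{inside} $R$. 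So if $t\in\mathrm{int}(R)$, the curve $(c_4,t)$ need not cross $\partial R$ at all, and your parity argument never gets off the ground. That your sidedness is reversed is confirmed by the paper's later steps: Proposition~\ref{prop:wIsNotXorY} places $x$ and $y$ in the region bounded by $(c_3,c_2,c_1,w)$, i.e., on the pentagon side, and the proof of that proposition derives a contradiction with the present statement precisely by locating $z$ in the region bounded by $(c_4,w,c_1,c_0)$, a subregion of the far side.

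The second, independently fatal problem is the step you yourself flag as ``the main obstacle'': excluding the crossing $(c_4,t)\times(c_3,w)$. This crossing cannot be excluded, because it genuinely occurs --- Proposition~\ref{prop:wIsNotXorY} proves that $(c_4,x)$ and $(c_4,y)$ \emph{do} cross $(c_3,w)$, and indeed the facial $5$-clique $\langle c_3,x,y,z,c_4\rangle$ you would construct as $C''$ is exactly the clique the paper later identifies (with $w=z$, Proposition~\ref{prop:wIsZ}); it coexists with $C$ without violating Property~\ref{obs:twoK5sShareOnly2Vertices} or Corollary~\ref{col:edgesIncidentToCrossedEdgeDoNotCross}. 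So no correct completion of your plan exists along this route. The paper's actual proof is much shorter and uses a different edge: it considers $(c_2,t)$ rather than $(c_4,t)$. Vertex $c_2$ is separated from the forbidden region by the three nested fans $c_1$--$u$--$c_3$, $c_1$--$v$--$c_3$, $c_1$--$w$--$c_3$ (whose $c_1$-edges and $c_3$-edges do not cross one another, by Corollary~\ref{col:edgesIncidentToCrossedEdgeDoNotCross} applied to the diagonal $(c_1,c_3)$ of $C$, and whose flanks are the uncrossable pentagon edges), so $(c_2,t)$ would have to cross one edge from each of the three pairs $\{(c_1,u),(c_3,u)\}$, $\{(c_1,v),(c_3,v)\}$, $\{(c_1,w),(c_3,w)\}$ --- three crossings on one edge, contradicting $2$-planarity.
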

\begin{proof}
Assume for a contradiction that vertex $z$ lies in the interior of this region (refer to the gray-shaded region in  Fig.~\ref{fig:crossingEdgeInAK5:4}); an analogous argument holds for $x$ and $y$. Then, edge $(c_2,z)$ crosses at least three edges, in particular, one edge of $(c_1,u)$ and $(c_3,u)$, one edge of $(c_1,v)$ and $(c_3,v)$, and one edge of $(c_1,w)$ and $(c_3,w)$; a contradiction to the fact that $\emb$ is a \textemb.
\end{proof}

\begin{proposition}
\label{prop:wIsNotXorY}
$w \notin \{x,y\}$ and edges $(c_4,x)$ and $(c_4,y)$ cross edge $(c_3,w)$.
\end{proposition}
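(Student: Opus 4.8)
The plan is to establish the two crossings directly; the distinctness $w\notin\{x,y\}$ then comes for free, because two edges sharing an endpoint cannot cross. Indeed, if $w=x$ then $(c_3,w)$ and $(c_4,x)=(c_4,w)$ would be adjacent, and symmetrically for $w=y$; so once I show that $(c_4,x)$ and $(c_4,y)$ each cross $(c_3,w)$, neither equality can hold. Hence it suffices to prove the two crossings.

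I would work entirely with the fixed rotations at $c_3$ and $c_4$ and the region $R$ bounded by the cycle $(c_3,w,c_1,c_0,c_4)$ of Proposition~\ref{prop:xNorYNorZNotOutside}. Since $w$ is the common neighbour of $c_1,c_3$ immediately preceding $c_4$ around $c_3$, and $x,y$ immediately follow $c_3$ around $c_4$, the edges $(c_3,w)$, $(c_4,x)$ and $(c_4,y)$ all emanate into the region incident to the crossing-free boundary edge $(c_3,c_4)$. First I would pin down $\mathrm{int}(R)$: it is the bounded pocket on the $c_4$-side of the spokes to $w$, and not the interior of the pentagon. This is already visible in the proof of Proposition~\ref{prop:xNorYNorZNotOutside}, where a point of $\mathrm{int}(R)$ is separated from $c_2$ by all three spoke-pairs $\{(c_1,u),(c_3,u)\}$, $\{(c_1,v),(c_3,v)\}$, $\{(c_1,w),(c_3,w)\}$ — which would not be the case for the pentagon interior. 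Consequently the rotation at $c_4$ forces both $(c_4,x)$ and $(c_4,y)$ to enter $\mathrm{int}(R)$ right at $c_4$.

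The core step is then a case elimination. By Proposition~\ref{prop:xNorYNorZNotOutside} we have $x,y\notin\mathrm{int}(R)$, so each of $(c_4,x),(c_4,y)$ must leave $R$ and therefore cross an edge of the bounding cycle $(c_3,w,c_1,c_0,c_4)$. Of its five edges, $(c_0,c_4)$ and $(c_4,c_3)$ are incident to $c_4$ and so cannot be crossed by an edge at $c_4$; $(c_1,c_0)$ is an edge of the pentangulation and hence crossing-free; and $(c_1,w)$ is excluded by Proposition~\ref{prop:c1WCrossingfree}. Only $(c_3,w)$ remains, so both $(c_4,x)$ and $(c_4,y)$ cross $(c_3,w)$, as claimed. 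Note that this saturates the two-crossing budget of $(c_3,w)$, which is exactly what will later force $z=w$.

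I expect the main obstacle to be the topological bookkeeping around $R$: proving rigorously that $\mathrm{int}(R)$ is the bounded $c_4$-side pocket (so that $(c_4,x),(c_4,y)$ really enter it and must exit), and that each exiting edge meets $(c_3,w)$ exactly once rather than weaving in and out. Both points are controlled by the $2$-planarity of $\emb$ together with the crossing-freeness of the pentangulation edges, but they are the delicate part; the concluding elimination over the five edges of the cycle is routine.
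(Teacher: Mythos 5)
There is a genuine gap, and it sits exactly at the point you wave through as coming ``for free.'' Your case elimination rests on the inference ``$x,y\notin\mathrm{int}(R)$ by Proposition~\ref{prop:xNorYNorZNotOutside}, hence each of $(c_4,x),(c_4,y)$ must leave $R$ and cross a boundary edge of the cycle $(c_3,w,c_1,c_0,c_4)$.'' But $w$ is itself a \emph{vertex} of that bounding cycle. If, say, $y=w$, then the endpoint of $(c_4,y)=(c_4,w)$ lies on the boundary of $R$ (so it is indeed not in $\mathrm{int}(R)$, and Proposition~\ref{prop:xNorYNorZNotOutside} is not violated), and the edge can start into $\mathrm{int}(R)$ at $c_4$ and simply terminate at $w$ without crossing any edge of the cycle. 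So your elimination only proves: for each $t\in\{x,y\}$, \emph{either} $t=w$ \emph{or} $(c_4,t)$ crosses $(c_3,w)$. Deducing $w\notin\{x,y\}$ from ``the crossings hold, and adjacent edges cannot cross'' is therefore circular --- in the degenerate case the crossing claim is exactly what fails, and it fails for the very reason (adjacency) you invoke. The distinctness is a genuine case that must be excluded by a separate argument, not a corollary of the crossings.

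The paper closes this case using the third common neighbor $z$, which your argument never touches: if $w=y$, then since $y,z,c_0$ appear in this order around $c_4$ and $(c_4,z)$ cannot cross $(c_1,w)$ by Proposition~\ref{prop:c1WCrossingfree} (nor the crossing-free pentagon edges, nor its adjacent edge $(c_4,w)$), vertex $z$ gets trapped in the region bounded by $(c_4,w{=}y,c_1,c_0)$, a subregion of the forbidden region, contradicting Proposition~\ref{prop:xNorYNorZNotOutside} \emph{applied to $z$}. Only after $w\notin\{x,y\}$ is established does the paper run the region argument: $x,y$ then lie strictly inside the region bounded by $(c_3,c_2,c_1,w)$, so $(c_4,x)$ and $(c_4,y)$ must cross $(c_3,w)$ or $(c_1,w)$, and Proposition~\ref{prop:c1WCrossingfree} eliminates the latter. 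That second half is essentially identical to your elimination (you work with the complementary region $R$, the paper with the ``bat'' region bounded by $(c_3,c_2,c_1,w)$; the bookkeeping is equivalent, and your handling of the five boundary edges is correct). So your proposal recovers the second claim of the proposition but not the first; to repair it, reinstate the paper's $z$-argument (or an equivalent one) before the elimination, rather than after.
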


\begin{proof}
Assume for a contradiction, that $w = y$; an analogous argument holds if we assume $w=x$. Since $y$, $z$ and $c_0$ appear in this circular order around $c_4$ in $\emb$ and since, by Proposition~\ref{prop:c1WCrossingfree}, edge $(c_4,z)$ does not cross $(c_1,w)$, vertex $z$ is located in the region delimited by the cycle $(c_4,w=y,c_1,c_0)$; see  Fig.~\ref{fig:crossingEdgeInAK5:5}. This contradicts Proposition~\ref{prop:xNorYNorZNotOutside}, and hence $w \notin \{x,y\}$. 

Now, since $C$ is a facial $5$-clique, by Proposition~\ref{prop:xNorYNorZNotOutside}, vertices $x$ and $y$ must be in the interior of the region delimited by the cycle $(c_3,c_2,c_1,w)$. Hence, each of edges $(c_4,x)$ and $(c_4,y)$ crosses either $(c_3,w)$ or $(c_1,w)$. However, by Proposition~\ref{prop:c1WCrossingfree}, the two edges must cross $(c_3,w)$ as claimed.
\end{proof}

\begin{proposition}
\label{prop:wIsZ}
$z=w$.
 \end{proposition}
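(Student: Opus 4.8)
The plan is to pin down the drawn location of $z$ by a region-decomposition argument that exploits the fact that the edge $(c_3,w)$ is already saturated with crossings. First I would record the facts established so far that drive the proof: by Proposition~\ref{prop:wIsNotXorY} both $(c_4,x)$ and $(c_4,y)$ cross $(c_3,w)$, so since $\emb$ is $2$-planar the edge $(c_3,w)$ carries exactly these two crossings and \emph{no further edge may cross it}. I also keep in mind that the pentagon boundary edges $(c_2,c_3)$ and $(c_1,c_2)$ are crossing-free, and that by Proposition~\ref{prop:c1WCrossingfree} the edge $(c_4,z)$ does not cross $(c_1,w)$.

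Next I would isolate the relevant region of the exterior of $C$. Let $R$ be the region bounded by the cycle $(c_3,w,c_1,c_0,c_4)$, which by Proposition~\ref{prop:xNorYNorZNotOutside} contains none of $x,y,z$ in its interior; its boundary consists of the crossing-free pentagon edges $(c_1,c_0),(c_0,c_4),(c_4,c_3)$ together with the two edges $(c_1,w)$ and $(c_3,w)$, and $w$ is the unique non-pentagon vertex on its boundary. At $c_4$ the angular wedge of $R$ is exactly the one delimited by the incident pentagon edges $(c_4,c_3)$ and $(c_4,c_0)$; by the circular order $c_2,c_3,x,y,z,c_0,c_1$ around $c_4$, the edges $(c_4,x),(c_4,y),(c_4,z)$ all leave $c_4$ into this wedge, i.e.\ initially into $R$. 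The edges $(c_4,x)$ and $(c_4,y)$ escape $R$ by crossing $(c_3,w)$ into the neighbouring region.

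The core step is then a trapping argument. Assume for contradiction that $z\neq w$ and trace $(c_4,z)$ from $c_4$ inside $R$. It cannot cross the crossing-free pentagon edges $(c_1,c_0),(c_0,c_4),(c_4,c_3)$; it cannot cross $(c_1,w)$ by Proposition~\ref{prop:c1WCrossingfree}; and it cannot cross the saturated edge $(c_3,w)$. Hence $(c_4,z)$ never leaves $R$, so $z$ lies in the closure of $R$. Since $z\notin C$ and $z$ is not in the interior of $R$ by Proposition~\ref{prop:xNorYNorZNotOutside}, the only admissible position on the boundary of $R$ is the corner $w$, where $(c_4,z)$ may terminate without crossing $(c_3,w)$. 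As distinct vertices occupy distinct points, this forces $z=w$; note the same obstacle (no admissible crossing of the saturated $(c_3,w)$) rules out $z$ coinciding with $u$ or $v$, which lie on the far side of $(c_3,w)$. As a by-product we also obtain that $w$ is adjacent to both $c_2$ and $c_4$.

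The main obstacle I anticipate is the topological bookkeeping of the decomposition: making fully rigorous that the wedge of $(c_4,z)$ at $c_4$ lies in $R$ and that the listed boundary edges are the \emph{complete} set of possible escapes, so that the saturation of $(c_3,w)$ genuinely traps the edge. Everything else reduces to Propositions~\ref{prop:c1WCrossingfree}--\ref{prop:wIsNotXorY} together with $2$-planarity, and the figure accompanying this proposition should make the trapping transparent.
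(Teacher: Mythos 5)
Your proposal is correct and is essentially the paper's own proof read in contrapositive form: the paper uses Propositions~\ref{prop:xNorYNorZNotOutside} and~\ref{prop:c1WCrossingfree} to place $z$ (if $z\neq w$) beyond the cycle through $w$ and concludes that $(c_4,z)$ would give the edge $(c_3,w)$ a third crossing on top of those by $(c_4,x)$ and $(c_4,y)$ from Proposition~\ref{prop:wIsNotXorY}, while you use exactly the same saturation of $(c_3,w)$ to trap $(c_4,z)$ inside the region $R$ and force $z$ onto its boundary at $w$. Same three propositions, same region, same $2$-planarity count, so this is the same argument with only a cosmetic reorganization.
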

 \begin{proof}
Assume for a contradiction that $z\neq w$. By Proposition~\ref{prop:xNorYNorZNotOutside}, it follows that $z$ lies in the interior of the region delimited by the cycle $(c_3,c_2,c_1,w)$.  By Proposition~\ref{prop:c1WCrossingfree}, it follows that $(c_4,z)$ crosses edge $(c_3,w)$. However, by Proposition~\ref{prop:wIsNotXorY}, $(c_3,w)$ is also crossed by $(c_4,x)$ and $(c_4,y)$ as in Fig.~\ref{fig:crossingEdgeInAK5:6}; a contradiction to $2$-planarity. 
 \end{proof}

\begin{figure}[t]


    \centering
    \begin{subfigure}{0.3\textwidth}
        \centering
        \includegraphics[page=7]{figures/new_crossbat}
    \caption{}
    \label{fig:crossingEdgeInAK5:8}
    \end{subfigure}
    \myhfil
    \begin{subfigure}{0.3\textwidth}
        \centering
        \includegraphics[page=8]{figures/new_crossbat}
    \caption{}
    \label{fig:crossingEdgeInAK5:9}
    \end{subfigure}
    \myhfil
    \begin{subfigure}{0.3\textwidth}
        \centering
        \includegraphics[page=9]{figures/new_crossbat}
    \caption{}
    \label{fig:crossingEdgeInAK5:10}
    \end{subfigure}
    
    \caption{(a)--(b) Illustrations for Proposition~\ref{prop:c2zFacial5Clique}. (c) Summary of constraints imposed by Propositions~\ref{prop:c1WCrossingfree}--\ref{prop:c2zFacial5Clique}.
    }
    \label{fig:crossingEdgeInAK5_part2}
\end{figure}

The constraints discussed so far lead to the configuration shown in Fig.~\ref{fig:crossingEdgeInAK5:7}.
Since $z=w$ is a common neighbor of $c_2$ and $c_4$, we turn our attention to edge $(c_2,z)=(c_2,w)$. We prove the following: 

\begin{proposition}
\label{prop:c2zFacial5Clique}
Edge $(c_2,z)$ crosses $(c_1,u)$ and $(c_1,v)$ and  $\langle c_2,c_1,z,v,u  \rangle$  is a facial $5$-clique.
\end{proposition}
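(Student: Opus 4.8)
The plan is to reduce the whole statement to a single crossing computation and then to read off the \facialKFive from Property~\ref{lem:crossingTwoEdgesBelongToSameK5}. Recall that $z=w$ by Proposition~\ref{prop:wIsZ}, so $(c_2,z)=(c_2,w)$ is an edge of $G$, since $w$ is a common neighbour of $c_2$ and $c_4$. I would show that in $\emb$ this edge is crossed by exactly the two edges $(c_1,u)$ and $(c_1,v)$. Granting this, the conclusion is immediate: as $(c_1,u)$ and $(c_1,v)$ share the endpoint $c_1$, Property~\ref{lem:crossingTwoEdgesBelongToSameK5} applied to the twice-crossed edge $(c_2,z)$ places $(c_2,z)$, $(c_1,u)$ and $(c_1,v)$ in one common \facialKFive. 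That \facialKFive has vertex set $\{c_1,c_2,u,v,z\}$, and the only cyclic order in which $(c_2,z),(c_1,u),(c_1,v)$ are diagonals is one where $c_1$ lies between $c_2$ and $z$ and $u,v$ are its two diagonal partners; after fixing the otherwise symmetric names of $u$ and $v$ this is exactly $\langle c_2,c_1,z,v,u\rangle$.

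For the crossing computation I would work inside the region $R$ bounded by the $4$-cycle $(c_3,c_2,c_1,w)$ that already governs Propositions~\ref{prop:xNorYNorZNotOutside}--\ref{prop:wIsZ} and Fig.~\ref{fig:crossingEdgeInAK5:7}. Every boundary edge of $R$ is incident to $c_2$ or to $w$, hence $(c_2,z)=(c_2,w)$ crosses none of them and is a chord of $R$ joining the two non-consecutive corners $c_2$ and $w$; it therefore separates $c_1$ from $c_3$ inside $R$. By Corollary~\ref{col:edgesIncidentToCrossedEdgeDoNotCross} (applied to the crossing edge $(c_1,c_3)$ of $C$) the edges $(c_1,u),(c_1,v)$ do not cross $(c_3,u),(c_3,v),(c_3,w)$, which, together with the crossing-freeness of the boundary edges of $C$, confines $u$ and $v$ as well as their four incident fan edges to the interior of $R$. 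Since each of $u,v$ is adjacent to both separated corners $c_1$ and $c_3$, for each of them exactly one incident fan edge must cross the chord $(c_2,z)$; as $\emb$ is $2$-planar these are \emph{precisely} the two crossings of $(c_2,z)$, and no further edge is crossed by it. Finally, by the shared-endpoint clause of Property~\ref{lem:crossingTwoEdgesBelongToSameK5} the two crossing edges must have a common endpoint, so only two cases survive: the crossings are $\{(c_1,u),(c_1,v)\}$ or $\{(c_3,u),(c_3,v)\}$.

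It therefore remains to discard the mirror case $\{(c_3,u),(c_3,v)\}$, which I expect to be the main obstacle but which admits a clean resolution. The two crossings $(c_4,x)\times(c_3,w)$ and $(c_4,y)\times(c_3,w)$ of Proposition~\ref{prop:wIsNotXorY}, whose crossing edges share the endpoint $c_4$, force via Property~\ref{lem:crossingTwoEdgesBelongToSameK5} the \facialKFive $\langle c_3,c_4,w,x,y\rangle$; thus the crossing edge $(c_3,w)$ belongs to it and, by Property~\ref{lem:crossingEdgesAreNotShared}, to no other \facialKFive. In the mirror case, however, Property~\ref{lem:crossingTwoEdgesBelongToSameK5} would additionally produce a \facialKFive on $\{c_2,c_3,u,v,w\}$, and this $5$-clique contains the edge $(c_3,w)$ as well. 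Since $(c_3,w)$ is crossed, it cannot be one of the crossing-free boundary edges of that clique, so it would be a crossing diagonal of a \emph{second} \facialKFive, contradicting Property~\ref{lem:crossingEdgesAreNotShared}. Hence the mirror case is impossible, $(c_2,z)$ crosses $(c_1,u)$ and $(c_1,v)$, and the proposition follows. The delicate part is precisely the bookkeeping that the chord $(c_2,z)$ stays inside $R$ and is crossed exactly twice; once that is pinned down, Properties~\ref{lem:crossingTwoEdgesBelongToSameK5} and~\ref{lem:crossingEdgesAreNotShared} do the rest.
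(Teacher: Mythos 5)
Your proposal is correct and follows essentially the same route as the paper's proof: you force $(c_2,z)$ to pick up exactly one crossing from each of the $u$- and $v$-fans (the paper gets this from the circular orders around $c_1$ and $c_3$, you via the region bounded by $(c_3,c_2,c_1,w)$), invoke Property~\ref{lem:crossingTwoEdgesBelongToSameK5} to make the two crossers share $c_1$ or $c_3$, and eliminate the $c_3$-case using the crossings of $(c_4,x)$ and $(c_4,y)$ with $(c_3,w)$ established in Proposition~\ref{prop:wIsNotXorY}. The only cosmetic difference is in that last step: the paper observes that $(c_3,z)$ would be a crossing-free boundary edge of the \facialKFive $\langle c_3,c_2,u,v,z\rangle$, directly contradicting those crossings, whereas you reroute through Property~\ref{lem:crossingEdgesAreNotShared} by exhibiting $(c_3,w)$ as a crossing edge of two distinct \facialKFives --- both contradictions rest on the same ingredients.
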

\begin{proof}
By the circular order of edges around $c_1$ and $c_3$, edge $(c_2,z)$ must cross one of edges $(c_1,u)$ and $(c_3,u)$ as well as one of edges $(c_1,v)$ and $(c_3,v)$. First we prove that it must either cross $(c_1,u)$ and $(c_1,v)$ or $(c_3,u)$ and $(c_3,v)$. 
Indeed if $(c_2,z)$ crosses $(c_1,u)$ and $(c_3,v)$ as in Fig.~\ref{fig:crossingEdgeInAK5:8} (the case where  $(c_2,z)$ crosses $(c_1,v)$ and $(c_3,u)$ is symmetric), edge $(c_2,z)$ is crossed by two independent edges, contradicting Property~\ref{lem:crossingTwoEdgesBelongToSameK5}.

It remains to prove that $(c_2,z)$  does not cross edges $(c_3,u)$ and $(c_3,v)$. 
Assume the contrary; see Fig.~\ref{fig:crossingEdgeInAK5:9}. Then, by Property~\ref{lem:crossingTwoEdgesBelongToSameK5}, $\langle c_3, c_2, u, v, z \rangle$ is a \facialKFive with $(c_3,z)$ being one of its crossing-free boundary edges. However, by Proposition~\ref{prop:wIsNotXorY} edges $(c_4,x)$ and $(c_4,y)$  cross with edge $(c_3,z)$, as $w=z$; a contradiction. We conclude that  $(c_2,z)$ crosses $(c_1,u)$ and $(c_1,v)$ forming the facial $5$-clique $\langle c_2,c_1,z,v,u  \rangle$.
\end{proof}
To summarize so far, Propositions~\ref{prop:xNorYNorZNotOutside} and \ref{prop:wIsNotXorY} imply that vertices $x$ and $y$ lie in the interior of the region delimited by the cycle $(c_3,c_4,c_0,c_1,z=w)$. Since $\langle c_0,c_1,c_2,c_3,c_4  \rangle$ and $\langle c_2,c_1,z,v,u  \rangle$ are facial $5$-cliques, we can further restrict $x$ and $y$ in the interior of the region delimited by the cycle $(c_3,c_2,u,v,z)$; see Fig.~\ref{fig:crossingEdgeInAK5:10}. In the following proposition, we restrict $x$ and $y$ even more.

\begin{figure}[t]


    \centering
    \begin{subfigure}{0.3\textwidth}
        \centering
        \includegraphics[page=10]{figures/new_crossbat}
    \caption{}
    \label{fig:crossingEdgeInAK5:11}
    \end{subfigure}
    \myhfil
    \begin{subfigure}{0.3\textwidth}
        \centering
        \includegraphics[page=11]{figures/new_crossbat}
    \caption{}
    \label{fig:crossingEdgeInAK5:12}
    \end{subfigure}
    \myhfil
    \begin{subfigure}{0.3\textwidth}
        \centering
        \includegraphics[page=12]{figures/new_crossbat}
    \caption{}
    \label{fig:crossingEdgeInAK5:13}
    \end{subfigure}

    \caption{(a)  Illustration for Proposition~\ref{prop:xNorYinC3vz}. (b) Illustration for Proposition~\ref{prop:vIsY}. (c) Summary of constraints imposed by Propositions~\ref{prop:c1WCrossingfree}--\ref{prop:vIsY}.
    }
    \label{fig:crossingEdgeInAK5_part3}
\end{figure}

\begin{proposition}
\label{prop:xNorYinC3vz}
Neither $x$ nor $y$ lies in the interior of the region delimited by the cycle $(c_3,v,z)$.
\end{proposition}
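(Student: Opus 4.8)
The plan is to argue by contradiction, treating $x$ and $y$ symmetrically; so assume that $x$ lies in the interior of the region $R$ bounded by the cycle $(c_3,v,z)$. Recall that by the discussion preceding the proposition, $R$ is a subregion of the enclosing pentagon $(c_3,c_2,u,v,z)$ whose interior contains $x$ and $y$. First I would record that under this assumption $x$ differs from each of $c_1,c_2,u,v,z$ and $c_3$: indeed $x\notin C$ gives $x\neq c_1,c_2,c_3$; the vertices $v,z$ are corners of $R$ (hence not in its interior), and $u$ is a corner of the enclosing pentagon lying on its boundary, so $x\neq u,v,z$. Consequently $x\in N(c_2)\setminus V(C_2)$, where $C_2=\langle c_2,c_1,z,v,u\rangle$ is the facial $5$-clique produced by Proposition~\ref{prop:c2zFacial5Clique}.

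Next I would pin down which sides of $R$ can be crossed. The side $(v,z)$ is a crossing-free boundary edge of $C_2$, so no edge crosses it. The side $(z,c_3)=(w,c_3)$ is, by Propositions~\ref{prop:wIsNotXorY} and~\ref{prop:wIsZ}, already crossed by both $(c_4,x)$ and $(c_4,y)$; it is therefore saturated and cannot be crossed a third time without violating $2$-planarity. Since $c_2$ is a corner of the enclosing pentagon and thus lies outside $R$, while $x$ lies in its interior, the edge $(c_2,x)$ must cross the boundary of $R$ by the Jordan curve theorem; as it can cross neither $(v,z)$ nor $(z,c_3)$, every boundary crossing of $(c_2,x)$ is with the remaining side $(c_3,v)$, so in particular $(c_2,x)$ crosses $(c_3,v)$.

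Finally I would extract the contradiction from Corollary~\ref{col:edgesIncidentToCrossedEdgeDoNotCross}. The edge $(c_2,v)$ is a crossing edge inside $C_2$; moreover $x\in N(c_2)\setminus V(C_2)$ as shown, and $c_3\in N(v)\setminus V(C_2)$, the latter because $c_3\in C$ and $C\cap C_2=\{c_1,c_2\}$ by Property~\ref{obs:twoK5sShareOnly2Vertices}. The corollary then asserts that $(c_2,x)$ and $(v,c_3)$ do not cross, contradicting the previous paragraph. Running the identical argument with $y$ in place of $x$ rules out $y\in R$ as well, which establishes the proposition. The step I expect to require the most care is the topological one: verifying that $c_2$ genuinely lies outside $R$ and that $(c_2,x)$ is indeed forced across $(c_3,v)$ rather than escaping through the only non-crossing-free side $(z,c_3)$ of the enclosing pentagon (which is blocked precisely because it is already saturated). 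Everything else is bookkeeping of the facial $5$-cliques fixed in Propositions~\ref{prop:c1WCrossingfree}--\ref{prop:c2zFacial5Clique} together with a single application of Corollary~\ref{col:edgesIncidentToCrossedEdgeDoNotCross}.
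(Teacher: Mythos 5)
Your proof is correct and follows essentially the same route as the paper's: force the crossing of $(c_2,x)$ with $(c_3,v)$ and contradict Corollary~\ref{col:edgesIncidentToCrossedEdgeDoNotCross} applied to the crossing edge $(c_2,v)$ of the facial $5$-clique $\langle c_2,c_1,z,v,u\rangle$. Your only addition is making explicit why $(c_2,x)$ cannot escape through $(v,z)$ (crossing-free boundary edge) or $(z,c_3)$ (already saturated by $(c_4,x)$ and $(c_4,y)$), a step the paper leaves to the figure.
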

\begin{proof}
Assume to the contrary that vertex $x$ lies in the interior of the region delimited by the cycle $(c_3,v,z)$.; an analogous argument applies for $y$. Then, edge $(c_2,x)$ crosses $(c_3,v)$; see Fig.~\ref{fig:crossingEdgeInAK5:11}. This contradicts Corollary~\ref{col:edgesIncidentToCrossedEdgeDoNotCross}, since $(c_2,v)$ is a crossing edge in the facial $5$-clique $\langle c_2,c_1,z,v,u  \rangle$. 
\end{proof}
\begin{proposition}
\label{prop:vIsNotX}
$v \neq x$ and edge $(c_4,x)$ crosses $(c_3,v)$.
\end{proposition}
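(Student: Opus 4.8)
The plan is to combine the crossing already secured in Proposition~\ref{prop:wIsNotXorY} with a routing argument inside the region $R$ bounded by the cycle $(c_3,c_2,u,v,z)$, exploiting the fact that the $2$-planar budget of edge $(c_4,x)$ is almost exhausted. First I would collect what is already fixed. By Proposition~\ref{prop:wIsZ} we have $z=w$, so Proposition~\ref{prop:wIsNotXorY} tells us that $(c_4,x)$ crosses $(c_3,z)$. By the discussion preceding Proposition~\ref{prop:xNorYinC3vz} (summarised in Fig.~\ref{fig:crossingEdgeInAK5:10}), vertex $x$ lies in the interior of $R$; since $v$ sits on the boundary of $R$, this already yields $v\neq x$. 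Moreover the edge $(c_3,v)$ is a chord of the pentagon $R$ that splits it into the triangle $T=(c_3,v,z)$ and the quadrilateral $Q=(c_3,c_2,u,v)$, and Proposition~\ref{prop:xNorYinC3vz} places $x$ outside $T$; as $x$ is an interior, non-corner vertex of $R$, it must lie in the open region $Q$.

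It then remains to prove the crossing $(c_4,x)\times(c_3,v)$. Vertex $c_4$ lies outside $R$, so the curve of $(c_4,x)$ runs from the exterior of $R$ to a point in the interior of $Q$ and therefore meets $\partial R$ an odd number of times. It already crosses the boundary edge $(c_3,z)$, and by $2$-planarity it has exactly one crossing left. Crossing $(c_3,z)$ carries the curve onto the interior side of that edge, which is precisely the triangle $T$. Since $T$ and $Q$ are separated inside $R$ only by the chord $(c_3,v)$, the one remaining crossing must be spent on $(c_3,v)$: any other way of reaching $x\in Q$ would force the curve to leave $R$ through a second boundary edge and re-enter, costing at least two further crossings, which is impossible. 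Hence $(c_4,x)$ crosses $(c_3,v)$, which together with $v\neq x$ establishes the proposition.

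The main obstacle is to make the planar bookkeeping watertight. One has to argue, from the configuration fixed by Propositions~\ref{prop:c1WCrossingfree}--\ref{prop:xNorYinC3vz}, that $(c_3,c_2,u,v,z)$ genuinely bounds a disk with $c_4$ in its exterior and with $(c_3,z)$ on its boundary incident to $T$, and to verify \emph{strict} interiority of $x$ in $R$ (which is exactly what supplies $v\neq x$ for free); should only a closed placement be available, one falls back to a crossing-count contradiction on $(c_4,x)$ via Property~\ref{lem:crossingTwoEdgesBelongToSameK5}. Once the disk structure is in place, no alternative routing respecting the budget avoids $(c_3,v)$, and the claim follows. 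This crossing is also the intended payoff: since $(c_4,x)$ is now crossed by the two edges $(c_3,z)$ and $(c_3,v)$, which share the endpoint $c_3$, Property~\ref{lem:crossingTwoEdgesBelongToSameK5} forces $(c_4,x)$, $(c_3,z)$ and $(c_3,v)$ into a common \facialKFive on the vertex set $\{c_3,c_4,z,v,x\}$, setting up the identification used in the next proposition.
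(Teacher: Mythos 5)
Your second half---forcing the crossing of $(c_4,x)$ with $(c_3,v)$ by a disk-and-budget argument inside the region bounded by $(c_3,c_2,u,v,z)$---is essentially the paper's argument: the paper compresses it to the single sentence that the crossing is ``imposed by Proposition~\ref{prop:xNorYinC3vz}'', and your bookkeeping (enter the triangle $(c_3,v,z)$ through $(c_3,z)$, after which only the chord $(c_3,v)$ separates it from the part containing $x$, while exiting and re-entering the region would exceed the two-crossing budget) is the right expansion. The genuine gap is in the first half, $v\neq x$. The placement of $x$ and $y$ ``in the interior of the region delimited by the cycle $(c_3,c_2,u,v,z)$'', established just before Proposition~\ref{prop:xNorYinC3vz}, must be read as a closed-region statement: the paper explicitly warns that $x$, $y$, $z$ need not be distinct from $u$, $v$, $w$, and coincidence with boundary vertices of your region $R$ really does occur---Proposition~\ref{prop:vIsY} later proves $y=v$, so $y$ \emph{is} a boundary vertex of $R$, and $x=u$ remains possible until Proposition~\ref{prop:uIsNotX}. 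Claiming that strict interiority of $x$ supplies $v\neq x$ ``for free'' is therefore circular: ruling out precisely these boundary coincidences for $x$ is what this proposition and Proposition~\ref{prop:uIsNotX} exist to do.

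Your fallback for the degenerate case also does not work. If $x=v$, the edge $(c_4,x)=(c_4,v)$ crosses $(c_3,z)$ once (Proposition~\ref{prop:wIsNotXorY} with $z=w$ from Proposition~\ref{prop:wIsZ}) and can then simply terminate at the corner $v$ of the triangle $(c_3,v,z)$: it carries a single crossing, so neither $2$-planarity nor Property~\ref{lem:crossingTwoEdgesBelongToSameK5} (which needs an edge crossed by \emph{two} edges) yields any contradiction, and no crossing count on $(c_4,x)$ alone can. The missing idea is that the contradiction under $x=v$ comes from the \emph{other} common neighbor $y$, via the fixed circular order $c_2,c_3,x,y,z,c_0,c_1$ around $c_4$: with $x=v$, the edge $(c_4,y)$ emanates from $c_4$ in the wedge between $(c_4,v)$ and $(c_4,z)$, and since $(c_4,y)$ also crosses $(c_3,z)$ by Proposition~\ref{prop:wIsNotXorY}, vertex $y$ is trapped in the interior of the region bounded by the cycle $(c_3,v,z)$---contradicting Proposition~\ref{prop:xNorYinC3vz}. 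This is the paper's one-line proof of $v\neq x$, and it is the same trapping device already used in Proposition~\ref{prop:wIsNotXorY} to exclude $w\in\{x,y\}$; your proposal needs this rotation-order argument (or an equivalent use of $y$) to close the gap.
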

\begin{proof}
First assume for a contradiction that $v = x$. Due to circular order of the edges around $c_4$, vertex $y$ can only be located in the interior of the region delimited by the cycle $(c_3,v,z)$. This contradicts Proposition~\ref{prop:xNorYinC3vz}.
Hence $v \neq x$. The crossing between $(c_4,x)$ and $(c_3,v)$ is imposed by  Proposition~\ref{prop:xNorYinC3vz}.
\end{proof}
\begin{proposition}
\label{prop:vIsY}
$v=y$.
\end{proposition}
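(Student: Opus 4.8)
The plan is to pin down $y$ by recognizing that the crossing edge $(c_3,z)$ lies in two facial $5$-cliques that are forced to coincide. The key tools are Property~\ref{lem:crossingTwoEdgesBelongToSameK5}, which turns a pair of crossings sharing an endpoint into a \facialKFive, and Property~\ref{lem:crossingEdgesAreNotShared}, which says a crossing edge determines its \facialKFive uniquely. I would avoid a fresh region analysis and instead recycle the crossings already established in Propositions~\ref{prop:wIsNotXorY}, \ref{prop:wIsZ} and~\ref{prop:vIsNotX}.

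First I would examine the edge $(c_4,x)$. By Proposition~\ref{prop:wIsNotXorY} it is crossed by $(c_3,w)=(c_3,z)$ (recall $z=w$ from Proposition~\ref{prop:wIsZ}), and by Proposition~\ref{prop:vIsNotX} it is also crossed by $(c_3,v)$. These two crossing edges share the endpoint $c_3$, so Property~\ref{lem:crossingTwoEdgesBelongToSameK5} places $(c_4,x)$, $(c_3,z)$ and $(c_3,v)$ inside a single \facialKFive, whose vertex set is therefore exactly $\{c_3,c_4,x,z,v\}$.

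Next I would apply the same principle to the edge $(c_3,z)$ itself. By Proposition~\ref{prop:wIsNotXorY} it is crossed both by $(c_4,x)$ and by $(c_4,y)$, two edges sharing the endpoint $c_4$; hence Property~\ref{lem:crossingTwoEdgesBelongToSameK5} puts $(c_3,z)$, $(c_4,x)$ and $(c_4,y)$ in a common \facialKFive with vertex set $\{c_3,c_4,x,z,y\}$. Since $(c_3,z)$ carries crossings it is a crossing edge, so by Property~\ref{lem:crossingEdgesAreNotShared} it belongs to exactly one \facialKFive. Consequently the two cliques obtained above must be the same clique, and comparing their vertex sets gives $v=y$, as claimed.

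The only step demanding care is verifying that the five listed vertices are genuinely distinct in each case, so that each \facialKFive really has five vertices and the vertex-set comparison is valid: this follows from $x,z\notin C$, from $z=w\notin\{x,y\}$ (Proposition~\ref{prop:wIsNotXorY}), from $v\neq x$ (Proposition~\ref{prop:vIsNotX}), and from $v\neq w=z$ because $u,v,w$ are distinct common neighbors of $c_1$ and $c_3$. I expect this bookkeeping, rather than the clique identification, to be the only mildly delicate part of the argument.
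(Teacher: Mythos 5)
Your proof is correct, and it takes a somewhat different route from the paper's. The paper argues by contradiction: assuming $v\neq y$, it combines Propositions~\ref{prop:xNorYinC3vz} and~\ref{prop:vIsNotX} to conclude that both $(c_4,x)$ and $(c_4,y)$ cross both $(c_3,z)$ and $(c_3,v)$, whence Property~\ref{lem:crossingEdgesBelongToSameK5} would force the six distinct vertices $c_3,c_4,x,y,v,z$ into a single \facialKFive, which is impossible. You instead argue directly: the two crossings already established on $(c_4,x)$ (Propositions~\ref{prop:wIsNotXorY} and~\ref{prop:vIsNotX}, with $z=w$ by Proposition~\ref{prop:wIsZ}) pin its \facialKFive to $\{c_3,c_4,x,z,v\}$ via Property~\ref{lem:crossingTwoEdgesBelongToSameK5}, the two crossings on $(c_3,z)$ (Proposition~\ref{prop:wIsNotXorY}) pin its \facialKFive to $\{c_3,c_4,x,z,y\}$, and the uniqueness of the \facialKFive containing the crossing edge $(c_3,z)$ (Property~\ref{lem:crossingEdgesAreNotShared}) identifies the two vertex sets, giving $v=y$. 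Your version is marginally leaner: it uses only crossings established unconditionally by earlier propositions, so it does not need the additional geometric step by which the paper, under the hypothesis $v\neq y$, extracts the extra crossing of $(c_4,y)$ with $(c_3,v)$ from the region constraint of Proposition~\ref{prop:xNorYinC3vz}; the paper's contradiction, in turn, fits its figure-driven exposition (Fig.~\ref{fig:crossingEdgeInAK5:12}). Your distinctness bookkeeping is also complete and necessary for the vertex-set comparison ($v\neq z$ since $u,v,w$ are distinct common neighbors, $v\neq x$ by Proposition~\ref{prop:vIsNotX}, $z\notin\{x,y\}$ by Proposition~\ref{prop:wIsNotXorY}, and all of $u,v,w,x,y,z$ lie outside $C$), so both five-element sets are genuine vertex sets of {\facialKFives} and the conclusion follows.
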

\begin{proof}
Assume for a contradiction that $v \neq y$. By Propositions~\ref{prop:xNorYinC3vz} and~\ref{prop:vIsNotX}, it follows that both edges $(c_4,x)$ and $(c_4,y)$ cross both of $(c_3,z)$ and $(c_3,v)$; see Fig.~\ref{fig:crossingEdgeInAK5:12}. Then, by Property~\ref{lem:crossingEdgesBelongToSameK5}, the six vertices $c_4,c_3,x,y,v,z$ must form a \facialKFive; a contradiction. 
\end{proof}
The constraints imposed by Propositions~\ref{prop:c1WCrossingfree}--\ref{prop:vIsY} imply the configuration shown in Fig.~\ref{fig:crossingEdgeInAK5:13}. It remains to decide whether $u=x$ or if $u$ is located in the interior of cycle $(c_3,u,y)$.
\begin{proposition}
\label{prop:uIsNotX}
$x \neq u$.
\end{proposition}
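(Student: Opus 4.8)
The plan is to argue by contradiction: I assume $x=u$ and exhibit two \emph{distinct} \facialKFives that share three vertices, which is forbidden by Property~\ref{obs:twoK5sShareOnly2Vertices}. The two cliques will be the one already produced in Proposition~\ref{prop:c2zFacial5Clique} and the one carrying the two crossings that are forced onto the edge $(c_4,x)$.

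First I collect the crossings on $(c_4,x)$. By Proposition~\ref{prop:vIsNotX} the edge $(c_4,x)$ crosses $(c_3,v)$, and by Proposition~\ref{prop:wIsNotXorY} it also crosses $(c_3,w)$. Since $v$ and $w$ are two \emph{distinct} common neighbours of $c_1$ and $c_3$, these are two different edges, both incident to $c_3$, crossing $(c_4,x)$; as $\emb$ is $2$-planar they are exactly the two crossings of $(c_4,x)$. I then invoke Property~\ref{lem:crossingTwoEdgesBelongToSameK5} for $(c_4,x)$ crossed by $(c_3,v)$ and $(c_3,w)$: the three edges lie in a common \facialKFive, which I call $C''$. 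Its five vertices are precisely the endpoints $c_3,c_4,v,w,x$, which are pairwise distinct (recall $x\neq v$ by Proposition~\ref{prop:vIsNotX}, and $x\neq w=z$ since $x$ and $z$ are distinct common neighbours of $c_2,c_4$), so $V(C'')=\{c_3,c_4,v,w,x\}$.

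Now I bring in the second clique. By Proposition~\ref{prop:c2zFacial5Clique} together with $z=w$ (Proposition~\ref{prop:wIsZ}), the vertices $\langle c_2,c_1,w,v,u\rangle$ form a \facialKFive $C'$ with $V(C')=\{c_1,c_2,u,v,w\}$. Under the assumption $x=u$ we get $V(C'')=\{c_3,c_4,u,v,w\}$. The two cliques are genuinely different — for instance $c_3\in C''$ but $c_3\notin C'$ — yet they both contain $u$, $v$ and $w$, hence share three vertices. This contradicts Property~\ref{obs:twoK5sShareOnly2Vertices}, so the assumption $x=u$ is untenable and $x\neq u$.

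The only point requiring care is the applicability of Property~\ref{lem:crossingTwoEdgesBelongToSameK5}, namely that $(c_3,v)$ and $(c_3,w)$ really are the two distinct crossings of $(c_4,x)$ and that no accidental coincidence among $c_3,c_4,v,w,x$ collapses $C''$ to fewer than five vertices; all of these follow from the distinctness facts already established in Propositions~\ref{prop:wIsZ}, \ref{prop:vIsY} and~\ref{prop:vIsNotX}. Once the vertex sets of $C'$ and $C''$ are pinned down, the contradiction is immediate, so I expect a short argument with no case analysis.
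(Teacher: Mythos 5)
Your proof is correct, but it follows a genuinely different route from the paper's. The paper argues via Corollary~\ref{col:edgesIncidentToCrossedEdgeDoNotCross}: under the assumption $x=u$, the edge $(z,x)$ becomes a crossing edge inside the \facialKFive $\langle c_2,c_1,z,y,x\rangle$ of Proposition~\ref{prop:c2zFacial5Clique} (using $z=w$ and $v=y$), so the corollary forbids the crossing between $(c_4,x)$ and $(c_3,z)$ --- which Proposition~\ref{prop:wIsNotXorY} has already forced; contradiction. You instead apply Property~\ref{lem:crossingTwoEdgesBelongToSameK5} to the two crossings $(c_3,v)$ and $(c_3,w)$ on $(c_4,x)$ (established in Propositions~\ref{prop:vIsNotX} and~\ref{prop:wIsNotXorY}) to produce the \facialKFive with vertex set $\{c_3,c_4,v,w,x\}$, and then observe that $x=u$ would make it share the three vertices $u,v,w$ with the clique of Proposition~\ref{prop:c2zFacial5Clique}, violating Property~\ref{obs:twoK5sShareOnly2Vertices}. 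Your distinctness checks ($x\neq v$, $x\neq w=z$, and $v\neq w$ as distinct common neighbors) are all in order, so the clique genuinely has five vertices and the two cliques are genuinely different. It is worth noting that the clique $\langle c_3,x,y,z,c_4\rangle$ you construct is exactly the one the paper derives \emph{immediately after} this proposition, by the very same application of Property~\ref{lem:crossingTwoEdgesBelongToSameK5}; your argument shows that this construction never needed $x\neq u$ in the first place, which lets the sharing bound of Property~\ref{obs:twoK5sShareOnly2Vertices} settle the proposition at once. What the paper's route buys is a purely local contradiction (one forbidden crossing), while yours buys a slight streamlining of the overall proof of Lemma~\ref{lem:crossBat}, since the subsequent identification of the fourth \facialKFive comes for free.
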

\begin{proof} 
Assume for a contradiction that $x = u$. Recall  from Proposition~\ref{prop:wIsNotXorY} that edge $(c_4,x)$  crosses edge $(c_3,z)$; see Fig.~\ref{fig:crossingEdgeInAK5:14}.  Consider now edge $(z,x)$ which is part of the facial $5$-clique $\langle c_2,c_1,z,y,x  \rangle$. By Corollary~\ref{col:edgesIncidentToCrossedEdgeDoNotCross}, edge $(c_4,x)$ is not allowed to cross edge $(c_3,z)$, which however is the case; a contradiction.
\end{proof}  

By Proposition~\ref{prop:vIsNotX} and by $2$-planarity, we conclude that  $x$ is located inside the cycle $(c_3,u,y)$; see Fig.~\ref{fig:crossingEdgeInAK5:16}. Since by Propositions~\ref{prop:wIsNotXorY} and~\ref{prop:vIsNotX}, edge $(c_4,x)$ crosses with edges $(c_3,w)$ and $(c_3,v)$, respectively, it follows by  Property~\ref{lem:crossingTwoEdgesBelongToSameK5} that $\langle c_3, x,y, z, c_4 \rangle$ is a \facialKFive.
In addition, $(c_4,z)$ is a crossing-free edge on the boundary of this facial $5$-clique.

\begin{figure}[t]


    \centering
    \begin{subfigure}{0.3\textwidth}
        \centering
        \includegraphics[page=13]{figures/new_crossbat}
    \caption{}
    \label{fig:crossingEdgeInAK5:14}
    \end{subfigure}
    \myhfil
    \begin{subfigure}{0.3\textwidth}
        \centering
        \includegraphics[page=14]{figures/new_crossbat}
    \caption{}
    \label{fig:crossingEdgeInAK5:16}
    \end{subfigure}
    \myhfil
    \begin{subfigure}{0.3\textwidth}
        \centering
        \includegraphics[page=15]{figures/new_crossbat}
    \caption{}
    \label{fig:crossingEdgeInAK5:18}
    \end{subfigure}

    \caption{(a) Illustration for Proposition~\ref{prop:uIsNotX}. (b) Summary of constraints imposed by Propositions~\ref{prop:c1WCrossingfree}--\ref{prop:uIsNotX}. (c) The crossing between $(c_3,u)$ and $(c_2,x)$ implies the \facialKFive $\langle x,c_3,c_2,u,x' \rangle$.
    }
    \label{fig:crossingEdgeInAK5_part4}
\end{figure}

Now, since edges $(c_2,x)$ and $(c_3,u)$ cross, they belong to a  \facialKFive that contains vertices $c_3,c_2,u,x$. Let $x'$ be the last vertex of this \facialKFive. It must be $x'\neq y$ since edge $(c_3,y)$ is a crossing edge inside the \facialKFive $\langle c_3,x,y,z,c_4\rangle$. 

The final configuration is depicted in Fig.~\ref{fig:crossingEdgeInAK5:18}. Vertices $c_2$ and $c_3$ have degree nine in $G$, with eight common neighbors $N(c_2)\cap N(c_3)=\{c_0,c_1,c_4,u,v,w,x,x'\}$. 
Let $H$ be the subgraph induced by the vertex set $V(H)=N(c_2)\cup N(c_3)$.
Since cycles $(c_0,c_4,z,c_1)$ and $(x,x',u,y)$ are crossing-free cycles and since the two edges $(c_1,c_4)$ and $(x,u)$ are already part of a \facialKFive, $H$ contains only the edges of the four identified \facialKFives and possibly one or both of edges $(c_0,z)$ and $(x',y)$.

Consider a relabeling of the vertices of $H$ based on the following table:
\smallskip\\
%
\begin{minipage}{\textwidth}\centering{\begin{tabular}{l|c|c|c|c|c|c|c|c|c|c}
Figure~\ref{fig:crossingEdgeInAK5:18}\hphantom{a}               & \hphantom{a}{$c_0$}\hphantom{a} & \hphantom{a}$c_1$\hphantom{a} & \hphantom{a}$c_2$\hphantom{a} & \hphantom{a}$c_3$\hphantom{a} & \hphantom{a}$c_4$\hphantom{a} & \hphantom{a}$u$\hphantom{a}  & \hphantom{a}$v=y$\hphantom{a} & \hphantom{a}$w=z$\hphantom{a} & \hphantom{a}$x$\hphantom{a} & \hphantom{a}$x'$\hphantom{a} \\ \hline
 $G_{CB}$ \hphantom{a}& $v$   & $x'$  & $u'$  & $u$   & $x$   & $y'$ & $w'$  & $v'$  & $y$ & $w$ 
\end{tabular}}
\end{minipage}
\smallskip

We obtain the \crossingConfiguration substructure of Fig.~\ref{fig:crossingConfigurationStart} where $C$ is the \facialKFive $\langle v,x',u',u, x \rangle$ as claimed. Since the boundaries of the four identified \facialKFives are crossing-free in $\emb$, we can easily verify that the edges of $H$ are classified to \crossing and potentially planar according to the definition of \crossingConfiguration; see also Fig.~\ref{fig:crossingConfigurationStart}. 
\end{proof}

\setcounter{section}{2}
\renewcommand\thesection{\Alph{section}}

\crossBatEmb*

\setcounter{section}{3}
\setcounter{lemma}{2}
\renewcommand\thesection{\arabic{section}}
\begin{proof}
Consider an instance $H$ of \crossingConfiguration  in $\emb$. In the following, we use the naming of vertices according to Fig.~\ref{fig:crossingConfigurationStart} for the vertices of $H$. First we prove that the four $5$-cliques of $H$ are always \facialKFives of $\emb$.
\begin{restatable}{proposition}{pairOfCrossinigEdges}\label{prop:pairOfCrossinigEdges}
Edges $(u,v)$ and $(u',v)$ belong to the same \facialKFive $C_{v}$ in $\emb$ with $V(C_v)\setminus \{u,u',v\} \subset \{x,x',v'\}$. Also, edges $(u,w)$ and $(u',w)$ belong to another \facialKFive $C_{w}$ with $V(C_w)\setminus \{u,u',w\} \subset \{y,y',w'\}$.
\end{restatable}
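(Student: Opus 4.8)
The plan is to establish the claim for the vertex $v$; the statement for $w$ then follows by the symmetric argument obtained by exchanging $\{x,x',v'\}$ with $\{y,y',w'\}$. The key tool will be Corollary~\ref{lem:crossedEdgesShareAK5}, which I would apply to the two edges $(v,u)$ and $(v,u')$, both incident to $v$. First I would record the neighborhoods of the base vertices: since $u$ and $u'$ have degree $9$ in $G$ and their eight common neighbors are exactly the remaining vertices of $H$, we have $N(u)=\{u',v,v',w,w',x,x',y,y'\}$ and $N(u')=\{u,v,v',w,w',x,x',y,y'\}$; in particular $N(u),N(u')\subseteq V(H)$.

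Next I would compute the common neighborhoods feeding into the corollary. By the definition of \crossingConfiguration, the only edges with both endpoints in $V(H)$ are those of the four $5$-cliques together with the possible dashed edges $(v,v')$ and $(w,w')$; as $v$ lies only in the $5$-clique $\langle v,x',u',u,x\rangle$, its neighbors inside $V(H)$ are $u,u',x,x'$ and possibly $v'$. Intersecting with $N(u)$ and $N(u')$ gives $N(v,u)\subseteq\{u',x,x',v'\}$ and $N(v,u')\subseteq\{u,x,x',v'\}$, hence $S:=N(v,u)\cup N(v,u')\cup\{v,u,u'\}\subseteq\{u,u',v,x,x',v'\}$ and $|S|\le 6<8$. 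Corollary~\ref{lem:crossedEdgesShareAK5} then guarantees a common \facialKFive $C_v$ to which both $(v,u)$ and $(v,u')$ belong; since both edges are \crossing (indeed $|N(v,u)|,|N(v,u')|\le 4<6$), Property~\ref{lem:crossingEdgesAreNotShared} makes this clique unique, so that $u,u',v\in V(C_v)$.

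Finally I would identify the two remaining vertices of $C_v$. Being a $5$-clique, every vertex of $C_v$ is adjacent to all of $u$, $u'$ and $v$, so $V(C_v)\setminus\{u,u',v\}\subseteq N(u)\cap N(u')\cap N(v)$. Since $N(u)\cap N(u')$ is precisely the set of eight common neighbors and only $x$, $x'$ (and possibly $v'$) among them are adjacent to $v$, we obtain $V(C_v)\setminus\{u,u',v\}\subset\{x,x',v'\}$, as required.

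I do not expect a genuine obstacle once the neighborhoods are set up; the only delicate point is to verify that the possibly present dashed edge $(v,v')$ (and, in the symmetric case, $(w,w')$) enlarges $S$ by at most the single vertex $v'$ (resp.\ $w'$), so that the bound $|S|\le 6<8$ — and hence the applicability of the corollary — survives in every case.
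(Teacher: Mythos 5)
Your proof is correct and follows essentially the same route as the paper's: both deduce $N(u,v)\subseteq\{u',x,x',v'\}$ and $N(u',v)\subseteq\{u,x,x',v'\}$ from the degree-$9$ condition on $u,u'$ and the edge set of \crossingConfiguration (including the possibly absent edge $(v,v')$), invoke Corollary~\ref{lem:crossedEdgesShareAK5} to obtain the common \facialKFive, read off the two remaining vertices from the common neighborhood, and finish by symmetry for $w$. Your added remarks (that $(u,v)$ and $(u',v)$ are \crossing since $|N(u,v)|\leq 4<6$, so the clique is unique by Property~\ref{lem:crossingEdgesAreNotShared}) are sound but not required for the statement.
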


\begin{proof}
By definition, vertices $u$ and $u'$ have degree 9 in $G$ and $V(H)=N(u)\cup N(v)$. In particular $N(u) = V(H) \setminus \{u\}$ and $N(u') = V(H) \setminus \{u'\}$. 
On the other hand, the neighborhood of $v$ in $H$ consists of $u$, $u'$, $x$ and $x'$ (which form a $5$-clique with $v$) and potentially $v'$ (if the edge $(v,v')$ exists). We conclude that $N(u,v)\subseteq\{u',x,x',v'\}$, and $N(u',v)\subseteq\{u,x,x',v'\}$. Then $N(u,v) \cup N(u',v)\setminus\{u,u'\}\subseteq \{x,x',v'\}$. By Corollary~\ref{lem:crossedEdgesShareAK5}, it follows that $(u,v)$ and $(u',v)$ belong to the same \facialKFive $C_v$. The vertex set of $C_v$ contains vertices $u$, $u'$ and $v$, and two vertices from $\{x,x',v'\}$. By symmetry, the same holds for edges $(u,w)$ and $(u',w)$, that is, they belong to the same \facialKFive $C_w$, whose vertex set contains vertices $u$, $u'$, $w$, and two vertices from $\{y,y',w'\}$.
\end{proof}

\begin{restatable}{proposition}{cxy}\label{prop:cxy}
Let $\{u_1,u_2\}=\{u,u'\}$. 
Edges $(y,x)$, $(y,v')$ and $(x,w')$ belong to a \facialKFive $C_{x,y}=\langle u_1,y,w',v',x\rangle$ in $\emb$. Also, edges $(y',x')$ and $(y',v')$ and $(x',w')$ belong to another \facialKFive $C_{x',y'}=\langle u_2,x',v',w',y'\rangle$.
\end{restatable}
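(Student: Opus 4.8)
My plan is to identify, for each of the two degree-$9$ vertices $u$ and $u'$, the precise set of \facialKFives containing it, and then to recover $C_{x,y}$ and $C_{x',y'}$ as the ``third'' clique at $u$ and at $u'$. Since $\deg(u)=9$, the combinatorial structure of optimal $2$-planar graphs recalled in Section~\ref{sec:preliminaries} shows that around $u$ the edges split into exactly three crossing-free and six crossing edges, so $u$ lies in exactly three \facialKFives; each of them uses two of $u$'s crossing edges as diagonals and two crossing-free edges as its boundary edges at $u$, and consecutive cliques share a boundary edge. By Proposition~\ref{prop:pairOfCrossinigEdges}, two of these three cliques are $C_v$ and $C_w$, which share the crossing-free base edge $(u,u')$ at $u$; the same applies to $u'$.

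Before locating the third clique, I would pin down $C_v$ and $C_w$, namely prove $v'\notin C_v$ and $w'\notin C_w$, so that $C_v=\{v,u,u',x,x'\}$ and $C_w=\{w,u,u',y,y'\}$. Since $(u,v)$ and $(u',v)$ are the diagonals of $C_v$ at $v$ (Proposition~\ref{prop:pairOfCrossinigEdges}), in the pentagon of $C_v$ the vertices $u$ and $u'$ lie opposite $v$, hence the two remaining vertices of $C_v$ are split as one boundary-neighbor of $u$ and one of $u'$. Suppose $v'\in C_v$; then these two remaining vertices are $v'$ and some $\xi\in\{x,x'\}$. Whichever of $u,u'$ has $\xi$ (rather than $v'$) as its boundary-neighbor must, by the crossing-edge count above, have its third clique contain both $x$ and $x'$; but a $K_5$ containing $x$ and $x'$ can contain neither $y$ nor $y'$, because $(x,y')$ and $(x',y)$ are non-edges of \crossingConfiguration, so its two vertices on the $\{y,y',w'\}$-side would both have to be $w'$ -- a contradiction. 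A symmetric argument (swapping the roles of $v,x,x',v'$ with $w,y,y',w'$) gives $w'\notin C_w$.

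With $C_v$ and $C_w$ fixed, counting the crossing edges at $u$ shows that the two crossing edges of $u$ not used by $C_v,C_w$ are exactly $(u,v')$ and $(u,w')$; hence the third clique is $C^{(3)}_u=\{u,v',w',s,t\}$ with $s\in\{x,x'\}$ and $t\in\{y,y'\}$ its boundary-neighbors at $u$. As $C^{(3)}_u$ is a $K_5$ and $(x,y'),(x',y)$ are non-edges, $\{s,t\}$ is either $\{x,y\}$ or $\{x',y'\}$, and likewise $C^{(3)}_{u'}=\{u',v',w',s',t'\}$ for the corresponding options. Since $C^{(3)}_u$ and $C^{(3)}_{u'}$ already share $v'$ and $w'$, Property~\ref{obs:twoK5sShareOnly2Vertices} forces $\{s,t\}$ and $\{s',t'\}$ to be the two \emph{distinct} options. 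Setting $\{u_1,u_2\}=\{u,u'\}$ so that $u_1$ is the apex of the clique through $x,y$, these two cliques are $C_{x,y}$ and $C_{x',y'}$. Finally, writing each pentagon along its crossing-free boundary cycle -- e.g.\ $(u_1,y),(y,w'),(w',v'),(v',x),(x,u_1)$ for $C_{x,y}$ -- yields the stated orders $\langle u_1,y,w',v',x\rangle$ and $\langle u_2,x',v',w',y'\rangle$, and the three diagonals not incident to the apex are precisely the crossing edges claimed to belong to each clique.

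I expect the main obstacle to be the middle step, i.e.\ excluding $v'$ from $C_v$ (and $w'$ from $C_w$). The difficulty is that the edge classification does not by itself determine $C_v$: a diagonal of a \facialKFive need not be a \crossing edge, since a potentially planar edge may still cross in $\emb$, so $C_v$ could a priori be any $K_5$ on $\{v,u,u'\}$ plus two of $\{x,x',v'\}$. The way around this is to run the crossing-edge count at $u$ \emph{and} at $u'$ in tandem and to exploit the two non-edges $(x,y')$ and $(x',y)$ of \crossingConfiguration, which is exactly what eliminates the spurious possibilities.
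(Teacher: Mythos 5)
Your proof is correct, but it is organized quite differently from the paper's. The paper's proof of Proposition~\ref{prop:cxy} starts from the \crossing edge $(x,y)$: it belongs to a unique \facialKFive whose remaining three vertices come from $N(x,y)=\{u,u',v',w'\}$, and it cannot contain both $u$ and $u'$ because $(u,u')$ already lies in the two \facialKFives $C_v$ and $C_w$ of Proposition~\ref{prop:pairOfCrossinigEdges} (Property~\ref{lem:crossingEdgesAreNotShared}); this forces $V(C_{x,y})=\{x,y,u_1,v',w'\}$ immediately, symmetrically for $(x',y')$, and $u_1\neq u_2$ follows by counting: a degree-$9$ vertex lies in exactly three \facialKFives, and $u,u'$ are already in $C_v$ and $C_w$. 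You instead front-load the work: you first prove $v'\notin C_v$ and $w'\notin C_w$ by a rotation analysis at the copy of $u/u'$ boundary-adjacent to $\xi\in\{x,x'\}$, showing its third clique would have to contain both $x$ and $x'$ together with two of $\{y,y',w'\}$, which the non-edges $(x,y')$, $(x',y)$ of \crossingConfiguration forbid; then you recover $C_{x,y}$ and $C_{x',y'}$ as the third cliques at $u$ and $u'$ by counting crossing edges, and you separate them via Property~\ref{obs:twoK5sShareOnly2Vertices} applied to the shared pair $\{v',w'\}$ rather than by clique-counting at $u$. Notably, the paper proves $v'\notin C_v$ \emph{after} Proposition~\ref{prop:cxy} (as a one-line consequence of it via Property~\ref{obs:twoK5sShareOnly2Vertices}), so your route reverses the paper's logical order without circularity; what it buys is that $C_v=\{v,u,u',x,x'\}$ and $C_w=\{w,u,u',y,y'\}$ come out en route, facts the paper must still establish afterwards, at the cost of a longer local analysis than the paper's short common-neighborhood computation $N(x,y)=\{u,u',v',w'\}$. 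One step you should make explicit: your edge count at the apex $u_1$ only fixes $x,y$ as its boundary neighbors, so the boundary cycle could a priori be $(u_1,y,v',w',x)$ instead of $(u_1,y,w',v',x)$; to disambiguate, add that $(y,v')$ and $(x,w')$ are classified \crossing, hence cross in $\emb$ by Corollary~\ref{col:identificationProperty} and must be diagonals, which forces the stated order.
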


\begin{proof}
Consider edge $(y,x)$. Since $(y,x)$ is \crossing it belongs to a \facialKFive  $C_{x,y}$ of $\emb$. We have that $N(y,x)=\{u,u',v',w'\}$. By Proposition~\ref{prop:pairOfCrossinigEdges}, the edge $(u,u')$ belongs to two \facialKFives, that contain vertices $v$ and $w$ respectively. Since $N(y,x)$  contains neither $v$ nor $w$, it follows that $C_{x,y}$ contains exactly one of $u$ or $u'$. Then $V(C_{x,y})=\{x,y,u_1,v',w'\}$, where $u_1\in \{u,u'\}$. Hence  all three edges $(y,x)$ and $(y,v')$ and $(x,w')$ belong to the same \facialKFive, namely $C_{x,y}$.
By symmetry, edges $(y',x')$ and $(y',v')$ and $(x',w')$ belong to another \facialKFive $C_{x',y'}$ where $V(C_{x',y'})=\{x',y',u_2,v',w'\}$ and $u_2\in \{u,u'\}$.
It remains to argue that $u_1\neq u_2$. Recall that vertices $u$ and $u'$ have degree nine in $G$ and therefore belong to three \facialKFives in $\emb$. Already, by Proposition~\ref{prop:pairOfCrossinigEdges} we have identified two \facialKFives. Clearly if $u_1=u_2=u$, vertex $u$ would belong to two more \facialKFives, namely $C_{x,y}$ and $C_{x',y'}$, that is a total of four \facialKFives.
\end{proof}
As initially mentioned, we identified four \facialKFives, $C_v$, $C_w$, $C_{x,y}$ and $C_{x',y'}$.  
Recall that $C_v$ contains two vertices from $\{x,x',v'\}$. We show now, that $C_v$ does not contain $v'$. Assume for a contradiction that $C_v$ contains $v'$ and w.l.o.g. vertex $x$. Then $V(C_v) \cap V(C_{x,y})= \{u_1,x,v'\}$; a contradiction to Property~\ref{obs:twoK5sShareOnly2Vertices}. Hence, $C_v$ contains both $x$ and $x'$. Analogously, $C_w$ contains both of $y$ and $y'$.

By Proposition~\ref{prop:cxy},  $C_{y,x}$ and $C_{y',x'}$ are two \facialKFives that share the crossing-free edge $(v',w')$. Moreover, by Property~\ref{lem:crossingEdgesAreNotShared}, edge $(u_1,u_2)=(u,u')$ is crossing-free  as it belongs to the two \facialKFives $C_v$ and $C_w$ by Proposition~\ref{prop:pairOfCrossinigEdges}; refer to Figs.~\ref{fig:crossingConfigurationEmbed1} and \ref{fig:crossingConfigurationEmbed2}. Hence, $w$ is contained in the crossing-free  separating cycle $(u_1,u_2,y',w',y)$ while  $v$ is contained in the crossing-free separating cycle $(u_2,u_1,x,v',x')$. Since $(u_1,w)$ and $(u_2,w)$ are \crossing, edges $(w,y)$ and $(w,y')$ are on the boundary of $C_w$, the same holds for  edges $(v,x)$ and $(v,x')$ being on the boundary of $C_v$.  
Now, by selecting $u_1=u$ or $u_1=u'$, we yield the two different embeddings of Figs.~\ref{fig:crossingConfigurationEmbed1} and \ref{fig:crossingConfigurationEmbed2}.
\end{proof}

\setcounter{section}{2}
\renewcommand\thesection{\Alph{section}}

\crossBatTrue*

\begin{proof}
By Lemma~\ref{lem:crossingConfigurationChooseEmbed}, we can assume w.l.o.g.\ that $H$ has the rotation scheme of \crossingConfiguration shown in Fig.~\ref{fig:crossingConfigurationEmbed1} in $\emb$. Then, it is not hard to see that edges $(u',x')$ and $(u,x)$ are potentially planar and cross each other. The same holds for edges $(u',y')$ and $(u,y)$. It is evident that no other pair of potentially planar edges that belong to $H$ cross.
\end{proof}

\section{Ommited Proofs from Section~\ref{sec:triplets}}
\label{app:triplets}

\oneTripletFaceDisjoint*

\begin{proof}
Assume for a contradiction that $\mathcal{T}'$ is \notbad. In $\emb_{\mathcal{T}'}$, edge $e_1$ is assigned to $T_1$ and edge $e_2$ to $T_2$. By the assumptions of the lemma, none of $T_1$ and $T_2$ is face-disjoint from $T$ (i.e., Case C.\ref{c:1} applies neither for $T_1$ nor for $T_2$).
We consider two cases, depending on whether Case C.\ref{c:2} applies for one of $T_1$ and $T_2$ or Case C.\ref{c:3} applies for both $T_1$ and $T_2$.

First, consider the case where w.l.o.g.\ $T_1$ 
shares faces $f$ and $f_1$ with $T$ (i.e., Case C.\ref{c:2} applies for $T_1$); the reasoning for $T_2$ is symmetric. We claim that $T_2$ is face-disjoint from $T$, contradicting our assumption that none of $T_1$ and $T_2$ is face-disjoint from $T$. Indeed, $T_2$ cannot share faces $f$ and $f_2$ with $T$, as otherwise $e_1$ and $e_2$ cross each other in $\emb_{\mathcal{T}'}$ and thus $T_1$ and $T_2$ would be identical with $T$. On the other hand, if $T_2$ contains face $f_1$ (which is also contained in $T_1$), then it follows that $T_1$, $T_2$ and $T$ coincide; a contradiction. Hence, $T_2$ is necessarily face-disjoint with $T$, contradicting our assumption that Case C.\ref{c:1} does not apply for $T_2$.

Second, consider the case where Case~C.\ref{c:3} applies for both $T_1$ and $T_2$, that is, for $\{i,j\}=\{1,2\}$ \triplet 
$T_i$ shares only face $f_j$ with $T$. Clearly $T_1\neq T_2$.
In particular, in $\emb_{\mathcal{T}'}$ the \facialKFive formed by $T_1$ contains vertices $u,v_2,w_2$ (the three vertices of face $f_2$) and vertex $v_1$ (which is endpoint of edge $e_1$). Similarly, in $\emb_{\mathcal{T}'}$ the \facialKFive formed by $T_2$ contains vertices $u,v_1,w_1,v_2$. Hence, the two \facialKFives share three vertices; a contradiction to Property~\ref{obs:twoK5sShareOnly2Vertices}. This concludes the proof.
\end{proof}

\notBothTripletsFaceDisjoint*

\begin{proof}
Assume for a contradiction that $\mathcal{T}'$ is \notbad, and each of $T_1$ and $T_2$ is face-disjoint from $T$. Recall that in $\emb_{\mathcal{T}'}$ edge $e_1$ is assigned to $T_1$ and edge $e_2$ to $T_2$, where $T_1=T_2$ might hold.
Since $\mathcal{T}'$ is \notbad, $\emb_{\mathcal{T}'}$ can be extended to an \compliant \textemb $\emb'$ of $G$. This implies that there exists (at least) one \triplet $T_f$ that contains face $f$, such that  $\mathcal{T}'\cup\{T_f\}$ is \notbad.
If $T_f$ contains $f_1$, then  by Observation~\ref{obs:crossBothFacesOrNone} edge $e_1$ is assigned to $T_f$, i.e., $T_f=T_1$. This contradicts our assumption that $T_1$ is face-disjoint with $T$. Hence, $T_f$ does not contain $f_1$. Similarly, we argue that $T_f$ does not contain $f_2$.

By the assumptions of the lemma, $T_1$ and $T_2$ are face-disjoint from $T$. 
Thus, in  $\emb_{\mathcal{T}'}$ both edges $e_1$ and $e_2$ are drawn outside the faces of $T$, which implies that they cross each other.
Let $u,w_2,x_2,x_1$ and $w_1$ be the vertices of $T_f$ as shown in Fig.~\ref{fig:triplet_proof1}.  Now edge $(w_1,w_2)$ is a crossing edge of the \facialKFive formed by $T_f$ in $\emb_{\mathcal{T}'}$, which, by Corollary~\ref{col:edgesIncidentToCrossedEdgeDoNotCross}, implies that $e_1$ and $e_2$ do not cross each other; a contradiction. 
\end{proof}

\begin{figure}[t]
    \centering
    \includegraphics[page=2]{}
    \caption{Illustration for the proof of~Lemma~\ref{lem:notBothTripletsFaceDisjoint}.}
    \label{fig:triplet_proof1}
\end{figure}

\cOneCThreeConditions*

\begin{proof}
Assume that $\mathcal{T}'$ is \notbad. Then $\partialemb[']$ can be extended to  a \compliant \textemb $\emb'$ of $G$. Consider the \facialKFives of $T_1$ and $T_2$ in $\partialemb[']$. Since the boundary edges of \facialKFives are crossing-free and $T_2$ shares only face $f_1$ with $T$, edge $e_2$ must cross edge $(v_1,u)$; see Fig.~\ref{fig:triplet_proof2}. 
$T_2$ contains vertices $v_1$, $w_1$, $u$ (since these are vertices of $f_1$) and vertex $v_2$ (as an endpoint of edge $e_2$). Let $z$ be the last vertex of $T_2$. If $z\in V(T)$ then $T_2=T$ by Observation~\ref{obs:eitherOnlyTwoVerticesOrFace}, contradicting the assumption that $T_2$ shares only face $f_1$ with $T$. Since edges $e$ and $e_2$ are both incident to $v_2$ and are crossing edges in $\partialemb[']$, edges $(v_2,z)$ and $(v_2,u)$ are crossing-free. Similarly, edges $(v_1,z)$ and $(v_1,w_1)$ are crossing-free. Hence, the \facialKFive of $T_2$ is 
$\langle w_1,v_1,z,v_2,u \rangle$;  see Fig~\ref{fig:triplet_proof2}. 

\begin{figure}[t]
    \centering
    \includegraphics[page=3]{}
    
    \caption{Illustration for the proof of Lemma~\ref{lem:c1c3_conditions}.}
    \label{fig:triplet_proof2}
\end{figure}

Let $T_f$ and $T_3$ be the \triplets that faces $f$ and $f_2$ are assigned to w.r.t. $\emb'$. It must be the case that $T_f\neq T_3$ as otherwise, $T_f$ would contain faces $f$ and $f_2$ of $T$ and, as a consequence, edge $e_2$. Then $T_f$ would be identical to $T_2$ contradicting the fact that $T_2$ shares only face $f_1$ with $T$.
Hence we conclude that there exist \triplets $T_f$ and $T_3$ such that $\mathcal{T}'\cup\{T_f, T_3\}$ is \notbad. Observe that $u$ is surrounded by the three \facialKFives formed by $T_2$, $T_f$ and $T_3$ in $\emb'$, that pair-wise share an uncrossed edge incident to $u$.  This implies that $d(u)=9$. Hence, if $d(u) > 9$, then $\mathcal{T}'$ is \bad as stated in the lemma. It remains to prove that there is a vertex $x \in S=N(u) \setminus V(T)$ with $|N(x)\cap S|< 2$.

The \facialKFives formed by \triplets  $T_f$ and $T_3$ share vertices $u$ and $w_2$, thus, by Property~\ref{lem:crossingEdgesAreNotShared}, edge  $(u,w_2)$ is crossing-free in $\emb'$. Similarly, 
$T_2$ and $T_3$ share vertices $u$ and $v_2$,
while $T_2$ and $T_{f}$ share vertices $u$ and $w_1$. Hence, edges $(u,v_2)$ and $(u,w_1)$ are also crossing-free in $\emb'$.

Thus, the \facialKFive formed by $T_f$ is bounded by the crossing-free $5$-cycle $(u,w_2,x_2,x_1,w_1)$ while the \facialKFive formed by $T_f$ is bounded by the crossing-free $5$-cycle $(u,v_2,y_2,y_1,w_2)$ where $x_1,x_2,y_1,y_2 \not \in V(T)$; see Fig~\ref{fig:triplet_proof2}.
Observe that since $T_{f}$ and $T_3$ share vertices $u$ and $w_2$ we can conclude that $\{x_1,x_2\} \cap \{y_1,y_2\} = \emptyset$ by Property~\ref{obs:twoK5sShareOnly2Vertices}. Similarly,  each of $T_{f}$ and $T_3$  shares two vertices with $T_2$, thus $z \not \in \{x_1,x_2,y_1,y_2\}$. 

Since \triplets $T_1$ and $T_f$ share vertex $w_2$,  at least one vertex among $x_1$ and $x_2$, say $x_1$, does not belong to $T_1$ by Property~\ref{obs:twoK5sShareOnly2Vertices}. Let $S=N(u) \setminus V(T)=\{x_1,x_2,y_1,y_2,z\}$. We claim that $|N(x_1) \cap S| \leq 1$.
Since $T_1$ is a \facialKFive in $\partialemb[']$, there exist two crossing-free paths $P_1$ and $P_1'$ that connect $v_1$ and $w_2$. Assume w.l.o.g. that the edges of $P_1$ and $P_1'$ incident to $v_1$ appear in this order and between edges $(v_1,w_1)$ and $(v_1,z)$ in the circular order of edges around $v_1$\footnote{It might be that the edge of $P_1$ incident to $v_1$
coincides with  $(v_1,w_1)$ or that the edge of $P_1'$ incident to $v_1$
coincides with  $(v_1,z)$.}; refer to Fig.~\ref{fig:triplet_proof2}. The cycle $(w_2,P_1,v_1,w_1,u)$ is crossing-free in $\emb'$. Since $x_1$ lies in the interior of this cycle, while vertices $z$, $y_1$ and $y_2$ lie in its exterior, 
it follows that vertex $x_1$ is not adjacent to vertices $z$, $y_1$ or $y_2$, i.e., $x_1$ is adjacent only to vertex $x_2$ of $S$.

To summarize, assuming that $\mathcal{T}'$ is \notbad, vertex $u$ has degree nine and there exists at least one neighbor of $u$ that does not belong to $V(T)$, namely $x_1$, such that $|N(x_1)\cap S|\leq 1$. The lemma follows.
\end{proof}

\cOneCThree*

\begin{proof}
Assume for a contradiction that $\mathcal{T}$ is \notbad. Let $\emb$ be a \compliant \textemb of $G$ that extends $\partialemb$. In $\emb$, the \triplet $T_x$ that contains edge $(v,x)$ contains three more neighbors of $v$.  Since $|N(x)\cap S|\leq 1$,  $T_x$ contains at least two vertices from $V(T)\setminus \{v\}$, contradicting Property~\ref{obs:twoK5sShareOnly2Vertices}.
\end{proof}

\containsVertexConditions*

\begin{proof}
Assume that $\mathcal{T}'$ is \notbad and that $T_1$ contains vertex $w_1$. Then $\partialemb[']$ can be extended to  a \compliant \textemb $\emb'$ of $G$. 
Since $e_1$ belongs to $T_1$, it follows that $T_1$ also contains vertices $v_1$ and $w_2$ of $T$. Let $x_1$ and $x_2$ be the other two vertices of $T_1$. First we claim that neither $x_1$ nor $x_2$ is a vertex of $T$. Assume the contrary, say for $x_1$. Then, $x_1$ is either vertex $u$ or vertex $v_2$ of $T$. If $x_1$ is vertex $u$, then by Observation~\ref{obs:eitherOnlyTwoVerticesOrFace}, \triplet $T_1$ would contain face $f_1$ of $T$ contradicting the assumption that $T_1$ is face-disjoint with $T$. On the other hand, if $x_1$ is vertex $v_2$, then $T_1$ would contain edge $e_2$, contradicting the fact that $T_1$ is different from $T_2$.

Now, consider a counter-clockwise walk around the boundary of the \facialKFive formed by $T_1$ in $G_p$. If edge $(v_1,w_1)$ is not on the boundary of $T_1$, then face $f_1$ would belong to $T_1$; a contradiction to the fact that $T_1$ is face-disjoint with $T$. Similarly, $(w_1,w_2)$ is also on the boundary of $T_1$, since face $f$ does not belong to $T_1$. This implies that vertices $v_1$, $w_1$ and $w_2$ appear consecutively along the boundary of $T_1$, followed by vertices $x_1$ and $x_2$.

\begin{figure}[t]
    \centering
    \includegraphics[scale=1,page=1]{}
    \caption{Illustration for the proof of Lemma~\ref{lem:c1c2TeContainsVertex_conditions}.}
    \label{fig:c1c2:1}
\end{figure}

Recall that w.r.t. $\partialemb[']$ edges $e_1$, $e_2$ and $e$ are assigned to $T_1$, $T_2$ and $T_e$ respectively, as shown in  Fig.~\ref{fig:c1c2:1}. Note that  $f_1$ is not contained in any of $T_1$, $T_2$ or $T_e$. Let $T_3$ be the \triplet that forms the \facialKFive of $\emb'$ where $f_1$ belongs to. Since $\mathcal{T}'$ is \notbad, we conclude that $\mathcal{T}'\cup \{T_3\}$ is also \notbad.  Observe that $w_1$ is surrounded by the three \facialKFives formed by $T_1$, $T_2$ and $T_3$ in $\emb'$, that pair-wise share an uncrossed edge incident to $w_1$. 
Thus, vertex $w_1$ is a degree-9 vertex. Clearly, if $d(w_1) > 9$, $\mathcal{T}'$ is \bad as stated in the lemma. It remains to prove that there is no vertex $x \in S=N(w_1) \setminus V(T)$ with $|N(x)\cap S| \geq 4$.

By the assumptions of the lemma, \triplet $T_2$ contains faces $f$ and $f_2$ of $T$, hence it contains vertices $w_1$, $w_2$, $u$ and $v_2$ of $T$. Let $z$ be the last vertex of $T_2$, which is different from $v_1$. 
Now, along the boundary of $T_3$ vertices $u$, $w_1$ and $v_1$ appear consecutively, followed by  two more vertices $y_1,y_2 \not \in V(T)$ that are different from vertices $x_1$, $x_2$ and $z$. 
Hence the neighbors of $w_1$ in circular order  are  
$\{v_1,y_1,y_2,u,z,v_2,w_2,x_1,x_2\}$ or $\{v_1,y_1,y_2,u,v_2,z,w_2,x_1,x_2\}$ depending on whether vertex $z$ appears before of after vertex $v_2$.
Let $S=\{x_1,x_2,y_1,y_2,z\}$ be the set of neighbors of $w_1$ that do not belong to $T$. We claim that every vertex of $S$ has at most $3$ neighbors in $S$.

Recall that the boundary edges of the \facialKFives in $\partialemb['\cup\{T_3\}]$ are crossing-free. Since $T_e$ is a \facialKFive in $\partialemb['\cup\{T_3\}]$, it follows that there exist two crossing-free paths $P_e$ and $P_e'$ that connect vertices $v_1$ and $v_2$. 
Assume w.l.o.g. that the edges of $P_e'$ and $P_e$ incident to $v_1$ appear in this order and between edges $(v_1,x_2)$ and $(v_1,y_1)$ in the circular order of edges around $v_1$\footnote{It might be that the edge of $P_e'$ incident to $v_1$
coincides with  $(v_1,x_2)$ or that the edge of $P_e$ incident to $v_1$
coincides with  $(v_1,y_1)$.}; refer to Fig.~\ref{fig:c1c2:1}. 
Then $P_e$ along with the crossing-free path $v_2,u,w_1,v_1$ forms a crossing-free cycle in $\partialemb['\cup\{T_3\}]$. Vertex $y_2$ lies in the interior of this cycle (note that $y_1$ might be a vertex of this cycle), while vertex $x_1$ lies in its exterior. Thus, $x_1$ and $y_2$ can't be adjacent vertices. Also, any other vertex of $S$ can be adjacent to at most one of $x_1$ and $y_2$, proving the claim.
To summarize, assuming that $\mathcal{T}'$ is \notbad, and that vertex $w_1$ has degree nine, we conclude that for every neighbor $x$ of $w_1$ in $S$ it holds $|N(x)\cup S|\leq 3$, proving the first part of the lemma. 
The case where $\mathcal{T}'$ is \notbad  and $T_e$ contains vertex $u$ is symmetric.
\end{proof}

\containsVertex*

\begin{proof}Assume for a contradiction that $\mathcal{T}$ is \notbad. Thus there is a \compliant \textemb $\emb$ of $G$ that extends $\partialemb$.
In $\emb$,  the neighbors of $v$ belong to  three \facialKFives, one of them being $T$. Let $T'$ and $T''$ be two other \facialKFives which include $v$ so that  $T'$ and $T''$  share vertices $v$ and $c \neq v$. Clearly, $c\in S$, and  $|N(v) \cap N(c)|\geq 6$. Since each of $T'$ and $T''$ can contain at most one vertex of $T$  different from $v$, this implies that $c$ must be adjacent to at least $4$ vertices of $S$; a contradiction.
\end{proof}

\noVertexConditions*
\begin{proof}
Assume that $\mathcal{T}'$ is \notbad, i.e. there exists a \compliant \textemb $\emb'$ of $G$ that extends  $\partialemb[']$. By assumption, $T_2$ shares only faces $f$ and $f_2$ with $T$ in $\emb'$. In $\emb'$, face $f_1$ is part of a \facialKFive formed by a \triplet $T_{f_1}$ which is necessarily face-disjoint from all of $T_1$, $T_2$ and $T_e$. Since $f_1$ already shares vertices $u$ and $w_1$ with the \facialKFive formed by $T_2$, \triplet $T_{f_1}$ contains neither vertex $v_2$ nor $w_2$ of $T$, i.e. $T_{f_1}$ shares only the three vertices of $f_1$ with $T$. The lemma follows.
\end{proof}

\noVertex*

\begin{proof}
Assume for a contradiction that $\mathcal{T}$ is \notbad.
The boundary of \triplet $T_1$ consists of two paths in $\emb_p$, say $P_1$ and $P'_1$, that connect vertices $v_1$ and $w_2$. Similarly, the boundary of $T_e$ consists of two paths in $\emb_p$, say $P_e$ and $P'_e$, that connect vertices $v_1$ and $v_2$. 
Note that $P_1$ and $P'_1$ are edge-disjoint. The same holds for paths 
$P_e$ and $P'_e$. Assume w.l.o.g. that in the counter-clockwise order of edges around $v_1$, we encounter the edges of $P_1$, $P_1'$, $P_e'$ and $P_e$ incident to $v_1$ in this order between edges $(v_1,w_1)$ and $(v_1,u)$; see Fig.~\ref{fig:c1c2:part2}. Thus, $P_1$ and $P_e$ are edge-disjoint. On the other hand, paths 
$P'_1$ and $P'_e$ might either be edge-disjoint, as in Fig.~\ref{fig:c1c2:4},  or they share an edge, as shown in Figs.~\ref{fig:c1c2:5} and~\ref{fig:c1c2:8}. Note that in the second case the edge of $P'_1$ incident to $v_1$ is the same as the corresponding edge of $P'_e$.

\begin{figure}[t]


\centering
    \begin{subfigure}[b]{0.31\textwidth}
    \includegraphics[page=5]{figures/optimal_c1c2.pdf}
    \caption{}
    \label{fig:c1c2:4}
    \end{subfigure}
    \myhfil
    \begin{subfigure}[b]{0.335\textwidth}
    \includegraphics[page=8]{figures/optimal_c1c2.pdf}
    \caption{}
    \label{fig:c1c2:5}
    \end{subfigure}
    \myhfil
    \begin{subfigure}[b]{0.335\textwidth}
    \includegraphics[page=6]{figures/optimal_c1c2.pdf}
    \caption{}
    \label{fig:c1c2:8}
    \end{subfigure}
    \caption{Illustration for the proof of Lemma~\ref{lem:c1c2TeNoVertex}.}
    \label{fig:c1c2:part2}
\end{figure}

Consider \triplet $T_{f_1}$ of the lemma that contains face $f_1$  of $T$. Since by assumption $T_{f_1}\neq T_2$, \triplet  $T_{f_1}$ must contain at least one of the triangular faces that share edge $(v_1,w_1)$ or $(v_1,u)$ with face $f_1$. 
Assume w.l.o.g. that $T_{f_1}$ contains the face sharing edge $(v_1,w_1)$ with  $f_1$; the case where $T_{f_1}$ contains the face sharing edge $(v_1,u)$ with $f_1$ is symmetric. Let $u'$ be the third vertex of this face that is different from vertices $v_1$ and $w_1$; see Fig.~\ref{fig:c1c2:part2}. Since both $u$ and $u'$ are vertices of \triplet $T_{f_1}$, edge $(u,u')$ is an edge of $G$. Furthermore $u'$ lies in the interior of the cycle $(w_2,P_1',v_1,w_1)$. In particular, the presence of path $P'_1$ in $G_p$ implies that edge $(u,u')$ is a \crossing edge of $G$. 

Since $\mathcal{T}$ is \notbad there exists a \compliant \textemb $\emb$ of $G$ that extends $\partialemb$.
We focus on edge $(u,u')$ in $\emb$, and consider two cases, depending on whether paths $P'_1$ and $P'_e$ are edge-disjoint or not. If $P'_1$ and $P'_e$ are edge-disjoint, the edge $(u,u')$ in $\emb$ must cross an edge of $P'_1$, an edge of $P_e$ and an edge of $P'_e$, as shown in Fig.~\ref{fig:c1c2:4}. This implies that $\emb$ cannot exist, since $(u,u')$ has at least three crossings; a contradiction to the assumption that $\mathcal{T}$ is \notbad. If $P'_1$ and $P'_e$ share edge $(v_1,v')$, then in $\emb$, edge $(u,u')$ must cross the edge of $P_e$ that is incident to $v_1$, say $(v_1, w')$,  and the common edge $(v_1,v')$ of paths $P'_1$ and $P'_e$ (otherwise, we can conclude as before that $\mathcal{T}$ is \bad). Furthermore, vertex $u'$ belongs to path $P_1$ 
as otherwise $(u,u')$ also crosses an edge of $P_1$ creating three crossings along $(u,u')$. This implies that in $\emb$,  $\langle v_1, u', v', w', u \rangle$ is a \facialKFive, and edge $(u',w')$ is a crossing edge of this \facialKFive; see Figs.~\ref{fig:c1c2:5} and~\ref{fig:c1c2:8}. 

Now, \triplet $T_{f_1}$ contains vertices $u$, $v_1$, $w_1$ and $u'$. Let $z$ be the last  vertex of $T_{f_1}$ which by the assumptions of the lemma is not a vertex of $T$, and it is adjacent to vertices $u$, $w_1$ and $v_1$. 

Since by the assumption of the lemma, $T_{f_1}$ is face-disjoint from $T_1$ and $T_e$, it follows that $z$ is in the interior of the region bounded by cycle $(w_2,P'_1,v_1,w_1)$ or the region bounded by cycle $(v_1,P'_e,v_2,u)$. In the first case; see Fig.~\ref{fig:c1c2:5}, edge $(u,z)$ can't be drawn since it should cross paths $P_1'$, $P_e'$ and $P_e$, having at least three crossings; a contradiction. Similarly, in the second case; see Fig.~\ref{fig:c1c2:8}, edge $(w_1,z)$ can't be drawn since it should cross paths $P_1$, $P_1'$ and $P_e'$, having at least three crossings; a contradiction. The lemma follows.
\end{proof}

\doubleKFive*

\begin{figure}[t]


\centering
\begin{subfigure}[b]{0.3\textwidth}
    \includegraphics[width=\textwidth,page=1]{figures/doubleKfive.pdf}
    \caption{}
    \label{fig:doubleKfive_1}
    \end{subfigure}
     \myhfil
     \begin{subfigure}[b]{0.3\textwidth}
    \includegraphics[width=\textwidth,page=3]{figures/doubleKfive.pdf}
    \caption{}
    \label{fig:doubleKfive_3}
    \end{subfigure}
    \caption{Illustration for the proof of Lemma~\ref{lem:noDoubleK5}.}
    \label{fig:doubleKfive}
    
\end{figure}
\begin{proof}
From the assumptions stated in the lemma, edges $e$ and $e_1$ belong to both \triplets $T$ and $T_1$. This implies that $T$ and $T_1$ share vertices $v_1$, $w_2$ and $v_2$. If $T$ and $T_1$ share another vertex, then $T_2$ and $T_1$ would have at least three common vertices, contradicting the assumptions of the lemma.

Hence, $T$ and $T_1$ form the configuration shown in Fig.~\ref{fig:doubleKfive_1} which is unique up to a renaming of vertices $w_1$ and $u$ and a renaming of vertices $w_1'$ and $u'$. Triplet $T_2$ contains vertices $w_1$, $w_2$, $u$, $v_2$ and a vertex $z$ which by the assumptions of the lemma does not belong to $T$ or $T_1$.

Assume that $\mathcal{T}$ is not bad, i.e., there exists a \compliant \textemb $\emb$ of $G$ that extends $\partialemb$ such that $T$ is a \facialKFive in $\emb$. Assume w.l.o.g. that $z$ lies in the exterior of the cycle $Z=(v_1,u,v_2,u')$ of $G_p$ as shown in Fig.~\ref{fig:doubleKfive_3}. Since $T_2$ is a \triplet, edges $(z,w_1)$  and $(z,w_2)$ belong to $G$. Furthermore, $(z,w_1)$ and $(z,w_2)$ are \crossing edges, as $w_1$ and $w_2$ lie in the interior of $Z$ and $z$ in its exterior; see Fig.~\ref{fig:doubleKfive_3}. 

In $\emb$, edge $(z,w_1)$ must cross both edges $(v_1,u')$ and $(v_1,w'_1)$. This implies that $C=\langle z, v_1, w_1, w_1', u\rangle$ is a \facialKFive in $\emb$; see Fig.~\ref{fig:doubleKfive_3}.
Similarly, edge $(z,w_2)$ must cross edges $(v_2,u')$ and $(v_2,w_1')$ (the latter crossing is unavoidable in the presence of the \facialKFives formed by $T$ and $C$), hence $C'=\langle z, u', w_1', w_2, v_2 \rangle$ is another \facialKFive of $\emb$. Then $C$ and $C'$ form two \facialKFives in $\emb$ and share vertices $z$, $u'$ and $w'_1$, contradicting  Property~\ref{obs:twoK5sShareOnly2Vertices}.  
\end{proof}

\noTwoTriplets*

\begin{proof}
Since $G$ is optimal $2$-planar, then at least one of $\mathcal{T}$ or $\mathcal{T}'=\{T_1,T_2\}$ for some \triplets $T_1$ and $T_2$ is \notbad. Hence if $\mathcal{T}$ is \bad, the lemma holds. It remains to argue that if $\mathcal{T}$ is \notbad, then every set $\mathcal{T}'=\{T_1,T_2\}$ is \bad.
Assume that this is not the case, and there exist \triplets $T_1$ and $T_2$ such that $\mathcal{T}'=\{T_1,T_2\}$ is \notbad. 

By Lemmas~\ref{lem:oneTripletFaceDisjoint} and~\ref{lem:notBothTripletsFaceDisjoint}, exactly one of $T_1$ and $T_2$ is face-disjoint with $T$. Assume, w.l.o.g. that $T_1$ is face-disjoint with $T$, i.e. Case~C.\ref{c:1} applies for $T_1$. Then either $T_2$ shares only face $f_1$ with $T$ or it shares only faces $f$ and $f_2$ with $T$, i.e either Case~C.\ref{c:3} or Case~C.\ref{c:2} applies for $T_2$, respectively. Thus, $T_1\neq T_2$ also holds.

In the first case, by Lemma~\ref{lem:c1c3_conditions} it follows that vertex $u$ of $T$ has degree $9$ and that there exists a vertex $x\in S=N(u)\setminus V(T)$ such that $|N(x)\cap S|\leq 1$. 
On the other hand, applying  Lemma~\ref{lem:c1c3}  for vertex $u\in V(T)$, 
we conclude that $\mathcal{T}$ is \bad, contradicting our assumption that both $\mathcal{T}$ and  $\mathcal{T}'$ are \notbad.
Hence, we can assume that $T_2$ shares only faces $f$ and $f_2$ with $T$, i.e. Case~C.\ref{c:2} applies for $T_2$. Consider edge $e$ of $T$. Since $\mathcal{T}'$ is \notbad, $\partialemb[']$ can be extended to at least one \compliant \textemb  of $G$. Let $\emb'$ be such a rotation scheme and let \triplet $T_e$ be the \triplet that edge $e$ is assigned to w.r.t. $\emb'$. We observe first that $T_e\neq T_2$. Indeed, since $T_2$ shares vertices $u$, $w_1$, $w_2$ and $v_2$ with $T$, $T_e=T_2$ would imply that $T_2=T$; a contradiction. 
We distinguish two cases, depending on whether  $T_e \neq T_1$ or $T_e= T_1$.

\myparagraph{Case 1: $T_e \neq T_1$.}
Since $T_e\neq T_1$  and $T_e\neq T_2$ it follows that $e$ belongs neither to $T_1$ nor to $T_2$. Furthermore, the existence of $\emb'$ implies that  $\mathcal{T}''=\mathcal{T}'\cup \{T_e\}$ is \notbad. Applying Lemma~\ref{lem:c1c2TeContainsVertex} for $\mathcal{T}$, we conclude that for every vertex $v\in V(T)$ with degree nine, there exists a vertex $x\in S=N(v)\setminus V(T)$ such that $|N(x)\cap S|\geq 4$.
In particular, this property holds for vertices $w_1$ and $u$ if any of them has degree nine. Since $\mathcal{T}''$ is \notbad, we conclude from Lemma~\ref{lem:c1c2TeContainsVertex_conditions} that $T_1$ does not contain $w_1$ while $T_e$ does not contain $u$.
Hence,  by Lemma~\ref{lem:c1c2TeNoVertex_conditions} for  $\mathcal{T}''$, there exists a \triplet $T_{f_1}$ that contains face $f_1$ of $T$, shares only vertices of $f_1$ with $T$, and is face-disjoint from all \triplets of $\mathcal{T}''$.
Then, Lemma~\ref{lem:c1c2TeNoVertex} implies that $\mathcal{T}$ is \bad; a contradiction.

\myparagraph{Case 2: $T_e = T_1$.}
Both \triplets $T$ and $T_1$ contain edges $e_1$ and $e$, hence they contain their endpoints, i.e. $T$ and $T_1$ share at least three vertices. Since $\mathcal{T}'$ is \notbad, and  since \triplets $T_1$ and $T_2$ both contain vertices $v_2$ and $w_2$,   \triplets $T_1$ and $T_2$ have exactly two common vertices by Property~\ref{obs:twoK5sShareOnly2Vertices}. Now the conditions of Lemma~\ref{lem:noDoubleK5} are fulfilled and we conclude that $\mathcal{T}$ is \bad; a contradiction. This completes the proof of the lemma.
\end{proof}

\lemmasForTriplet*
\begin{proof}
Clearly, if we can apply at least one of these lemmas for $\mathcal{T}$ then $\mathcal{T}$ is \bad. So, assume that $\mathcal{T}$ is bad, and that for each of these lemmas not all conditions hold for $\mathcal{T}$. Since $G$ is optimal $2$-planar, by Lemma~\ref{lem:no2triplets},  there exists a set $\mathcal{T}'=\{T_1,T_2\}$ that is \notbad. 

By Lemmas~\ref{lem:oneTripletFaceDisjoint} and~\ref{lem:notBothTripletsFaceDisjoint}, exactly one of $T_1$ and $T_2$ is face-disjoint with $T$. Assume, w.l.o.g. that $T_1$ is face-disjoint with $T$. Then either $T_2$ shares only face $f_1$ with $T$ or it shares only faces $f$ and $f_2$ with $T$.

In the first case, since $\mathcal{T}$ does not fulfill all conditions of Lemma~\ref{lem:c1c3}, we conclude that, in particular for vertex $u\in V(T)$,  $d(u)>9$ holds or every vertex $x\in S=N(u)\setminus V(T)$ has $|N(x)\cap S|\geq 2$. Then, the conditions of Lemma~\ref{lem:c1c3_conditions} are satisfied, and  $\mathcal{T}'$  is \bad; a contradiction.

Hence, we can assume that $T_2$ shares only faces $f$ and $f_2$ with $T$. Consider edge $e$ of $T$. Since $\mathcal{T}'$ is \notbad, $\partialemb[']$ can be extended to at least one \compliant \textemb  of $G$. Let $\emb'$ be such an embedding and let \triplet $T_e$ be the \triplet that  $e$ is assigned to w.r.t. $\emb'$. 

Since $\mathcal{T}$ does not satisfy all conditions of Lemma~\ref{lem:noDoubleK5}, either $T_e\neq T_1$ or $T_2$ does not have exactly two common vertices with $T_1$. If $T_2$ has at least three common vertices with $T_1$, then by Property~\ref{obs:twoK5sShareOnly2Vertices}, set $\mathcal{T}'$ is \bad.
On the other hand, if $T_2$ has exactly one common vertex with $T_1$, then $T_e\neq T_1$, as otherwise, $T_2$ and $T_1$ would share vertices $w_2$ and $v_2$. Hence, it suffices to consider the case where $T_e\neq T_1$.
Note that $T_e\neq T_2$ also holds. Indeed, since $T_2$ shares vertices $u$, $w_1$, $w_2$ and $v_2$ with $T$, $T_e=T_2$ would imply that $T_2=T$ also holds; a contradiction. 

Then $\mathcal{T}''=\{T_1,T_2,T_e\}$ is \notbad, which implies that $T_e$ is face-disjoint from $T_1$ and $T_2$.
Since $\mathcal{T}$ does not satisfy all conditions of Lemma~\ref{lem:c1c2TeContainsVertex}, in particular for vertex $w_1\in V(T)$, $d(w_1)>9$ holds or there exists a vertex $x\in S=N(w_1)\setminus V(T)$ such that $|N(x)\cap S|\geq 4$. If $T_1$ contains vertex $w_1$, then Lemma~\ref{lem:c1c2TeContainsVertex_conditions} implies that $\mathcal{T}''$ is \bad; a contradiction. Hence, $T_1$ does not contain vertex $w_1$. Using the same argument for vertex $u\in V(T)$, we conclude that $T_e$ does not contain vertex $u$. 
Already all conditions of Lemma~\ref{lem:c1c2TeNoVertex} are satisfied except for the constraints related to $T_{f_1}$.
Since by assumption not all conditions of Lemma~\ref{lem:c1c2TeNoVertex} are met, we conclude that every \triplet $T_{f_1}$ that contains $f_1$
\begin{inparaenum}[(i)]
\item shares at least four vertices with $T$, or, \item
is not face-disjoint with at least one \triplet of $\mathcal{T}''$.\end{inparaenum} Then all conditions of  Lemma~\ref{lem:c1c2TeContainsVertex_conditions} are satisfied and $\mathcal{T}''$ is \bad.
The lemma follows.
\end{proof}

\section{Ommited Proofs and Details from Section~\ref{sec:algo}}
\label{app:algo}

\begin{theorem}\label{lem:algo_correct}
Algorithm~\ref{algo:main} is correct.
\end{theorem}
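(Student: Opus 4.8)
The plan is to establish correctness in two directions. For \emph{soundness} I would show that whenever Algorithm~\ref{algo:main} reaches its final productive line and outputs a rotation scheme, that scheme is a genuine \textemb of $G$ (so in particular $G$ is optimal $2$-planar); for \emph{completeness} I would show that if $G$ is optimal $2$-planar then none of the \texttt{false}-returns is triggered and the algorithm outputs such a scheme. The guiding invariant is that, at every stage, the edge classification and the partial scheme $\emb_p$ maintained by the algorithm are consistent with at least one \textemb whenever one exists; each \texttt{false}-return is then justified by showing that the tested condition can fail only for graphs that are not optimal $2$-planar.

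First I would treat the construction of $G_p$ and $\emb_p$ (lines 1--10). By Corollary~\ref{col:identificationProperty} every edge marked \crossing is crossed in all \textembs, while every crossing-free edge of any \textemb is potentially planar; hence the crossing-free pentangulation of any \textemb is contained in $G_p$. For the reclassification in line 2, Lemma~\ref{lem:crossBat} shows that two potentially planar edges crossing in some \textemb must lie in a \crossingConfiguration instance, Lemma~\ref{lem:crossBatEmb} shows each such instance admits exactly two rotation schemes, and Lemma~\ref{lem:crossingConfigurationChooseEmbed} lets the algorithm commit to either one without destroying realizability. Consequently, if $G$ is optimal $2$-planar, the reclassified $G_p$ is planar and $3$-connected, so the rejections in lines 4--5 and 7--9 discard only non-optimal graphs: a $3$-connected planar $G_p$ has a unique rotation scheme $\emb_p$, and in every \compliant \textemb each face of $\emb_p$ bounds part of a \facialKFive, forcing all face lengths into $\{3,4,5\}$ with complete boundaries. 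The triangulation of line 10 is a purely combinatorial device splitting each such face into a fan of triangles; it preserves the correspondence between \compliant \textembs and decompositions of the faces and \crossing edges of $\emb_p$ into \triplets.

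The core of the argument concerns lines 11--16 and rests on Section~\ref{sec:triplets}: in any \compliant \textemb the faces and \crossing edges of $\emb_p$ partition into \triplets in bijection with the \facialKFives. The label assigned to a \triplet $T$ in lines 12--13 decides exactly whether $\mathcal{T}=\{T\}$ is \notbad. Lemma~\ref{lem:lemmasForTriplet} makes this label $\Oh(1)$-checkable, since $\{T\}$ is \bad iff the local degree and neighbourhood conditions of one of Lemmas~\ref{lem:c1c3}, \ref{lem:c1c2TeContainsVertex}, \ref{lem:c1c2TeNoVertex} or~\ref{lem:noDoubleK5} hold; and Lemma~\ref{lem:no2triplets} guarantees that a \notbad $\{T\}$ forces $T$ to be a \facialKFive in \emph{every} \compliant \textemb, while a \bad $\{T\}$ forbids it in \emph{every} one. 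Hence, when $G$ is optimal $2$-planar, $\mathcal{T}^*$ coincides with the unique set of \facialKFives common to all \compliant \textembs. For soundness I would observe that each \triplet is by definition a $5$-clique of $G$, so making it a \facialKFive is always locally realizable; if the coverage test of line 15 succeeds, the members of $\mathcal{T}^*$ are face-disjoint and jointly exhaust all faces and all \crossing edges, so inserting the five crossing edges of each $T\in\mathcal{T}^*$ inside its three faces yields a consistent $2$-planar rotation scheme that extends $\emb_p$, realizes all $5n-10$ edges of $G$, and is therefore an \textemb.

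For completeness I would verify that an optimal $2$-planar $G$ passes every test: the $G_p$-checks hold by the structural facts above, a \compliant \textemb $\emb$ exists by Lemma~\ref{lem:crossingConfigurationChooseEmbed}, and its \facialKFives induce precisely the \triplets labelled as \facialKFives, so $\mathcal{T}^*$ covers every face and every \crossing edge exactly once and line 16 fires. The main obstacle I anticipate is the two-sided equivalence between the \emph{local}, constant-time triplet labels and the \emph{global} property of being a \facialKFive in a \compliant \textemb; this is supplied precisely by the chain of Lemmas~\ref{lem:oneTripletFaceDisjoint}--\ref{lem:lemmasForTriplet}, so the remaining work is to assemble these results and to check that the coverage test in line 15 distinguishes a genuine partition---which exists exactly when $G$ is optimal $2$-planar---from an inconsistent labeling, which can arise only when $G$ is not optimal $2$-planar.
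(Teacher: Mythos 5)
Your proposal is correct and follows essentially the same route as the paper's proof: a step-by-step verification that each \texttt{false}-return is justified (via Corollary~\ref{col:identificationProperty} and Lemmas~\ref{lem:crossBat}, \ref{lem:crossBatEmb}, \ref{lem:crossBatTrue}/\ref{lem:crossingConfigurationChooseEmbed} for lines 1--10, and Lemmas~\ref{lem:no2triplets} and~\ref{lem:lemmasForTriplet} for the \triplet labeling), combined with the observation that a coverage-exact, face-disjoint set $\mathcal{T}^*$ lets the final insertion step assemble a valid \textemb. One cosmetic slip: each \triplet contributes only \emph{three} \crossing edges ($e_1$, $e_2$, $e$) to be inserted, not five---the other two diagonals of the resulting \facialKFive are the triangulation chords already present in $G_p$, which become crossed by the inserted edges.
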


\begin{proof}
We assume that the input graph $G$ has exactly $5|V|-10$ edges and is $9$-degenerate. We proceed with the four steps of our algorithm as described in Section~\ref{subsec:algo} and formalized in Algorithm~\ref{algo:main}.
In the first step, we use Corollary~\ref{col:identificationProperty} to classify each edge of $G$ as \crossing or potentially planar. Recall that an edge is potentially planar if and only if its two endpoints have at least six common neighbors.

From Lemma~\ref{lem:crossBat} we get that if two potentially planar edges of $G$ cross each other in some \textemb, then they belong to an instance of \crossingConfiguration. Hence, in the second step, we identify all such instances and by Lemma~\ref{lem:crossBatEmb} we can fix the partial rotation scheme of the subgraph induced by the vertices of each \crossingConfiguration instance. By doing this, we change the classification of two potentially planar edges of each instance to \crossing edges.
Note that by Lemma~\ref{lem:crossBatTrue} for every \crossingConfiguration instance, there exist exactly two pairs of potentially planar edges that cross, hence the 		reclassification process takes place only when potentially planar edges cross in some \textemb.

As a result, the subgraph $G_p$ defined by the set of potentially planar edges is planar. Furthermore, if $G$ is optimal 2-planar, then  $G_p$ must also be 3-connected and spanning $G$, as it contains as subgraph the 3-connected pentangulation of every \compliant \textemb of $G$.
Hence if $G_p$ is not planar or if $G_p$ is not 3-connected we can conclude that $G$ is not optimal-2-planar. If $G_p$ is planar and 3-connected, we compute its unique planar rotation scheme $\emb_p$.
Note that by Lemma~\ref{lem:crossingConfigurationChooseEmbed}, if $G$ is optimal 2-planar, then there exists a \compliant \textemb $\emb$ of $G$. As mentioned at the beginning of Section~\ref{sec:triplets}, all faces of $\emb_p$ must have length $3$, $4$ or $5$. In particular non-triangular faces belong to \facialKFives of $\emb$ and hence they induce complete subgraphs in $G$. If these properties are violated, we conclude that $\emb$ does not exist and $G$ is not optimal 2-planar. 
In order to proceed with the next step, we first need  to augment $G_p$ to maximal planar. To achieve this, we arbitrarily triangulate every non-triangular face of $\emb_p$ (this process can not create parallel edges or self-loops in $G_p$ as it is 3-connected). Also, since non-triangular faces induce complete subgraphs in $G$, the added edges belong to $G$ and are classified as \crossing. We change their classification to potentially planar so that they belong to $G_p$. 

The next step is to calculate the \triplets of $\emb_p$, which can be accomplished by computing the paths of length two of the dual $G^*_p$ of $G_p$ and check whether the vertices of the three faces induce a $5$-clique in $G$. 

Now, for each \triplet $T$ we can use Lemma~\ref{lem:lemmasForTriplet} to decide whether $\mathcal{T}=\{T\}$ is \bad or not. If $\mathcal{T}$ is \notbad, we mark \triplet $T$ as a \facialKFive. By definition, if $T$ is a \facialKFive of any \compliant \textemb of $G$, then the three \crossing edges of $T$ are assigned to $T$ and to no other \triplet by  Lemma~\ref{lem:no2triplets}.
Hence if any \crossing edge of $G$ is not assigned to exactly one \triplet, then we can conclude that $G$ is not optimal-2-planar. Similarly, a face of $\emb_p$ is assigned to exactly one \triplet. If this property does not hold, then $G$ is not optimal 2-planar. 

Having reached this point, $G$ is an optimal 2-planar graph and it remains to compute its \compliant \textemb by combining the embedding $\emb_p$ of $G_p$ and the \triplets marked as \facialKFives. To do this, for each \triplet $T=\langle f_1,f,f_2\rangle$ as depicted in Fig.~\ref{fig:triplet_base}, for every vertex $v$ of $T$ we insert the \crossing edges of $T$ that are incident to $v$ between the corresponding edges of $G_p$ in the cyclic order of edges around $v$. Also, we define the order of crossings along each edge of $T$ as indicated by Fig.~\ref{fig:triplet_base}.

The correctness of Algorithm~\ref{algo:main} follows from the above analysis.
\end{proof}

\begin{details}
\label{app:prp:basic_implementation}In our algorithm, we assume that the given graph has $5n-10$ edges and is $9$-degenerate. Checking whether a graph has these properties can be done in linear time by using standard algorithms~\cite{DBLP:journals/jacm/MatulaB83}.

For checking whether $G_p$ is planar and 3-connected, and for computing the planar rotation scheme $\emb_p$ of $G_p$ (the cyclic order of the edges around each vertex, and the faces corresponding to the embedding), as well as the dual graph $G^*_p$ we use standard algorithms that run in linear time and data structures that consume linear space~\cite{DBLP:journals/siamcomp/HopcroftT73,DBLP:journals/jacm/HopcroftT74}.

In all steps, it is required to check whether two vertices are adjacent or to find the edge between two given vertices. 
Since $G$ is $9$-degenerate, for each vertex $v_i$ in the degeneracy sequence $v_1,\ldots,v_n$, we separately store its edges in subgraphs $G_i=G[v_1,\ldots,v_i]$ and $G_i'=G[v_i,\ldots,v_n]$. For $j > i$, edge $(v_i,v_j)$ belongs to $G_j$. Hence, it suffices to only check the stored edges of $v_j$ in $G_j$, which are at most nine. Hence, 
the cost of these operations is $\Oh(1)$.
\end{details}

\classifyEdgesTime*

\begin{proof}
First, we compute a 9-degenerate sequence $L=v_1,\ldots,v_n$ of the vertices of $G$ in linear time~\cite{DBLP:journals/jacm/MatulaB83}.
 In order to classify the edges, 
 for each edge $(u,v)$ we count the number of common neighbors of $u$ and $v$ as follows.
For $i=n,\ldots,1$ we process vertex $v_i$. 
Recall that since $G$ is 9-degenerate, vertex $v_i$ has at most nine neighbors in $G_i$.
For each of the at most ${9\choose 2}=36$ pairs of neighbors $u,w$ of $v_i$ in $G_i$, we check whether they are connected by an edge in $\Oh(1)$ time. If so, we update the number of common neighbors for the edges $(u,v_i)$, $(v_i,w)$ and $(u,w)$. 
After this process, we check whether the number of common neighbors computed for each edge $(u,v)$ is at least $6$ or not. If this is the case 
we mark $(u,v)$ as potentially planar, otherwise as \crossing.
Since for every vertex, we process at most $36$ pairs of neighbors in $\Oh(1)$ time each, the classification of all edges of $G$ takes $\Oh(n)$  time in total.
\end{proof}

\crossingConfigurationRecognize*
\begin{proof}
We iterate over all edges of $G$ trying to identify first the base edge of each \crossingConfiguration instance. Recall that each  base edge is potentially planar and both its endpoints have degree $9$. So, in our iteration we explicitly focus on such edges. Let $(u,u')$ be such an edge. If $(u,u')$ is the base edge of a \crossingConfiguration instance, then $|N(u,u')|=8$.  Since $u$ and $u'$ are degree-9 vertices, we can check in $\Oh(1)$ time if $N(u,u')$ has eight elements. If this is not true, then $(u,u')$ is not the base edge of a \crossingConfiguration instance. Otherwise, let $H$ be the subgraph induced by $N(u,u')\cup \{u,u'\}$ and let $N(u,u')=\{u_1,\ldots, u_8\}$. In the following, we try to find an isomorphism between $u_1,\ldots, u_8$ and the vertices of \crossingConfiguration as named in Fig.~\ref{fig:crossingConfigurationStart}.

Clearly, if $H$ is an instance of \crossingConfiguration, then $\{u_1,\ldots, u_8\}$ must have the same degree-sequence in decreasing order as \crossingConfiguration. If this is not true, then $H$ is not an instance of \crossingConfiguration.  

Since vertices $v$ and $w$ of \crossingConfiguration are the only vertices with $4$ or $5$ neighbors in \crossingConfiguration (depending on whether they are adjacent to $v'$ and $w'$, respectively), we can find the corresponding two vertices of $H$; assume w.l.o.g. that $\{u_1,u_2\}=\{v,w\}$. From the set of the remaining vertices $S=\{x,x',y,y',v',w'\}$ of \crossingConfiguration, only $v'$ and $w'$ have five neighbors in $S$. W.l.o.g., let $u_3=v'$ and $u_4=w'$.

Now $x$ and $x'$ are connected to $v'$ with potentially planar edges, while $y$ and $y'$ are connected to $v'$ with \crossing edges. This allows to distinguish $x$ and $x'$ from $y$ and $y'$. So, let w.l.o.g. $\{u_5,u_6\}=\{x,x'\}$ and $\{u_7,u_8\}=\{y,y'\}$. Also, $x$ and $x'$ are connected to $v$ and not connected to $w$. Hence, we can identify $v$ and $w$; w.l.o.g. let $u_1=v$ and $u_2=w$.

It remains to identify $x$, $x'$, $y$ and $y'$. Since the pairs $\{x,y\}$ and  $\{x',y'\}$ are symmetric in \crossingConfiguration, we arbitrarily choose $x=u_5$. Since $x$ is connected to $y$ but not to $y'$, this allows to identify $y$; w.l.o.g. let $y=u_7$. Then we can conclude that $x'=u_6$ and $y'=u_8$. After the mapping of the vertices of $H$ to the ones of \crossingConfiguration, it remains to check that $H$ is isomorphic to \crossingConfiguration, i.e., that it contains exactly the edges of \crossingConfiguration, and that the edges of $H$ have the same classification (\crossing or potentially planar) as the corresponding edges of the \crossingConfiguration.
 
If the above process fails at any step, then we can conclude that $H$ is not an instance of \crossingConfiguration. After identifying the two adjacent vertices $u$ and $u'$ with degree $9$, all steps  related to determining whether $(u,u')$ is a base edge of a \crossingConfiguration instance can be performed in $\Oh(1)$ 
time. 
\end{proof}

\begin{details}\label{app:planar_gp}
If $G$ is optimal $2$-planar, $G_p$ is 3-connected, planar and spans $G$. These properties can be checked in linear time  with algorithms that also compute the planar rotation scheme $\emb_p$ of $G_p$. Note that we ensured that $G_p$ is $3$-connected, hence, $\emb_p$ is uniquely defined (see line 6 of Algorithm~\ref{algo:main}). After computing $\emb_p$, we can iterate over all $\Oh(n)$ faces of $\emb_p$ and check for each face in $\Oh(1)$ time, if its length is at most five and, if so, check in $\Oh(1)$ time, if it induces a complete subgraph in $G$ (see lines 7--8). If any face violates these constraints, we reject the instance in line 9. 

Next, we triangulate all faces of length  four or five in line 10 of Algorithm~\ref{algo:main}. In fact, adding edges in $G_p$ corresponds to 		reclassifying some \crossing edges of $G$ as potentially planar. This process takes time $\Oh(n)$. Now that $G_p$ is maximal planar, we recompute its planar rotation scheme $\emb_p$ and its dual $G^*_p$ in linear time.
\end{details}

\begin{details}\label{app:tripletIdentification}
Recall that a \triplet $T=\langle f_1,f,f_2 \rangle$ consists of three faces $f_1$,$f$ and $f_2$ of $\emb_p$ which form the  path $(f_1,f,f_2)$ of length $2$ in the dual $G_p^*$. All vertices of $G_p^*$ have degree 3, hence identifying all paths of length two takes linear time in the order of $G_p^*$, i.e. $\Oh(|V(G_p^*)|)=\Oh(n)$ time. For each path we check in 
$\Oh(1)$ time, if the vertices of the three faces induce a $5$-clique in $G$. If this is the case, the path is a \triplet of $\emb_p$. During this process, for each face and each \crossing edge of $G$, we store the set of \triplets that contain it and count their cardinalities. Note that this information can be stored in linear space, since each \triplet contains three faces and three \crossing edges of $G$.
\end{details}

\tripletDecisionAlgorithm*

\begin{proof}
By Lemma~\ref{lem:lemmasForTriplet} it suffices to check whether the conditions of one of Lemmas~\ref{lem:c1c3}, \ref{lem:c1c2TeContainsVertex}, \ref{lem:c1c2TeNoVertex}
and~\ref{lem:noDoubleK5} are met.
For the first two lemmas, namely Lemmas~\ref{lem:c1c3} and~\ref{lem:c1c2TeContainsVertex}, for every vertex $v\in V(T)$ with $d(v)=9$, and every neighbor $x\notin V(T)$ of $v$, we compute the set $S=N(v,x)\setminus V(T)$ and its cardinality. Since $|N(v)|=9$ and $S\subseteq N(v)$, this computation requires to check a constant number of times for the existence of some edge. Hence, it takes $\Oh(1)$ time to check whether the conditions of Lemmas~\ref{lem:c1c3} and \ref{lem:c1c2TeContainsVertex} hold for $\mathcal{T}$.

The next two lemmas, namely Lemmas~\ref{lem:c1c2TeNoVertex} and~\ref{lem:noDoubleK5}, assume that $T_1$ is face-disjoint from $T$ and $T_2$ contains faces $f$ and $f_2$ of $T$. The case where $T_2$ is face-disjoint from $T$ and $T_1$ contains faces $f$ and $f_1$ of $T$ can be handled analogously.
In a brute-force approach for Lemma~\ref{lem:c1c2TeNoVertex}, one would have to consider all \triplets $T_{f_1}$ that contain face $f_1$ and  all possible sets $\mathcal{T}'=\{T_1,T_2,T_e\}$, where edges $e_1$, $e_2$ and $e$ belong to $T_1$, $T_2$ and $T_e$ respectively. Since $T_{f_1}$ contains face $f_1$ of $T$, there exist at most five ways to extend $f_1$ to a path of three faces in the dual  $G^*_p$ of $G_p$. Checking whether these paths of faces form \triplets of $\emb_p$ and whether they share only the vertices of $f_1$ with $T$ can be performed in $\Oh(1)$ time. Similarly, since $T_2$ shares faces $f$ and $f_2$ with $T$, there are at most three choices for its third face (face $f_1$ is excluded). For each choice we can identify in $\Oh(1)$ time if this is a \triplet of $\emb_p$.  Clearly if we cannot identify $T_{f_1}$ or $T_2$ then the conditions of Lemma~\ref{lem:c1c2TeNoVertex} are not met. 

Now, in this brute-force approach, for \triplets $T_1$ and $T_e$, there might exist a linear number of \triplets that contain edge $e_1$ (or $e$), thus we would have to consider a quadratic number of sets $\mathcal{T}'$. Ideally we would like to have a bounded number of \triplets that contain edge $e_1$ or $e$. In fact we can prove that if this is not the case, then we can always find \triplets $T_1$ and $T_e$ that satisfy the conditions of  Lemma~\ref{lem:c1c2TeNoVertex}, i.e. it applies for $\mathcal{T}$. In order to achieve this, we need the following:

\setcounter{section}{5}
\setcounter{lemma}{3}
\renewcommand\thesection{\arabic{section}}

\begin{proposition}\label{prp:findTriplet}
Let $T$ be a \triplet and consider a set $\mathcal{S}$ of \triplets with at least 20 elements. Then, at least one \triplet of the set is face-disjoint with $T$.
\end{proposition}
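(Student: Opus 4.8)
The plan is to translate the statement into a counting problem in the dual graph $G_p^*$ and settle it by inclusion--exclusion. Recall that at this stage $G_p$ is a $3$-connected triangulation, so $G_p^*$ is a simple, $3$-regular, $3$-connected planar graph in which every face of $G_p$ is a degree-$3$ vertex, and by definition every \triplet is a path of length $2$ in $G_p^*$. I would therefore reformulate the goal as the purely combinatorial claim that \emph{at most $19$ \triplets share a face with $T$}; the proposition then follows immediately, since among any $20$ \triplets at least one must avoid all three faces of $T$.

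First I would bound, for a single fixed face $f^\ast$, the number of length-$2$ dual paths through it. Since $f^\ast$ has exactly three dual neighbours, there are $\binom{3}{2}=3$ paths in which $f^\ast$ is the middle vertex and $3\cdot 2 = 6$ paths in which it is an endpoint (three choices for the middle neighbour, two for the far endpoint), so at most $9$ length-$2$ paths pass through $f^\ast$. As every \triplet is such a path, at most $9$ \triplets contain $f^\ast$.

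Next I would exploit that $T$ is a genuine \triplet, i.e.\ its five vertices $u,v_1,w_1,v_2,w_2$ are pairwise distinct because they form a $5$-clique. Writing $T=\langle f_1,f,f_2\rangle$ with $f_1=(u,w_1,v_1)$, $f=(u,w_2,w_1)$, $f_2=(u,v_2,w_2)$ as in Fig.~\ref{fig:triplet_base}, this distinctness forces $f_1$ and $f_2$ to share only vertex $u$, hence $f_1\not\sim f_2$ in $G_p^*$, while $f_1\sim f\sim f_2$. Let $A$, $B$, $C$ be the sets of length-$2$ dual paths through $f_1$, $f$, $f_2$ respectively; by the previous step $|A|,|B|,|C|\le 9$. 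Using $f_1\sim f$ I would exhibit four distinct paths in $A\cap B$ (the two with middle $f_1$ and the two with middle $f$), so $|A\cap B|\ge 4$, and symmetrically $|B\cap C|\ge 4$; the path of $T$ itself lies in $A\cap C$, so $|A\cap C|\ge 1$. Finally, any path in $A\cap B\cap C$ has only three vertices, namely $\{f_1,f,f_2\}$, and since $f_1\not\sim f_2$ the only such path is $f_1\!-\!f\!-\!f_2=T$; hence $|A\cap B\cap C|=1$.

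Plugging these into the exact identity gives
\[
|A\cup B\cup C| \;=\; |A|+|B|+|C|-|A\cap B|-|A\cap C|-|B\cap C|+|A\cap B\cap C| \;\le\; 27-4-1-4+1 \;=\; 19 .
\]
Every \triplet sharing a face with $T$ is one of the dual paths counted by $A\cup B\cup C$, so at most $19$ \triplets are not face-disjoint from $T$, and the claim follows. The main obstacle is getting the intersection bounds exactly right: one must check that the four claimed paths in $A\cap B$ are genuinely distinct (which needs $G_p^*$ simple, so the two other neighbours of $f_1$, and likewise of $f$, are distinct) and, decisively, that $A\cap B\cap C$ does not grow to three elements. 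The latter is precisely where the $5$-clique property of $T$ enters through $f_1\not\sim f_2$, and it is what makes the total land at exactly $19$ rather than something larger — so the threshold $20$ is tight and no slack is available.
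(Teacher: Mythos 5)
Your proof is correct and takes essentially the same approach as the paper: both arguments count length-$2$ paths in the $3$-regular dual $G_p^*$ passing through the three faces of $T$ and conclude that at most $19$ triplets can share a face with $T$. The paper enumerates these paths directly in disjoint cases ($5+4+9+2$, with $T$ counted twice), whereas you organize the identical count via inclusion--exclusion, making explicit (through $f_1\not\sim f_2$, derived from the $5$-clique property) the facts the paper's enumeration uses implicitly.
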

\begin{proof}
\begin{figure}[t]
    \centering
        \includegraphics[page=1]{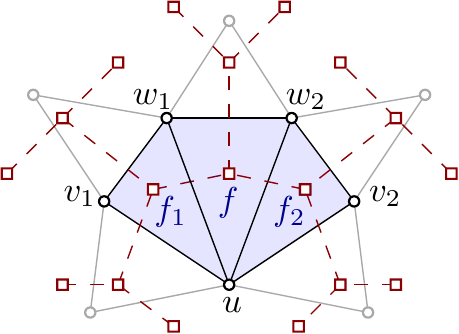}
    \caption{Illustration for the proof of Proposition~\ref{prp:findTriplet}.}
    \label{fig:findTriplets}
\end{figure}
We want to count the \triplets that contain at least one face of $T$. Each \triplet corresponds to a unique path of length $2$ in the dual $G_p^*$; see Fig.~\ref{fig:findTriplets}. 
There exist five \triplets that contain face $f_1$ but not face $f$, and four \triplets that contain both $f_1$ and $f$. Similarly there exist nine \triplets that contain face $f_2$ (four of them also include face $f$), and two \triplets that contain $f$ but not $f_1$ and $f_2$.
In total we count 20 \triplets, having counted $T$ twice: once for face $f_1$ and once for face $f_2$. Hence there exist at most  19 different \triplets that contain at least one face of $T$.
Then if the set $\mathcal{S}$ contains at least 20 elements, there is at least one \triplet that is face-disjoint with $T$.
\end{proof}

Along the same lines we can argue the following:

\begin{proposition}\label{prp:findTriplet_2}
Let $T$ be a \triplet, with vertices $u$ and $v$ that are consecutive along its boundary in $\emb_p$. Let $\mathcal{S}$ be a set of at least six \triplets that are face-disjoint from $T$ and contain vertex $u$. Then, at least one \triplet of the set does not contain vertex $v$.
\end{proposition}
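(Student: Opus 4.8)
The plan is to reduce the statement to a counting bound: I would show that at most five \triplets are simultaneously face-disjoint from $T$ and contain \emph{both} $u$ and $v$. The proposition then follows immediately, since if every \triplet of $\mathcal{S}$ contained $v$, then all $|\mathcal{S}|\ge 6$ of them would be face-disjoint from $T$ and contain both $u$ and $v$, contradicting the bound of five. The first step is to record the relevant geometry of the edge $(u,v)$. As $u$ and $v$ are consecutive on the boundary of $T$, the pair $(u,v)$ is a boundary edge of the pentagon bounding $T$ and hence an edge of $G_p$. Since $G_p$ is maximal planar, $(u,v)$ borders exactly two triangular faces of $\emb_p$: one face $g$ lying inside $T$ (one of the three faces of $T$) and one face $g'$ lying outside $T$.

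The crux is the following structural claim: every \triplet $T'$ that contains both $u$ and $v$ must contain $g$ or $g'$. Indeed, $T'$ is a $5$-clique through $u$ and $v$, and among the ten pairs of its vertices exactly three are \crossing edges (the pentagon diagonals $e_1$, $e_2$, $e$) while the remaining seven are edges of $G_p$, each lying on one of the three triangular faces of $T'$. Because $(u,v)\in E(G_p)$ it cannot coincide with any \crossing edge of $T'$, so it is one of those seven $G_p$-edges and therefore borders a face of $T'$; this face is incident to $(u,v)$ and hence equals $g$ or $g'$. I expect this to be the main obstacle, since it relies on cleanly separating the seven planar edges of a \triplet from its three \crossing diagonals. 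Now, if $T'$ is in addition face-disjoint from $T$, it cannot contain the face $g$ of $T$, so it must contain $g'$.

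It remains to count, exactly along the lines of Proposition~\ref{prp:findTriplet}, the \triplets that contain $g'$ but not $g$, working in the dual $G_p^*$, which is $3$-regular because $G_p$ is maximal planar. A \triplet corresponds to a length-$2$ path in $G_p^*$, so I would enumerate the length-$2$ paths through the degree-$3$ vertex $g'$, whose three neighbours I denote $g$, $b$, $c$. There are nine such paths: three with $g'$ as the middle, namely $(g,g',b)$, $(g,g',c)$, $(b,g',c)$, and two more for each of the three choices of middle neighbour when $g'$ is an endpoint. Of these nine, the four that also pass through the neighbour $g$, namely $(g,g',b)$, $(g,g',c)$ and the two of the form $(g',g,\cdot)$, must be discarded. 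This leaves at most $9-4=5$ length-$2$ paths through $g'$ avoiding $g$, and hence at most five \triplets that are face-disjoint from $T$ and contain both $u$ and $v$, completing the argument.
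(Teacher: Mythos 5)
Your proof is correct and follows essentially the same route as the paper's: reduce to showing that at most five \triplets are face-disjoint from $T$ and contain both $u$ and $v$, observe that any such \triplet must contain the face $g'$ opposite $T$ across the $G_p$-edge $(u,v)$, and bound the length-$2$ paths through $g'$ avoiding $g$ in the $3$-regular dual by five. The only difference is that you spell out explicitly (via the seven boundary edges versus three \crossing diagonals of a \triplet) why a \triplet containing both $u$ and $v$ must contain a face incident to $(u,v)$, a step the paper leaves implicit in its ``Note that'' remark.
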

\begin{proof}
Let $f$ and $f'$ be the two faces of $\emb_p$ that share edge $(u,v)$. Assume w.l.o.g. that $f$ belongs to $T$. Note that any \triplet that is face-disjoint with $T$ and contains both vertices $u$ and $v$, must contain face $f'$ (and not $f$). There exist at most five such \triplets. Hence, if $\mathcal{S}$ has at least six elements,  then it has  at least one \triplet that does not contain $f'$, i.e. does not contain vertex $v$ as claimed.
\end{proof}

Consider the set $\mathcal{S}_1$ of \triplets that contain edge $e_1$ and the set $\mathcal{S}_e$ of \triplets that contain edge $e$. Assume w.l.o.g. that $|\mathcal{S}_1|\leq |\mathcal{S}_e|$. 
If $|\mathcal{S}_e|<82$  then we can check for every $T_1\in \mathcal{S}_1$ and every $T_e\in \mathcal{S}_e$ whether the conditions of Lemma~\ref{lem:c1c2TeNoVertex} hold. Checking whether $T_1$ and $T_e$ are face-disjoint and whether they contain vertex $w_1$ and $u$, respectively, takes $\Oh(1)$ time.

So assume that $|\mathcal{S}_e|\geq 82$. 
If we can find a \triplet $T_1$ that is face-disjoint from all of $T$, $T_2$ and $T_{f_1}$, and does not contain vertex $w_1$, then we claim that Lemma~\ref{lem:c1c2TeNoVertex} applies and $\mathcal{T}$ is \bad.
Indeed, assuming a valid $T_1$, since  $|\mathcal{S}_e|\geq 82$ then Proposition~\ref{prp:findTriplet} assures that at least $63$ \triplets of $\mathcal{S}_e$ are face-disjoint from $T$. Among these \triplets, at least $44$ of them are also face-disjoint from $T_2$, at least $25$ of these $44$ are also face-disjoint from $T_1$, and at least six of them are additionally face-disjoint from $T_{f_1}$. Then by Proposition~\ref{prp:findTriplet_2} we can find a \triplet $T_e$ that, in addition to the previous properties, does not contain vertex $u$. Then all conditions of Lemma~\ref{lem:c1c2TeNoVertex} are met for $T_e$, which implies that $\mathcal{T}$ is \bad as claimed.
Hence we only need to identify \triplet $T_1$. 
If $|\mathcal{S}_1|  < 63$, then we check all elements of $\mathcal{S}_1$, whether they are face-disjoint from all of $T$, $T_2$ and $T_{f_1}$, and do not contain vertex $w_1$ in $\Oh(1)$ time. If we find \triplet $T_1$, then $\mathcal{T}$ is \bad, otherwise the conditions of Lemma~\ref{lem:c1c2TeNoVertex} are not met. 
If on the other hand,   $|\mathcal{S}_1|\geq 63$,
Propositions~\ref{prp:findTriplet} and \ref{prp:findTriplet_2} assure that there exists a \triplet $T_1$ with the aforementioned properties, hence the conditions of Lemma~\ref{lem:c1c2TeNoVertex} are satisfied.

Finally, for Lemma~\ref{lem:noDoubleK5}, first we need to find \triplets $T_1=T_e$ and $T_2$, if they exist. Note that if $T_1$ exists, then it contains the face opposite of $f_2$ along edge $(v_2,w_2)$. This face, considered as a vertex in the dual $G^*_p$ of $G_p$,  can be extended to at most five different paths of length $2$ of $G^*_p$. We can check for each path in $\Oh(1)$ time if its vertices create a $5$-clique, i.e. if this is a \triplet of $\emb_p$. On the other hand, since $T_2$ contains faces $f$ and $f_2$ of $T$, there are at most three choices for its third face (face $f_1$ is excluded). Again, we check if these options define \triplets of $\emb_p$ in $\Oh(1)$ time. 
Now, for every \triplet $T_1$ and $T_2$ we check in $\Oh(1)$ time if they are face-disjoint and whether they have exactly two common vertices. Since there exist at most five \triplets $T_1$ and at most three \triplets $T_2$, checking whether Lemma~\ref{lem:noDoubleK5} applies for $\mathcal{T}$ can be done in $\Oh(1)$ time.
\end{proof}
\setcounter{section}{4}
\renewcommand\thesection{\Alph{section}}

\begin{details}\label{app:finishEmb}
Here we augment the planar rotation scheme $\emb_p$ of $G_p$ to an \textemb of $G$ by inserting the \crossing edges of $G$ appropriately. Let \triplet $T=\langle f_1,f,f_2\rangle \in \mathcal{T}^*$ as depicted in Fig.~\ref{fig:triplet_base}. For every vertex $v$ of $T$ we insert the \crossing edges of $T$ that are incident to $v$ between the corresponding edges of $\emb_p$ in the cyclic order of edges around $v$. This can be done in $\Oh(1)$ time for the \crossing edges incident to a specific vertex $v$ in a \triplet as the \triplet consists of the faces that provide access to the correct position in the cyclic neighbor list  of $v$.
Since the \triplet also defines the crossings contained in the \facialKFive, we can define and store the order of crossings along each edge of $T$ in $\Oh(1)$ time.
\end{details}}
{}
\end{document}